\theoremstyle{plain}
\def\Box{\vcenter{\vbox{\hrule\hbox{\vrule
     \vbox to 8.8pt{\hbox to 10pt{}\vfill}\vrule}\hrule}}}
\newtheorem{thm}{Theorem}[section]
\newtheorem{lem}[thm]{Lemma}
\newtheorem{cor}[thm]{Corollary}
\newtheorem{prop}[thm]{Proposition}
\newtheorem{exm}{Example}
\newtheorem{remark}{Remark}
\begin{document}

\begin{center}

{\large  \bf Minimal Linear Codes Constructed from Functions}

\vskip 0.8cm
{\small Xia Wu$^1$, Wei Lu$^1$\footnote{Supported by NSFC (Nos. 11971102, 11801070, 11771007)

  MSC: 94B05, 94A62}}, Xiwang Cao$^2$\\

{\small $^1$School of Mathematics, Southeast University, Nanjing
210096, China}\\
{\small $^2$Department of Math, Nanjing University of Aeronautics and Astronautics, Nanjing 211100, China}\\
{\small E-mail:
 wuxia80@seu.edu.cn, luwei1010@seu.edu.cn, xwcao@nuaa.edu.cn}\\
{\small $^*$Corresponding author. (Email: luwei1010@seu.edu.cn)}
\vskip 0.8cm
\end{center}

{\bf Abstract:}
In this paper, we consider minimal linear codes by a general construction of linear codes from $q$-ary functions. First, we give a sufficient and necessary condition for
linear codes which are constructed by functions to be minimal. Second, as  applications, we present four constructions of minimal linear codes. Constructions on minimal linear codes  in this paper  generalize the results in some recent papers, and contain the results in these papers as special cases. In our four constructions, the choices of $f$ are much more flexible than theirs.

{\bf Index Terms:} Linear code, minimal code, q-ary function,  Maiorana-McFarland functions, secret sharing.

\section{\bf Introduction}
 Let $q$ be a prime power and $\mathbb{F}_q$ the finite field with $q$ elements. Let $n$ be a positive integer and $\mathbb{F}_q^n$  the vector space with dimension $n$ over $\mathbb{F}_q$. In this paper, all vector spaces are over $\mathbb{F}_q$ and  all vectors are row vectors.  For a vector $\mathbf{v}=(v_1, \dots, v_n)\in\mathbb{F}_q^n$, let Suppt$(\mathbf{v})$ $:= \{1 \leq i\leq n : v_i\neq 0\}$ be the support of $\mathbf{v}$. For any two vectors $\mathbf{u}, \mathbf{v}\in \mathbb{F}_q^n$, if $\rm{Suppt}(\mathbf{u})\subseteq \rm{Suppt}(\mathbf{v})$, we say that $\mathbf{v}$ covers $\mathbf{u}$ (or $\mathbf{u}$ is covered by $\mathbf{v}$) and write $\mathbf{u}\preceq\mathbf{v}$. Clearly, $a\mathbf{v}\preceq \mathbf{v}$ for all $a\in \mathbb{F}_q$.

 An $[n,k]_q$ linear code $\mathcal{C}$ over $\mathbb{F}_q$ is a $k$-dimensional subspace of $\mathbb{F}_q^n$. Vectors in $\mathcal{C}$ are called codewords. A codeword $\mathbf{c}$ in a linear code $\mathcal{C}$ is called \emph{minimal} if $\mathbf{c}$ covers only the codewords $a\mathbf{c}$ for all $a\in \mathbb{F}_q$, but no other codewords in $\mathcal{C}$. That is to say, if a codeword $\mathbf{c}$  is minimal in  $\mathcal{C}$, then for any codeword $\mathbf{b}$ in $\mathcal{C}$, $\mathbf{b}\preceq \mathbf{c}$ implies that $\mathbf{b}=a\mathbf{c}$ for some $a\in \mathbb{F}_q$.
  For an arbitrary linear code $\mathcal{C}$, it is  hard  to determine the set of its minimal codewords, \cite{BMT1978, BN1990}.


%
%

If every codeword in  $\mathcal{C}$ is minimal, then $\mathcal{C}$ is said to be a \emph{minimal linear code}. Minimal linear codes have interesting applications in secret sharing \cite{CDY2005, CCP2014, DY2003, M1995, YD2006} and secure two-party computation \cite{ABCH1995, CMP2013}, and could be decoded with a minimum distance decoding method \cite{AB1998}. Searching for minimal linear codes has been an interesting research topic in coding theory and cryptography.

The \emph{Hamming weight} of a vector $\mathbf{v}$ is wt$(\mathbf{v}):=\#\rm{Suppt}(\mathbf{v})$. In \cite{AB1998}, Ashikhmin and Barg gave a sufficient condition  on the minimum and maximum nonzero Hamming weights for a linear code to be minimal:
\begin{lem}\label{Ashikhmin-Barg}{$($\rm {Ashikhmin-Barg} \cite{AB1998})}
A linear code $\mathcal{C}$ over $\mathbb{F}_q$ is minimal if
$$\frac{w_{\rm min}}{w_{\rm max}}>\frac{q-1}{q},$$
where $w_{\rm min}$ and $w_{\rm max}$ denote the minimum and maximum nonzero Hamming weights in the code $\mathcal{C}$, respectively.
\end{lem}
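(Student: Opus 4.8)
The plan is to argue by contradiction using a double-counting (averaging) argument over a one-parameter pencil of codewords. Suppose $\mathcal{C}$ fails to be minimal. Then there exist nonzero codewords $\mathbf{b},\mathbf{c}\in\mathcal{C}$ with $\mathbf{b}\preceq\mathbf{c}$, that is $\text{Suppt}(\mathbf{b})\subseteq\text{Suppt}(\mathbf{c})$, yet $\mathbf{b}\neq a\mathbf{c}$ for every $a\in\mathbb{F}_q$. I would fix this pair and study the $q$ codewords $\mathbf{b}-x\mathbf{c}$ as $x$ ranges over $\mathbb{F}_q$.

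First I would record that each of these $q$ codewords is nonzero: if $\mathbf{b}-x\mathbf{c}=\mathbf{0}$ for some $x$, then $\mathbf{b}=x\mathbf{c}$, contradicting the choice of $\mathbf{b}$. Hence the minimum-weight bound $\text{wt}(\mathbf{b}-x\mathbf{c})\ge w_{\min}$ applies to all $q$ of them.

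Next comes the heart of the argument, a coordinatewise count of the total weight $\sum_{x\in\mathbb{F}_q}\text{wt}(\mathbf{b}-x\mathbf{c})$. For a coordinate $i\notin\text{Suppt}(\mathbf{c})$ we have $c_i=0$, and $\mathbf{b}\preceq\mathbf{c}$ forces $b_i=0$, so $(\mathbf{b}-x\mathbf{c})_i=0$ for every $x$; such coordinates contribute nothing. For a coordinate $i\in\text{Suppt}(\mathbf{c})$ we have $c_i\neq 0$, and $b_i-xc_i=0$ holds for exactly one value $x=b_ic_i^{-1}$; thus this coordinate is nonzero in exactly $q-1$ of the $q$ codewords. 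Summing over all coordinates gives the identity
$$\sum_{x\in\mathbb{F}_q}\text{wt}(\mathbf{b}-x\mathbf{c})=(q-1)\,\text{wt}(\mathbf{c}).$$

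Finally I would combine the two estimates. Since each of the $q$ summands is at least $w_{\min}$, the left side is at least $q\,w_{\min}$, while $\text{wt}(\mathbf{c})\le w_{\max}$ bounds the right side, yielding $q\,w_{\min}\le(q-1)\,w_{\max}$, i.e.\ $w_{\min}/w_{\max}\le(q-1)/q$. This contradicts the hypothesis, so no such pair $\mathbf{b},\mathbf{c}$ exists and $\mathcal{C}$ is minimal. I expect the only delicate point to be the weight-counting step: one must use the covering hypothesis $\mathbf{b}\preceq\mathbf{c}$ precisely to discard the coordinates outside $\text{Suppt}(\mathbf{c})$, and then observe that on $\text{Suppt}(\mathbf{c})$ each coordinate is annihilated by exactly one scalar $x$, which is what converts the sum into $(q-1)\,\text{wt}(\mathbf{c})$.
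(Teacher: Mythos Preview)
Your argument is correct and is essentially the classical proof of the Ashikhmin--Barg bound. Note, however, that the paper does not supply its own proof of this lemma: it is quoted from \cite{AB1998} as a known result and used only as background, so there is no in-paper argument to compare against. Your averaging over the pencil $\{\mathbf{b}-x\mathbf{c}:x\in\mathbb{F}_q\}$, together with the coordinatewise count yielding $\sum_{x\in\mathbb{F}_q}\mathrm{wt}(\mathbf{b}-x\mathbf{c})=(q-1)\,\mathrm{wt}(\mathbf{c})$, is exactly the standard route.
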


Inspired by Ding's work \cite{D2015, D2016},  many minimal linear codes with $\frac{w_{\rm min}}{w_{\rm max}}>\frac{q-1}{q}$  have been
constructed by selecting the proper defining set or from functions over finite fields (see \cite{DD2015,HY2016,LCXM2018,SLP2016,SGP2017,TLQZ2016,WDX2015,X2016,YY2017,ZLFH2016}). Cohen et al. \cite{CMP2013} provided an example to show that the condition $\frac{w_{\rm min}}{w_{\rm max}}>\frac{q-1}{q}$ in Lemma \ref{Ashikhmin-Barg} is not necessary for a linear code to be minimal.  Recently, Ding, Heng and Zhou \cite{DHZ2018, HDZ2018}  generalized this  sufficient condition and derived a sufficient and necessary condition on all Hamming weights for a given linear code to be
minimal:
\begin{lem}\label{Heng-Ding-Zhou}{$($\rm { Ding-Heng-Zhou}\cite{DHZ2018,HDZ2018})}
A linear code $\mathcal{C}$ over $\mathbb{F}_q$ is minimal if and only if
$$\sum_{c\in\mathbb{F}_q^{*}}\rm{wt}(\mathbf{a}+c\mathbf{b})\neq (q-1)\rm{wt}(\mathbf{a})-\rm{wt}(\mathbf{b})$$
for any $\mathbb{F}_q$-linearly independent codewords $\mathbf{a},\mathbf{b}\in \mathcal{C}$.
\end{lem}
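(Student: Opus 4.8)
The plan is to reduce the statement to a purely combinatorial identity relating $\sum_{c}\mathrm{wt}(\mathbf{a}+c\mathbf{b})$ to the sizes of the supports of $\mathbf{a}$ and $\mathbf{b}$, and then read off the equivalence.

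First I would reformulate minimality in terms of linear independence. By definition $\mathcal{C}$ fails to be minimal exactly when some codeword $\mathbf{a}$ covers a codeword $\mathbf{b}$ that is not a scalar multiple of $\mathbf{a}$. I would first check that such a pair is automatically $\mathbb{F}_q$-linearly independent: if they were dependent, then either $\mathbf{a}=\mathbf{0}$ (forcing $\mathbf{b}=\mathbf{0}=0\cdot\mathbf{a}$) or $\mathbf{b}=c\mathbf{a}$, both making $\mathbf{b}$ a scalar multiple of $\mathbf{a}$, a contradiction; conversely linearly independent codewords are never scalar multiples of one another. Hence $\mathcal{C}$ is minimal if and only if for every ordered pair of linearly independent codewords $\mathbf{a},\mathbf{b}$ one has $\mathbf{b}\not\preceq\mathbf{a}$, i.e. $\mathrm{Suppt}(\mathbf{b})\not\subseteq\mathrm{Suppt}(\mathbf{a})$.

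Second, the key step is the identity
$$\sum_{c\in\mathbb{F}_q^*}\mathrm{wt}(\mathbf{a}+c\mathbf{b})=(q-1)\mathrm{wt}(\mathbf{a})-\mathrm{wt}(\mathbf{b})+q\,\big|\mathrm{Suppt}(\mathbf{b})\setminus\mathrm{Suppt}(\mathbf{a})\big|,$$
valid for all codewords $\mathbf{a},\mathbf{b}$. I would prove it by interchanging the order of summation and counting, for each coordinate $i$, the number of $c\in\mathbb{F}_q^*$ with $a_i+cb_i\neq 0$. Splitting the coordinates into four classes according to whether $a_i$ and $b_i$ vanish gives per-coordinate contributions $0$ (both zero), $q-1$ (only $b_i=0$), $q-1$ (only $a_i=0$), and $q-2$ (both nonzero, since the unique $c=-a_i/b_i\in\mathbb{F}_q^*$ is killed). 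Summing these and expressing $\mathrm{wt}(\mathbf{a})$ and $\mathrm{wt}(\mathbf{b})$ in terms of the same four class sizes yields the identity after elementary simplification.

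Finally, the identity shows that the equality $\sum_{c\in\mathbb{F}_q^*}\mathrm{wt}(\mathbf{a}+c\mathbf{b})=(q-1)\mathrm{wt}(\mathbf{a})-\mathrm{wt}(\mathbf{b})$ holds precisely when $\big|\mathrm{Suppt}(\mathbf{b})\setminus\mathrm{Suppt}(\mathbf{a})\big|=0$, that is, when $\mathbf{b}\preceq\mathbf{a}$; contrapositively the inequality in the lemma holds for a given linearly independent pair exactly when $\mathbf{b}\not\preceq\mathbf{a}$. Combining this with the first step gives the stated equivalence. I expect the only delicate point to be the bookkeeping of the first step --- correctly handling the zero codeword and faithfully repackaging the definition of minimality (a condition on each codeword together with everything it covers) as a condition quantified over ordered linearly independent pairs --- whereas the counting identity itself is routine once the coordinates are classified.
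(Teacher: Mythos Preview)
The paper does not supply its own proof of this lemma: it is quoted as a known result from \cite{DHZ2018,HDZ2018} in the Introduction and is never argued in the text, so there is no in-paper proof to compare your proposal against.

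That said, your argument is sound. The coordinate-by-coordinate count yielding
\[
\sum_{c\in\mathbb{F}_q^{*}}\mathrm{wt}(\mathbf{a}+c\mathbf{b})=(q-1)\,\mathrm{wt}(\mathbf{a})-\mathrm{wt}(\mathbf{b})+q\,\bigl|\mathrm{Suppt}(\mathbf{b})\setminus\mathrm{Suppt}(\mathbf{a})\bigr|
\]
is correct (the four cases contribute $0$, $q-1$, $q-1$, $q-2$ as you say), and from it the equivalence with $\mathbf{b}\preceq\mathbf{a}$ is immediate. Your repackaging of minimality as a condition on ordered linearly independent pairs is also fine, including the handling of the zero codeword. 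This is in fact essentially the route taken in the original references: the identity above (equivalently, its rearrangement isolating $\bigl|\mathrm{Suppt}(\mathbf{b})\setminus\mathrm{Suppt}(\mathbf{a})\bigr|$) is exactly the engine behind the Ding--Heng--Zhou criterion.
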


Based on {\bf Lemma \ref{Heng-Ding-Zhou}}, Ding et al. presented three infinite families of minimal binary linear codes with $\frac{w_{\rm min}}{w_{\rm max}}\leq\frac{1}{2}$ in \cite{DHZ2018} and an infinite family of minimal ternary linear codes with $\frac{w_{\rm min}}{w_{\rm max}}<\frac{2}{3}$ in \cite{HDZ2018}, respectively. In \cite{ZYW2018}, Zhang et al. constructed four families of minimal binary linear codes
with $\frac{w_{\rm min}}{w_{\rm max}}\leq\frac{1}{2}$ from Krawtchouk polynomials. Very recently, Bartoli and
Bonini \cite{BB2019} provided infinite families of minimal linear codes; also in \cite{XQ2019} Xu and Qu constructed three classes  of minimal linear codes with $\frac{w_{\rm min}}{w_{\rm max}}<\frac{p-1}{p}$ for any odd prime $p$.

So far, many minimal linear codes are constructed from functions. Let $f:\mathbb{F}_q^m\rightarrow \mathbb{F}_q$ be a function such that $f(\mathbf{x})\neq\mathbf{\omega\cdot x}$ for any $\mathbf{\omega}\in \mathbb{F}_q^m$. Let
$$\mathcal{C}_f=\{\mathbf{c}(u,\mathbf{v})=(uf(\mathbf{x})+\mathbf{v\cdot x})_{\mathbf{x}\in \mathbb{F}_q^m\backslash \{\mathbf{0}\}}\mid\ u\in \mathbb{F}_q,\mathbf{v}\in \mathbb{F}_q^m\}.$$
Then $\mathcal{C}_f$ is a $[q^m-1,m+1]_q$ linear codes. By the choices of the function $f$, many linear codes $\mathcal{C}_f$ can be minimal.

Recently, we gave a new sufficient and necessary condition for a given linear code to be
minimal in \cite{LW2019}. In this paper, first, we  present the new necessary and sufficient conditions for  codewords by a general constructions of linear codes from $q$-ary functions to be minimal; second, based on this sufficient and necessary condition, we give four constructions of minimal linear codes. Our results generalize the constructions in \cite{DHZ2018, HDZ2018, BB2019, MQ2019, MB2019}. In our four constructions, the choices of $f$ are much more flexible than theirs.

The rest of this paper is organized as follows. In Section \ref{section Preliminaries}, we give  basic results on linear codes and our new sufficient and necessary condition. In Section \ref{section minimal codewords}, we  present the new necessary and sufficient conditions for  linear codes constructed from $q$-ary functions to be minimal. In Section \ref{section the first construction}, we give the first construction of minimal linear code which generalizes the constructions in \cite{DHZ2018, HDZ2018, BB2019}. In Section \ref{section the second construction}, we give the second construction of minimal linear code which generalize the construction in \cite{MQ2019}. In Section \ref{M function}, we present a class of minimal linear codes constructed from Maiorana-McFarland functions which generalize the constructions in \cite{DHZ2018}. In Section \ref{polynomials}, we present a class of  minimal linear codes constructed from  polynomials which generalize the construction in \cite{MB2019}. In Section \ref{section Concluding remarks}, we conclude this paper.


\section{\bf Preliminaries}\label{section Preliminaries}
\subsection{Inner product and dual}
Let $k$ be a positive integer. For two vectors $\mathbf{x}=(x_1,x_2,...,x_k)$, $\mathbf{y}=(y_1,y_2,...,y_k)\in  \mathbb{F}_q^k$, their \emph{Euclidean inner product} is:
$$\mathbf{x}\cdot \mathbf{y}:=\mathbf{x}\mathbf{y}^{T}=\sum_{i=1}^k{x}_i{y}_i,$$
where $T$ denotes the transpose operator.

For any subset $S\subseteq \mathbb{F}_q^k$,  we define
$$S^\perp:=\{\mathbf{y}\in \mathbb{F}_q^k\mid \mathbf{y}\cdot\mathbf{x}=0,  \it{{\rm{for\  any}}\  \mathbf{x}\in S}\}.$$
By the definition of $S^\perp$, the following two facts are immediate:\\
1. $S\subseteq (S^\perp)^\perp$;\\
2. If $S$ is a linear subspace of $\mathbb{F}_q^k$, then ${\rm{dim}}(S)+{\rm{dim}}(S^\perp)=k$.

\subsection{Properties of covering}

\begin{lem}\label{cover}
Let $\mathbf{u}=(u_1,...,u_n), \mathbf{v}=(v_1,...,v_n)\in \mathbb{F}_q^n$. Then the following conditions are equivalent:\\
(1) $\mathbf{u}\preceq \mathbf{v}$ $($i.e. ${\rm{Suppt}}(\mathbf{u})\subseteq {\rm{Suppt}}(\mathbf{v}))$;\\
(2) for any $1\leq i\leq n,$ if $u_i\neq 0$, then $v_i\neq 0$;\\
(3) for any $1\leq i\leq n,$ if $v_i= 0$, then $u_i= 0$;\\
(4) \rm{Zero}$(\mathbf{v})\subseteq  \rm{Zero}(\mathbf{u})$, where \rm{Zero}$(\mathbf{u}):=\{1\leq i\leq n: u_i=0\}=[1,...,n]\setminus \rm{Suppt}(\mathbf{u}).$
\end{lem}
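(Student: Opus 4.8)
The plan is to establish the four conditions as a single chain of equivalences, since each of (2), (3), (4) is merely a logical reformulation of the support-inclusion in (1); no analytic input is needed beyond carefully unwinding the relevant definitions and applying the contrapositive.

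First I would prove (1) $\Leftrightarrow$ (2). By the very definition of set containment, $\mathrm{Suppt}(\mathbf{u})\subseteq\mathrm{Suppt}(\mathbf{v})$ means precisely that every index $i$ lying in $\mathrm{Suppt}(\mathbf{u})$ also lies in $\mathrm{Suppt}(\mathbf{v})$. Since $i\in\mathrm{Suppt}(\mathbf{u})$ is by definition the condition $u_i\neq 0$ and $i\in\mathrm{Suppt}(\mathbf{v})$ is the condition $v_i\neq 0$, this containment is exactly the implication ``$u_i\neq 0 \Rightarrow v_i\neq 0$ for all $i$'', which is statement (2). Next, (2) $\Leftrightarrow$ (3) is nothing more than the contrapositive: for a fixed index $i$, the implication $u_i\neq 0\Rightarrow v_i\neq 0$ is logically equivalent to $v_i=0\Rightarrow u_i=0$, and quantifying over all $i$ preserves this equivalence. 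Finally, (3) $\Leftrightarrow$ (4) again unwinds a definition: since $i\in\mathrm{Zero}(\mathbf{v})$ is the condition $v_i=0$ and $i\in\mathrm{Zero}(\mathbf{u})$ is the condition $u_i=0$, the implication ``$v_i=0\Rightarrow u_i=0$ for all $i$'' is precisely the containment $\mathrm{Zero}(\mathbf{v})\subseteq\mathrm{Zero}(\mathbf{u})$. Chaining these three equivalences yields (1) $\Leftrightarrow$ (2) $\Leftrightarrow$ (3) $\Leftrightarrow$ (4).

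I do not anticipate any genuine obstacle here, as the statement is essentially bookkeeping. The only thing requiring care is getting the direction of the contrapositive correct and using the complementation identity $\mathrm{Zero}(\mathbf{u})=[1,\dots,n]\setminus\mathrm{Suppt}(\mathbf{u})$ properly, so that the quantifiers and negations line up consistently across all four formulations. An alternative, equally short route would be to prove the cyclic implications $(1)\Rightarrow(2)\Rightarrow(3)\Rightarrow(4)\Rightarrow(1)$, but the biconditional chain above is the cleanest since each individual link is already an ``if and only if'' at the level of definitions.
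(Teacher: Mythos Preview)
Your proposal is correct; the argument is simply definition-unwinding plus contraposition, exactly as you describe. The paper in fact states this lemma without proof, treating all four equivalences as immediate from the definitions, so your write-up already supplies more detail than the authors deemed necessary.
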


\subsection{A sufficient and necessary condition for $q$-ary linear codes to be minimal}
This new sufficient and necessary condition is presented as a main result in \cite{LW2019}, for easier reading, we list the results once more. To present the new sufficient and necessary condition in \cite{LW2019}, some concepts are needed.

Let $k\leq n$ be two positive integers and $q$ a prime power. Let $D:=\{\mathbf{d}_1,...,\mathbf{d}_n\}$  be a multiset and rank$(D)=k$, where $\mathbf{d}_1,...,\mathbf{d}_n\in \mathbb{F}_q^k$. Let  $$\mathcal{C}=\mathcal{C}(D)=\{{\mathbf{c}\mathbf{(x)}}=\mathbf{c}(\mathbf{x};D)=(\mathbf{xd}_1^{T},...,\mathbf{xd}_n^{T}), \mathbf{x}\in \mathbb{F}_q^k\}.$$
Then $\mathcal{C}(D)$ is an $[n,k]_q$ linear code.
For any $\mathbf{y}\in \mathbb{F}_q^k$, we define
$$H(\mathbf{y}):=\mathbf{y}^\perp=\{\mathbf{x}\in \mathbb{F}_q^k\mid\mathbf{xy}^{T}=0\},$$
$$H(\mathbf{y},D):=D\cap H(\mathbf{y})=\{\mathbf{x}\in D\mid\mathbf{xy}^{T}=0\},$$
$$V(\mathbf{y},D):={\rm{Span}}(H(\mathbf{y},D)).$$
It is obvious that $H(\mathbf{y},D)\subseteq V(\mathbf{y},D)\subseteq H(\mathbf{y})$.
The following lemma gives a sufficient and necessary condition for a codeword $\mathbf{c(y)}\in \mathcal{C}(D)$ to be minimal:

\begin{lem}\label{sn}\cite[Theorem 3.2]{LW2019}
Let $\mathbf{y}\in \mathbb{F}_q^k\backslash \{\mathbf{0}\}$. Then the following three conditions are equivalent:\\
$(1)$ $\mathbf{c(y)}$ is minimal in $\mathcal{C}(D)$;\\
$(2)$ \rm{dim}$V(\mathbf{y},D)=k-1$;\\
$(3)$ $V(\mathbf{y},D)=H(\mathbf{y})$.

\end{lem}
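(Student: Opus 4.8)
The plan is to prove the equivalence of the three conditions on the codeword $\mathbf{c(y)}$ by establishing the cycle $(1)\Rightarrow(3)\Rightarrow(2)\Rightarrow(1)$, using the covering characterization from Lemma \ref{cover} to translate the combinatorial notion of minimality into the linear-algebraic language of the subspaces $H(\mathbf{y},D)$ and $V(\mathbf{y},D)$. The key observation that drives everything is a dictionary between codewords and vectors $\mathbf{x}\in\mathbb{F}_q^k$: the codeword $\mathbf{c(x)}$ has a zero in coordinate $i$ exactly when $\mathbf{x}\mathbf{d}_i^T=0$, that is, when $\mathbf{d}_i\in H(\mathbf{x})$. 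Hence by part (4) of Lemma \ref{cover}, the covering relation $\mathbf{c(x)}\preceq\mathbf{c(y)}$ is equivalent to $\{i:\mathbf{y}\mathbf{d}_i^T=0\}\subseteq\{i:\mathbf{x}\mathbf{d}_i^T=0\}$, i.e.\ to $H(\mathbf{y},D)\subseteq H(\mathbf{x})$, which in turn says $\mathbf{x}\in H(\mathbf{z})$ for every $\mathbf{d}_i=\mathbf{z}\in H(\mathbf{y},D)$. Taking the span, this is exactly the condition $\mathbf{x}\in V(\mathbf{y},D)^\perp$. So the first thing I would record is the clean reformulation: $\mathbf{c(x)}\preceq\mathbf{c(y)}$ if and only if $\mathbf{x}\in V(\mathbf{y},D)^\perp$.

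With this reformulation in hand, minimality of $\mathbf{c(y)}$ becomes a statement purely about $V(\mathbf{y},D)^\perp$. By definition $\mathbf{c(y)}$ is minimal iff every codeword covered by it is a scalar multiple $a\,\mathbf{c(y)}=\mathbf{c}(a\mathbf{y})$; since the map $\mathbf{x}\mapsto\mathbf{c(x)}$ is an isomorphism (as $\rank(D)=k$, so $\mathbf{c(x)}=\mathbf{0}$ forces $\mathbf{x}=\mathbf{0}$), this is equivalent to saying that the set of vectors $\mathbf{x}$ with $\mathbf{c(x)}\preceq\mathbf{c(y)}$ is precisely the line $\mathbb{F}_q\mathbf{y}$. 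Combining with the reformulation above, condition (1) is equivalent to $V(\mathbf{y},D)^\perp=\mathbb{F}_q\mathbf{y}$. Now I would dualize: taking perpendiculars and using fact 2 from the preliminaries together with $V(\mathbf{y},D)\subseteq H(\mathbf{y})=(\mathbb{F}_q\mathbf{y})^\perp$, this translates directly into $V(\mathbf{y},D)=H(\mathbf{y})$, which is precisely condition (3). This gives $(1)\Leftrightarrow(3)$ in one stroke.

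The remaining equivalence $(2)\Leftrightarrow(3)$ is the genuinely easy step. Since $\mathbf{y}\neq\mathbf{0}$, the hyperplane $H(\mathbf{y})$ has dimension $k-1$. We always have the inclusion $V(\mathbf{y},D)\subseteq H(\mathbf{y})$, so $\dim V(\mathbf{y},D)\le k-1$ with equality if and only if the inclusion is an equality; that is exactly the statement that $(2)$ and $(3)$ coincide. I would present this after the main equivalence to close the cycle.

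The main obstacle, such as it is, lies entirely in getting the reformulation of the covering relation correct and in handling the passage to spans carefully: the containment $H(\mathbf{y},D)\subseteq H(\mathbf{x})$ involves only the actual elements of the multiset $D$, whereas the condition $\mathbf{x}\in V(\mathbf{y},D)^\perp$ involves their span, and I must check that these are genuinely equivalent (they are, because $\mathbf{x}\in H(\mathbf{z})$ for every generator $\mathbf{z}$ of $V(\mathbf{y},D)$ iff $\mathbf{x}$ annihilates the whole span, by bilinearity of the inner product). Once this dictionary is nailed down, the rest is routine dimension-counting and dualization, with the only subtlety being to keep track of the direction of the perp operation and to use the isomorphism $\mathbf{x}\mapsto\mathbf{c(x)}$ to rule out distinct $\mathbf{x}$ giving the same codeword.
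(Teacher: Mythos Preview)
Your argument is correct. The key reformulation $\mathbf{c(x)}\preceq\mathbf{c(y)}\iff\mathbf{x}\in V(\mathbf{y},D)^\perp$ is established cleanly via Lemma~\ref{cover}(4) and bilinearity, and the dualization step $(V(\mathbf{y},D)^\perp=\mathbb{F}_q\mathbf{y})\iff(V(\mathbf{y},D)=H(\mathbf{y}))$ is justified by facts 1 and 2 in the preliminaries (a subspace equals its double perp). The use of $\rank(D)=k$ to ensure $\mathbf{x}\mapsto\mathbf{c(x)}$ is injective is exactly what is needed to identify ``covered codewords are scalar multiples of $\mathbf{c(y)}$'' with ``covering vectors $\mathbf{x}$ lie in $\mathbb{F}_q\mathbf{y}$''.

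Note, however, that the present paper does not itself prove this lemma: it is quoted as \cite[Theorem~3.2]{LW2019} and stated without proof here, so there is no in-paper argument to compare yours against. Your proof is a valid self-contained justification of the cited result, and it proceeds along the same duality-based lines one would expect from the setup in Section~\ref{section Preliminaries}.
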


The following lemma gives a sufficient and necessary condition for linear codes over $\mathbb{F}_q$ to be minimal.
\begin{lem}\label{sn1}\cite[Theorem 3.3]{LW2019}
The following three conditions are equivalent:\\
$(1)$ $\mathcal{C}(D)$ is minimal;\\
$(2)$ for any $\mathbf{y}\in \mathbb{F}_q^k\backslash \{\mathbf{0}\}$,  \rm{dim}$V(\mathbf{y},D)=k-1$;\\
$(3)$ for any $\mathbf{y}\in \mathbb{F}_q^k\backslash \{\mathbf{0}\}$, $V(\mathbf{y},D)=H(\mathbf{y})$.
\end{lem}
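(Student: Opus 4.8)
The plan is to reduce Lemma \ref{sn1} to the single-codeword statement of Lemma \ref{sn}, since ``$\mathcal{C}(D)$ is minimal'' is by definition the conjunction, over all codewords, of ``each codeword is minimal.'' The only real work is to match the quantifier ``for all codewords'' with the quantifier ``for all $\mathbf{y}\in\mathbb{F}_q^k\setminus\{\mathbf{0}\}$'' appearing in conditions $(2)$ and $(3)$.

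First I would record two preliminary observations. (i) The zero codeword is automatically minimal: since ${\rm Suppt}(\mathbf{0})=\emptyset$, the only codeword $\mathbf{b}$ satisfying $\mathbf{b}\preceq\mathbf{0}$ is $\mathbf{b}=\mathbf{0}=a\mathbf{0}$, so $\mathbf{0}$ meets the defining condition of a minimal codeword vacuously. Hence ``$\mathcal{C}(D)$ is minimal'' is equivalent to ``every nonzero codeword of $\mathcal{C}(D)$ is minimal.'' (ii) The evaluation map $\mathbf{x}\mapsto\mathbf{c}(\mathbf{x})$ is $\mathbb{F}_q$-linear, and because ${\rm rank}(D)=k$ the vectors $\mathbf{d}_1,\dots,\mathbf{d}_n$ span $\mathbb{F}_q^k$; thus $\mathbf{x}\mathbf{d}_i^{T}=0$ for all $i$ forces $\mathbf{x}=\mathbf{0}$, so the map is injective and therefore a bijection between $\mathbb{F}_q^k$ and $\mathcal{C}(D)$. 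In particular every nonzero codeword equals $\mathbf{c}(\mathbf{y})$ for exactly one $\mathbf{y}\in\mathbb{F}_q^k\setminus\{\mathbf{0}\}$.

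Combining (i) and (ii), $\mathcal{C}(D)$ is minimal if and only if $\mathbf{c}(\mathbf{y})$ is minimal for every $\mathbf{y}\in\mathbb{F}_q^k\setminus\{\mathbf{0}\}$. At this point I would simply invoke Lemma \ref{sn}, which for each fixed nonzero $\mathbf{y}$ asserts the equivalence of the three statements ``$\mathbf{c}(\mathbf{y})$ is minimal,'' ``${\rm dim}\,V(\mathbf{y},D)=k-1$,'' and ``$V(\mathbf{y},D)=H(\mathbf{y})$.'' Since an equivalence $P(\mathbf{y})\Leftrightarrow Q(\mathbf{y})$ valid for each $\mathbf{y}$ yields $\bigl(\forall\mathbf{y}\,P(\mathbf{y})\bigr)\Leftrightarrow\bigl(\forall\mathbf{y}\,Q(\mathbf{y})\bigr)$, prefixing each of the three conditions with the quantifier over $\mathbf{y}\in\mathbb{F}_q^k\setminus\{\mathbf{0}\}$ produces exactly conditions $(1)$, $(2)$, $(3)$ of the present lemma, with all equivalences intact. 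This establishes $(1)\Leftrightarrow(2)\Leftrightarrow(3)$.

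There is essentially no hard step; the lemma is a packaging of Lemma \ref{sn} from the level of a single codeword to the level of the whole code. The one point needing care is observation (ii): the rank hypothesis is what guarantees that distinct parameters $\mathbf{y}$ give distinct codewords, so that the universal statements ``for all nonzero codewords'' and ``for all nonzero $\mathbf{y}$'' genuinely correspond and no nonzero codeword is missed. Without injectivity the translation between the two quantifiers would not be legitimate, so I would make sure to state that correspondence explicitly before appealing to Lemma \ref{sn}.
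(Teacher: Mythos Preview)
Your argument is correct. The paper does not supply its own proof of this lemma; it is quoted from \cite[Theorem 3.3]{LW2019} without argument, so there is no in-paper proof to compare against. Your reduction to Lemma~\ref{sn} via the bijection $\mathbf{y}\mapsto\mathbf{c}(\mathbf{y})$ (guaranteed by $\mathrm{rank}(D)=k$) together with the observation that the zero codeword is vacuously minimal is exactly the natural way to pass from the single-codeword statement to the global one, and every step is sound.
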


\begin{remark} Almost at the same time, in \cite{ABN2019} and \cite{TQLZ2019}, the authors also obtained the above sufficient and necessary condition independently. They characterize minimal linear codes by cutting blocking sets. Actually, \cite{BB2019}  first used cutting blocking sets to study minimal linear codes.
\end{remark}

\section{\bf  The new sufficient and necessary condition for linear codes constructed from functions to be minimal}\label{section minimal codewords}
\subsection {A general construction of linear code from $q$-ary function}
Let $k\geq 2$ be a positive integer and $m:=k-1$. Let $f:\mathbb{F}_q^m\rightarrow \mathbb{F}_q$ be a function. Define
$$D=D_f:=\{\mathbf{d_x}=(f(\mathbf{x}),\mathbf{x}), \mathbf{x}\in \mathbb{F}_q^m\backslash \{\mathbf{0}\}\},$$ and
$$\mathcal{C}_f:=\mathcal{C}(D)=\mathcal{C}(D_f)=\{\mathbf{c}(u,\mathbf{v}):=((u,\mathbf{v})\cdot\mathbf{d_x})_{\mathbf{x}\in \mathbb{F}_q^m\backslash \{\mathbf{0}\}}=(uf(\mathbf{x})+\mathbf{v\cdot x})_{\mathbf{x}\in \mathbb{F}_q^m\backslash \{\mathbf{0}\}}\mid\ u\in \mathbb{F}_q,\mathbf{v}\in \mathbb{F}_q^m\}.$$

Let $r(D_f)={\rm{rank}}(D_f)$. Then $\mathcal{C}_f$ is a $[q^m-1,r(D_f)]_q$ linear code. It is easy to see that $m\leq r(D_f)\leq m+1$.

\begin{lem}\label{331}
$r(D_f)=m$ if and only if there exists $\mathbf{\omega}\in \mathbb{F}_q^m$, such that $f(\mathbf{x})=\mathbf{\omega\cdot x}.$
\end{lem}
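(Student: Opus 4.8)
The plan is to regard each generator $\mathbf{d_x}=(f(\mathbf{x}),\mathbf{x})$ as a vector in $\mathbb{F}_q^{m+1}$ whose first coordinate is $f(\mathbf{x})$ and whose remaining $m$ coordinates record $\mathbf{x}$, and to study $V:=\mathrm{Span}(D_f)$ via the projection $\pi:\mathbb{F}_q^{m+1}\rightarrow\mathbb{F}_q^m$ that deletes the first coordinate. First I would observe that $\pi(\mathbf{d_x})=\mathbf{x}$, and since the nonzero vectors $\mathbf{x}$ already span $\mathbb{F}_q^m$, the projection satisfies $\pi(V)=\mathbb{F}_q^m$. This re-proves $r(D_f)=\dim V\geq m$ and, together with $V\subseteq\mathbb{F}_q^{m+1}$, confines the rank to $\{m,m+1\}$; the content of the lemma is then to pin down exactly when the rank drops to $m$.

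For the implication $(\Leftarrow)$, suppose $f(\mathbf{x})=\omega\cdot\mathbf{x}$. Then every generator satisfies the single linear relation $f(\mathbf{x})-\omega\cdot\mathbf{x}=0$, so all the $\mathbf{d_x}$ lie in the hyperplane orthogonal to the nonzero vector $(1,-\omega)\in\mathbb{F}_q^{m+1}$. Hence $\dim V\leq m$, and combined with the lower bound from the first step, $r(D_f)=m$.

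For the converse $(\Rightarrow)$, suppose $r(D_f)=m$, so $V$ is a hyperplane and can be written as the kernel of a single nonzero linear form $(a,\mathbf{b})$ with $a\in\mathbb{F}_q$ and $\mathbf{b}\in\mathbb{F}_q^m$. The key step, and the only delicate point, is to show $a\neq 0$: if $a=0$ then $\mathbf{b}\neq\mathbf{0}$ and the relation reads $\mathbf{b}\cdot\mathbf{x}=0$ on every $\mathbf{d_x}$, forcing each vector of $\pi(V)$ into the hyperplane $\mathbf{b}^{\perp}$ and contradicting $\pi(V)=\mathbb{F}_q^m$. With $a\neq 0$ in hand I set $\omega:=-a^{-1}\mathbf{b}$; then $a f(\mathbf{x})+\mathbf{b}\cdot\mathbf{x}=0$ rearranges to $f(\mathbf{x})=\omega\cdot\mathbf{x}$ for every nonzero $\mathbf{x}$, which is the desired conclusion. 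The value $f(\mathbf{0})$ never enters the construction, so it may be taken to be $0=\omega\cdot\mathbf{0}$ without affecting $\mathcal{C}_f$.

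I anticipate the main obstacle to be precisely the normalization $a\neq 0$ in the converse; everything else is a direct translation between ``all generators satisfy one common linear equation'' and ``the first coordinate is a fixed linear function of the remaining coordinates.'' The surjectivity of $\pi$ recorded at the outset is exactly what makes this normalization automatic, so the whole argument reduces to the projection observation plus this one case distinction.
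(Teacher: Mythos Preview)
Your argument is correct. The paper states this lemma without proof, treating it as immediate from the definition of $D_f$; your write-up makes explicit exactly the reasoning that is implicit there, namely that the projection onto the last $m$ coordinates is surjective (forcing $r(D_f)\in\{m,m+1\}$) and that $r(D_f)=m$ amounts to a single nonzero linear relation $(a,\mathbf{b})\cdot\mathbf{d_x}=0$ with $a\neq 0$, which is equivalent to $f(\mathbf{x})=\omega\cdot\mathbf{x}$ for $\omega=-a^{-1}\mathbf{b}$.
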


\begin{cor}\label{332}
$r(D_f)=m+1$ if and only if for any given $\mathbf{\omega}\in \mathbb{F}_q^m$,  $f(\mathbf{x})\neq\mathbf{\omega\cdot x}.$
\end{cor}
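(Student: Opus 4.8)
The plan is to derive Corollary \ref{332} directly as the logical contrapositive of Lemma \ref{331}, since the two statements partition the only two possible values of $r(D_f)$. Because we have already established $m \leq r(D_f) \leq m+1$, the events $r(D_f) = m$ and $r(D_f) = m+1$ are mutually exclusive and exhaustive, so $r(D_f) = m+1$ holds if and only if $r(D_f) \neq m$. By Lemma \ref{331}, $r(D_f) = m$ is equivalent to the existence of some $\mathbf{\omega} \in \mathbb{F}_q^m$ with $f(\mathbf{x}) = \mathbf{\omega \cdot x}$ for all $\mathbf{x}$; negating this gives exactly the condition in the corollary, namely that no such $\mathbf{\omega}$ exists, i.e. for every $\mathbf{\omega} \in \mathbb{F}_q^m$ there is some $\mathbf{x}$ with $f(\mathbf{x}) \neq \mathbf{\omega \cdot x}$.

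The first step I would carry out is to state the dichotomy explicitly: since $m \leq r(D_f) \leq m+1$ and $r(D_f)$ is an integer, we have $r(D_f) = m+1 \iff r(D_f) \neq m$. The second step is to invoke Lemma \ref{331} to rewrite the right-hand side, replacing ``$r(D_f) \neq m$'' with the negation of ``$\exists\, \mathbf{\omega},\ f(\mathbf{x}) = \mathbf{\omega \cdot x}$.'' The third and final step is to simplify that negation: the statement $\neg\bigl(\exists\, \mathbf{\omega},\ \forall\, \mathbf{x},\ f(\mathbf{x}) = \mathbf{\omega \cdot x}\bigr)$ becomes $\forall\, \mathbf{\omega},\ \exists\, \mathbf{x},\ f(\mathbf{x}) \neq \mathbf{\omega \cdot x}$, which is precisely how the corollary phrases its conclusion.

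The only point requiring a moment's care—and it is minor rather than a genuine obstacle—is the logical parsing of the corollary's phrase ``for any given $\mathbf{\omega}$, $f(\mathbf{x}) \neq \mathbf{\omega \cdot x}$.'' This should be read as ``for every $\mathbf{\omega}$, the function $f$ is not equal to the linear form $\mathbf{\omega \cdot x}$ as functions,'' i.e. for each $\mathbf{\omega}$ there exists at least one $\mathbf{x}$ at which they differ; it does not mean that $f(\mathbf{x})$ differs from $\mathbf{\omega \cdot x}$ for all $\mathbf{x}$ simultaneously. Once the quantifiers are read correctly, the equivalence is immediate from Lemma \ref{331}, so I expect no substantive difficulty—the corollary is a formal consequence and the proof is a one-line restatement.
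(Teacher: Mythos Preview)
Your proposal is correct and matches the paper's intent exactly: the paper states Corollary~\ref{332} without proof immediately after Lemma~\ref{331}, treating it as the obvious contrapositive given that $r(D_f)\in\{m,m+1\}$. Your careful parsing of the quantifiers is appropriate and there is nothing to add.
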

In the rest of this paper, we assume that $f(\mathbf{x})\neq\mathbf{\omega\cdot x}$ for any
$\mathbf{\omega}\in \mathbb{F}_q^m$, and then $\mathcal{C}_f$ is a $[q^m-1,m+1]_q$ linear code.

\subsection{The sufficient and necessary condition for linear codes constructed from functions to be minimal}

Let $\mathbf{y}=(u,\mathbf{v})\in \mathbb{F}_q^k$. We discuss the following three cases respectively.
\vskip 0.6mm
{\bf Case 1}: When $u\neq 0$ and $\mathbf{v}=\mathbf{0}$, we have the following proposition.
\begin{prop}\label{1}
 When $u\neq 0$ and $\mathbf{v}=\mathbf{0}$, then $\mathbf{c}(u,\mathbf{v})$ is minimal if and only if there exist $\{\mathbf{\alpha}_1,...,\mathbf{\alpha}_m\}$ which is a basis of $\mathbb{F}_q^m$, such that $f(\mathbf{\alpha}_1)=...=f(\mathbf{\alpha}_m)=0.$
\end{prop}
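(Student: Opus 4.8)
The plan is to apply the per-codeword criterion of Lemma~\ref{sn} to the specific vector $\mathbf{y}=(u,\mathbf{0})$ with $u\neq 0$. Since $u\neq 0$ we have $\mathbf{y}\neq\mathbf{0}$, so Lemma~\ref{sn} applies and tells us that $\mathbf{c}(u,\mathbf{0})$ is minimal in $\mathcal{C}_f$ if and only if $V(\mathbf{y},D)=H(\mathbf{y})$. First I would compute the hyperplane $H(\mathbf{y})$: for $\mathbf{z}=(a,\mathbf{b})\in\mathbb{F}_q^k$ we have $\mathbf{z}\cdot\mathbf{y}=ua$, which vanishes exactly when $a=0$, so
$$H(\mathbf{y})=\{(0,\mathbf{b}):\mathbf{b}\in\mathbb{F}_q^m\}=\{\mathbf{0}\}\times\mathbb{F}_q^m,$$
an $m$-dimensional subspace.

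Next I would identify $H(\mathbf{y},D)$ and its span. The key (and essentially only) computation is
$$\mathbf{d_x}\cdot\mathbf{y}=(f(\mathbf{x}),\mathbf{x})\cdot(u,\mathbf{0})=uf(\mathbf{x}),$$
which, because $u\neq 0$, is zero precisely when $f(\mathbf{x})=0$. Writing $Z:=\{\mathbf{x}\in\mathbb{F}_q^m\setminus\{\mathbf{0}\}:f(\mathbf{x})=0\}$ for the nonzero zero set of $f$, this gives $H(\mathbf{y},D)=\{(0,\mathbf{x}):\mathbf{x}\in Z\}$ and hence $V(\mathbf{y},D)={\rm Span}(H(\mathbf{y},D))=\{\mathbf{0}\}\times{\rm Span}(Z)$. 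Combining this with the computation of $H(\mathbf{y})$, the equivalence $V(\mathbf{y},D)=H(\mathbf{y})$ of Lemma~\ref{sn} becomes $\{\mathbf{0}\}\times{\rm Span}(Z)=\{\mathbf{0}\}\times\mathbb{F}_q^m$, i.e. the condition ${\rm Span}(Z)=\mathbb{F}_q^m$.

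The final step is the elementary translation of ${\rm Span}(Z)=\mathbb{F}_q^m$ into the stated basis condition: a subset $Z\subseteq\mathbb{F}_q^m\setminus\{\mathbf{0}\}$ spans $\mathbb{F}_q^m$ if and only if it contains a basis $\{\mathbf{\alpha}_1,\dots,\mathbf{\alpha}_m\}$ of $\mathbb{F}_q^m$. The ``if'' direction is immediate, and the ``only if'' direction follows by extracting a basis from the spanning set $Z$; every extracted basis vector is nonzero and lies in $Z$, hence satisfies $f=0$. This is exactly the assertion that there is a basis $\{\mathbf{\alpha}_1,\dots,\mathbf{\alpha}_m\}$ of $\mathbb{F}_q^m$ with $f(\mathbf{\alpha}_1)=\cdots=f(\mathbf{\alpha}_m)=0$, completing the chain of equivalences.

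I do not expect a serious obstacle here: the whole argument is a direct unwinding of the definitions of $H(\mathbf{y})$, $H(\mathbf{y},D)$ and $V(\mathbf{y},D)$ followed by a single invocation of Lemma~\ref{sn}. The only point demanding a little care is the bookkeeping in the last equivalence, namely confirming that a basis drawn from a spanning set automatically consists of nonzero vectors (so that membership in $Z$, and thus $f=0$, is preserved), and conversely that the existence of such a basis forces $Z$ to span $\mathbb{F}_q^m$. Both are standard facts about finite-dimensional vector spaces.
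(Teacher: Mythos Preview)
Your proposal is correct and follows essentially the same approach as the paper: both identify $H(\mathbf{y},D)$ via the computation $\mathbf{d_x}\cdot\mathbf{y}=uf(\mathbf{x})=0\Leftrightarrow f(\mathbf{x})=0$, then invoke Lemma~\ref{sn}. The only cosmetic difference is that you use the equivalent condition (3) of Lemma~\ref{sn} (computing $H(\mathbf{y})=\{0\}\times\mathbb{F}_q^m$ and matching it to $V(\mathbf{y},D)$), while the paper uses condition (2) (finding $m$ linearly independent $\mathbf{d}_{\alpha_i}$ and observing that, since $f(\alpha_i)=0$ forces $\mathbf{d}_{\alpha_i}=(0,\alpha_i)$, this is the same as the $\alpha_i$ forming a basis of $\mathbb{F}_q^m$).
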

\begin{proof}
For $\mathbf{x}\in \mathbb{F}_q^m\backslash \{\mathbf{0}\}$, $$\mathbf{d_x}\in H(\mathbf{y},D)\Longleftrightarrow uf(\mathbf{x} )+\mathbf{v\cdot x}=0\Longleftrightarrow uf(\mathbf{x}) =0\Longleftrightarrow f(\mathbf{x})=0.$$
By {\bf Lemma} \ref{sn}, $\mathbf{c}(u,\mathbf{v})$ is minimal if and only if there exist $\mathbf{\alpha}_1,...,\mathbf{\alpha}_m\in \mathbb{F}_q^m$, such that $f(\mathbf{\alpha}_1)=...=f(\mathbf{\alpha}_m)=0$ and $\mathbf{d}_{{\alpha}_1},...,\mathbf{d}_{{\alpha}_m}$ are linearly independent over $\mathbb{F}_q$. It is equivalent to that $\{\mathbf{\alpha}_1,...,\mathbf{\alpha}_m \}$ is a basis of $\mathbb{F}_q^m$ and $f({\alpha}_1)=...=f({\alpha}_m)=0$.
\end{proof}
\vskip 0.3mm

{\bf Case 2}: When $u\neq 0$ and $\mathbf{v}\neq \mathbf{0}$, we have the following proposition.
\begin{prop}\label{3}
 When $u\neq 0$ and $\mathbf{v}\neq \mathbf{0}$, let $\mathbf{\omega}=-\frac{1}{u}\mathbf{v}$. Then $\mathbf{c}(u,\mathbf{v})$ is minimal if and only if there exists $\{{\alpha_1},...,{\alpha_m}\}$ which is a basis of $\mathbb{F}_q^m$, such that
$f({\alpha_i})={\omega}\cdot{\alpha_i}$.
\end{prop}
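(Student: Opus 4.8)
The plan is to mirror the structure of the proof of Proposition~\ref{1}, reducing the minimality of $\mathbf{c}(u,\mathbf{v})$ to a statement about the level set $f^{-1}(\boldsymbol{\omega}\cdot(\,\cdot\,))$ via Lemma~\ref{sn}. First I would unwind the membership condition for $H(\mathbf{y},D)$ with $\mathbf{y}=(u,\mathbf{v})$. For $\mathbf{x}\in\mathbb{F}_q^m\backslash\{\mathbf{0}\}$, the defining vector $\mathbf{d_x}=(f(\mathbf{x}),\mathbf{x})$ lies in $H(\mathbf{y},D)$ exactly when $uf(\mathbf{x})+\mathbf{v}\cdot\mathbf{x}=0$. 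Since $u\neq 0$, dividing by $u$ and using $\boldsymbol{\omega}=-\frac{1}{u}\mathbf{v}$ gives the equivalent condition $f(\mathbf{x})=-\frac{1}{u}\mathbf{v}\cdot\mathbf{x}=\boldsymbol{\omega}\cdot\mathbf{x}$. Thus $\mathbf{d_x}\in H(\mathbf{y},D)$ if and only if $f(\mathbf{x})=\boldsymbol{\omega}\cdot\mathbf{x}$, which is the analogue of the equation $f(\mathbf{x})=0$ that appeared in Case~1.

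Next I would invoke Lemma~\ref{sn}, which tells us that $\mathbf{c}(u,\mathbf{v})$ is minimal if and only if $\dim V(\mathbf{y},D)=k-1=m$; equivalently, there exist $m$ vectors in $H(\mathbf{y},D)$ that span an $m$-dimensional space. So minimality is equivalent to the existence of $\boldsymbol{\alpha}_1,\dots,\boldsymbol{\alpha}_m\in\mathbb{F}_q^m$ with $f(\boldsymbol{\alpha}_i)=\boldsymbol{\omega}\cdot\boldsymbol{\alpha}_i$ for each $i$, such that the lifted vectors $\mathbf{d}_{\boldsymbol{\alpha}_1},\dots,\mathbf{d}_{\boldsymbol{\alpha}_m}\in\mathbb{F}_q^k$ are linearly independent. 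The content of the proposition is then that this independence of the lifted vectors is equivalent to the $\boldsymbol{\alpha}_i$ themselves forming a basis of $\mathbb{F}_q^m$.

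The key step, therefore, is to translate independence of the $\mathbf{d}_{\boldsymbol{\alpha}_i}=(f(\boldsymbol{\alpha}_i),\boldsymbol{\alpha}_i)$ into independence (hence, since there are $m$ of them, basis-hood) of the $\boldsymbol{\alpha}_i$ in $\mathbb{F}_q^m$. One direction is immediate: if the projections $\boldsymbol{\alpha}_1,\dots,\boldsymbol{\alpha}_m$ are linearly independent, so are their lifts, since a nontrivial linear dependence among the $\mathbf{d}_{\boldsymbol{\alpha}_i}$ would restrict in the last $m$ coordinates to a nontrivial dependence among the $\boldsymbol{\alpha}_i$. For the converse I would argue the same way, using that the projection $\mathbb{F}_q^k\to\mathbb{F}_q^m$ onto the last $m$ coordinates sends $\mathbf{d}_{\boldsymbol{\alpha}_i}\mapsto\boldsymbol{\alpha}_i$ and is linear, so independence of images forces independence of the sources. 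Because all the $\boldsymbol{\alpha}_i$ satisfy $f(\boldsymbol{\alpha}_i)=\boldsymbol{\omega}\cdot\boldsymbol{\alpha}_i$, the first coordinate is determined by the last $m$ and carries no extra independence, which is exactly what makes the equivalence clean.

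I expect no serious obstacle here; the proof is essentially identical in shape to Proposition~\ref{1}, with the only substantive change being the substitution of the affine level set $\{f(\mathbf{x})=\boldsymbol{\omega}\cdot\mathbf{x}\}$ for the linear one $\{f(\mathbf{x})=0\}$. The one point to state carefully is why the projection argument does not require any assumption relating $f$ to $\boldsymbol{\omega}$ beyond the equality on the chosen $\boldsymbol{\alpha}_i$: since the lift is a graph over the projection, independence is governed entirely by the $\mathbb{F}_q^m$-components. I would write this out in the two short implications above and then conclude by Lemma~\ref{sn} that the $m$-dimensional spanning condition holds precisely when such a basis $\{\boldsymbol{\alpha}_1,\dots,\boldsymbol{\alpha}_m\}$ with $f(\boldsymbol{\alpha}_i)=\boldsymbol{\omega}\cdot\boldsymbol{\alpha}_i$ exists.
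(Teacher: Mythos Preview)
Your proposal is correct and follows essentially the same approach as the paper's proof: identify $H(\mathbf{y},D)$ with $\{\mathbf{x}:f(\mathbf{x})=\boldsymbol{\omega}\cdot\mathbf{x}\}$, apply Lemma~\ref{sn}, and use that on this set $\mathbf{d}_{\boldsymbol{\alpha}}=(\boldsymbol{\omega}\cdot\boldsymbol{\alpha},\boldsymbol{\alpha})$ is the image of $\boldsymbol{\alpha}$ under a linear injection, so independence of the $\mathbf{d}_{\boldsymbol{\alpha}_i}$ and of the $\boldsymbol{\alpha}_i$ coincide. One small slip: your sentence for the converse (``independence of images forces independence of the sources'') actually restates the forward direction; the converse is really carried by your next remark that the first coordinate is a linear function of the last $m$, so a dependence among the $\boldsymbol{\alpha}_i$ lifts to one among the $\mathbf{d}_{\boldsymbol{\alpha}_i}$.
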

\begin{proof}
For $\mathbf{x}\in \mathbb{F}_q^m\backslash \{\mathbf{0}\}$, $$\mathbf{d_x}\in H(\mathbf{y},D)\Longleftrightarrow uf(\mathbf{x} )+\mathbf{v\cdot x}=0\Longleftrightarrow f(\mathbf{x})=-\frac{1}{u}\mathbf{v\cdot x}\Longleftrightarrow f(\mathbf{x})=\mathbf{\omega}\cdot\mathbf{x}.$$
By {\bf Lemma} \ref{sn}, $\mathbf{c}(u,\mathbf{v})$ is minimal if and only if there exist ${\alpha_1},...,{\alpha_m}\in \mathbb{F}_q^m$, such that for $1\leq i\leq m$, $f({\alpha_i})={\omega}\cdot{\alpha_i}$, and $\mathbf{d}_{\alpha_1},...,\mathbf{d}_{\alpha_m}$ are linearly independent over $\mathbb{F}_q$. Since for $1\leq i\leq m$, $f({\alpha_i})={\omega}\cdot{\alpha_i}$, it is easy to see $\mathbf{d}_{\alpha_1},...,\mathbf{d}_{\alpha_m}$ are linearly independent if and only if ${\alpha_1},...,{\alpha_m}$ are linearly independent over $\mathbb{F}_q$.
\end{proof}
\vskip 0.3mm

{\bf Case 3}: When $u= 0$ and $\mathbf{v}\neq \mathbf{0}$, we have the following proposition.
\begin{prop}\label{4}
 When $u= 0$ and $\mathbf{v}\neq \mathbf{0}$, then $\mathbf{c}(u,\mathbf{v})$ is minimal if and only if there exist ${\alpha_1},...,{\alpha_m}\in H(\mathbf{v})$, such that $\mathbf{d}_{\alpha_1},...,\mathbf{d}_{\alpha_m}$ are linearly independent over $\mathbb{F}_q$.
\end{prop}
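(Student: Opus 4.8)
The plan is to follow the same template as the proofs of Propositions \ref{1} and \ref{3}: first translate the membership condition $\mathbf{d_x}\in H(\mathbf{y},D)$ into an explicit condition on $\mathbf{x}$, and then read off minimality from the equivalence $(1)\Leftrightarrow(2)$ of Lemma \ref{sn}.

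So first I would put $\mathbf{y}=(u,\mathbf{v})=(0,\mathbf{v})$ and observe that, for $\mathbf{x}\in\mathbb{F}_q^m\backslash\{\mathbf{0}\}$, $\mathbf{d_x}\in H(\mathbf{y},D)\Longleftrightarrow uf(\mathbf{x})+\mathbf{v}\cdot\mathbf{x}=0\Longleftrightarrow \mathbf{v}\cdot\mathbf{x}=0\Longleftrightarrow \mathbf{x}\in H(\mathbf{v})$, where the middle step uses $u=0$. Hence $H(\mathbf{y},D)=\{\mathbf{d_x}:\mathbf{x}\in H(\mathbf{v})\backslash\{\mathbf{0}\}\}$ and $V(\mathbf{y},D)={\rm Span}\{\mathbf{d_x}:\mathbf{x}\in H(\mathbf{v})\backslash\{\mathbf{0}\}\}$. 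By Lemma \ref{sn}, $\mathbf{c}(u,\mathbf{v})$ is minimal if and only if $\dim V(\mathbf{y},D)=k-1=m$. Since $V(\mathbf{y},D)$ is the span of the set $H(\mathbf{y},D)$, its dimension equals $m$ exactly when $H(\mathbf{y},D)$ contains $m$ linearly independent vectors, i.e.\ when there exist $\alpha_1,\dots,\alpha_m\in H(\mathbf{v})$ with $\mathbf{d}_{\alpha_1},\dots,\mathbf{d}_{\alpha_m}$ linearly independent over $\mathbb{F}_q$. This is exactly the asserted condition, so no further work is needed.

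The one point worth flagging — and the reason the statement is phrased in terms of the augmented vectors $\mathbf{d}_{\alpha_i}$ rather than reduced to the $\alpha_i$, as was done in Cases 1 and 2 — is a dimension count. The $\alpha_i$ all lie in $H(\mathbf{v})$, a hyperplane of $\mathbb{F}_q^m$ of dimension $m-1$, so any $m$ of them are automatically linearly \emph{dependent}. Nonetheless the $m$ vectors $\mathbf{d}_{\alpha_i}=(f(\alpha_i),\alpha_i)$ can be linearly independent, because the extra coordinate $f(\alpha_i)$ supplies the missing degree of freedom; indeed $H(\mathbf{y})=\mathbb{F}_q\times H(\mathbf{v})$ is $m$-dimensional and must be spanned. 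Thus, unlike in Case 2 where the relation $f(\alpha_i)=\omega\cdot\alpha_i$ lets one trade independence of the $\mathbf{d}_{\alpha_i}$ for independence of the $\alpha_i$, here that reduction is unavailable and the condition must remain at the level of the $\mathbf{d}_{\alpha_i}$. This is the only substantive observation; once it is noted, the argument is a direct application of Lemma \ref{sn}, and I expect no real obstacle.
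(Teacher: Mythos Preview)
Your proposal is correct and follows essentially the same approach as the paper's own proof: translate $\mathbf{d_x}\in H(\mathbf{y},D)$ into $\mathbf{x}\in H(\mathbf{v})$ using $u=0$, then invoke Lemma~\ref{sn} to conclude. Your additional paragraph explaining why the condition cannot be reduced from the $\mathbf{d}_{\alpha_i}$ to the $\alpha_i$ (as in Cases~1 and~2) is a helpful clarification that the paper omits, but the core argument is identical.
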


\begin{proof}
For $\mathbf{x}\in \mathbb{F}_q^m\backslash \{\mathbf{0}\}$, $$\mathbf{d_x}\in H(\mathbf{y},D)\Longleftrightarrow uf(\mathbf{x} )+\mathbf{v\cdot x}=0\Longleftrightarrow \mathbf{v}\cdot \mathbf{x}\Longleftrightarrow \mathbf{x}\in H(\mathbf{v}).$$
By {\bf Lemma} \ref{sn}, $\mathbf{c}(u,\mathbf{v})$ is minimal if and only if there exist ${\alpha_1},...,{\alpha_m}\in H(\mathbf{v})$, such that  $\mathbf{d}_{\alpha_1},...,\mathbf{d}_{\alpha_m}$ are linearly independent over $\mathbb{F}_q$.
\end{proof}

In the following four sections, we will use the sufficient and necessary condition to construct four classes of minimal linear codes which  generalize the constructions in \cite{DHZ2018, HDZ2018, BB2019, MQ2019, MB2019}. In our four constructions, the choices of $f$ are much more flexible than theirs.

\section{\bf{The first construction of minimal linear codes}}\label{section the first construction}
In this section, we generalize the constructions in \cite{DHZ2018, HDZ2018, BB2019}.
To construct the minimal linear codes, we need the following three lemmas.

\begin{lem}\label{4111}
(1) If $q\geq 3$, then there exists $\{{\alpha_1},...,{\alpha_m}\}$ which is a basis of $\mathbb{F}_q^m$, such that $\mathbf{wt}(\alpha_i)=m$, $1\leq i\leq m;$\\
(2) If $q=2$ and $m$ is even, then there exists $\{{\alpha_1},...,{\alpha_m}\}$ which is a basis of $\mathbb{F}_q^m$ such that $\mathbf{wt}(\alpha_i)\geq m-1$, $1\leq i\leq m;$\\
(3)  If $q=2$ and $m$ is odd, then there exists $\{{\alpha_1},...,{\alpha_m}\}$ which is a basis of $\mathbb{F}_q^m$ such that $\mathbf{wt}(\alpha_i)\geq m-2$, $1\leq i\leq m.$
\end{lem}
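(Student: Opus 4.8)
The plan is to handle the three parts by explicit constructions, reducing the existence of a high-weight basis to elementary linear algebra over $\mathbb{F}_q$. Throughout I would write $\mathbf{1}=(1,\dots,1)\in\mathbb{F}_q^m$ and let $\mathbf{e}_1,\dots,\mathbf{e}_m$ denote the standard basis.

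For part (1) ($q\ge 3$), rather than exhibiting a single invertible matrix all of whose entries are nonzero (which forces an awkward determinant computation and fails in edge cases such as $q=3$, $m\equiv 2\pmod 3$), I would prove the cleaner statement that the full-weight vectors span the whole space and then extract a basis from them. Since $q\ge 3$, choose $a\in\mathbb{F}_q\setminus\{0,1\}$. For each $i$, both $\mathbf{1}$ and $\mathbf{w}_i:=\mathbf{1}+(a-1)\mathbf{e}_i$ (which equals $\mathbf{1}$ except that its $i$-th coordinate is $a$) have all coordinates nonzero, hence weight $m$; and $\mathbf{1}-\mathbf{w}_i=(1-a)\mathbf{e}_i$ with $1-a\ne 0$. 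Thus each $\mathbf{e}_i$ lies in the span of weight-$m$ vectors, so these vectors span $\mathbb{F}_q^m$. A spanning set contains a basis, and every member has weight $m$; extracting such a basis proves (1).

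For $q=2$ the only weight-$m$ vector is $\mathbf{1}$, so I would instead build bases from the weight-$(m-1)$ vectors $\boldsymbol\alpha_i:=\mathbf{1}-\mathbf{e}_i$ (zero in coordinate $i$). For part (2) ($m$ even) take $\{\boldsymbol\alpha_1,\dots,\boldsymbol\alpha_m\}$, whose matrix is $J-I$ with $J$ the all-ones matrix and $I$ the identity. Testing a relation $\sum_{i\in S}\boldsymbol\alpha_i=\mathbf{0}$ gives $|S|\,\mathbf{1}=\mathbf{1}_S$ over $\mathbb{F}_2$, where $\mathbf{1}_S$ is the indicator vector of $S$; comparing supports forces $S=\emptyset$ when $m$ is even (equivalently $\det(J-I)\equiv m-1\equiv 1\pmod 2$), so the set is a basis, with every weight equal to $m-1$. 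For part (3) ($m$ odd) these same $m$ vectors are dependent, since $\sum_{i}\boldsymbol\alpha_i=\mathbf{0}$; I would instead take $\{\boldsymbol\alpha_1,\dots,\boldsymbol\alpha_{m-1},\mathbf{1}\}$. From $\mathbf{1}$ and each $\boldsymbol\alpha_i$ one recovers $\mathbf{e}_i$ for $i\le m-1$, and then $\mathbf{e}_m=\mathbf{1}-\sum_{i<m}\mathbf{e}_i$, so these $m$ vectors span $\mathbb{F}_2^m$ and form a basis. Their weights are $m-1$ and $m$, so in fact $\mathbf{wt}(\boldsymbol\alpha_i)\ge m-1\ge m-2$, which is stronger than the stated bound.

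The only genuinely delicate point is the linear-independence (invertibility) verification in the $q=2$ cases, where the parity of $m$ decides whether $J-I$ is singular; this is precisely why parts (2) and (3) must be separated, and the odd case is repaired by trading one weight-$(m-1)$ generator for $\mathbf{1}$. Part (1) carries no such obstruction once it is phrased as a spanning statement.
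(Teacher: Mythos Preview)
Your proof is correct. Part~(2) coincides with the paper's argument (both take the rows of $J-I$, equivalently $E_m-A$ over $\mathbb{F}_2$, and check the determinant is $m-1\equiv 1$). Parts~(1) and~(3), however, differ from the paper in instructive ways. For~(1) the paper writes down the matrix $bE_m-A$ with $A=\mathbf{1}^{T}\mathbf{1}$, computes $\det=b^{m-1}(b-m)$, and must then split into subcases (including a separate fix when $q=3$ and $m\equiv 2\pmod 3$); your spanning argument via $\mathbf{1}$ and $\mathbf{1}+(a-1)\mathbf{e}_i$ sidesteps all of this at the cost of being nonconstructive (you only extract a basis from a spanning set). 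For~(3) the paper builds a different matrix $E_m-B$ with a zero last column and obtains rows of weight $\ge m-2$; your choice $\{\boldsymbol\alpha_1,\dots,\boldsymbol\alpha_{m-1},\mathbf{1}\}$ is simpler and in fact yields weights $\ge m-1$, strictly improving the stated bound. Either route suffices for the downstream application (Theorem~\ref{44}), since only the weight thresholds $m-1$ and $m-2$ are used there.
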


\begin{proof}
 Let $\mathbf{1}=(1,...,1)$ and $A=\mathbf{1}^{{T}}\mathbf{1}\in M_{m\times m}(\mathbb{F}_q)$.  \\
 (1) For any $b\in \mathbb{F}_q$ we set
\begin{equation}
bE_m-A={\left[\begin{array}{cccc}
b-1 & -1 & ...&-1\\
-1 & b-1 &...& -1\\
...&...&...&...\\
-1&-1&... & -1
\end{array}
\right ]}={\left[\begin{array}{c}
\alpha_1\\
 \alpha_2\\
...\\
 \alpha_m
\end{array}
\right ]}.
\end{equation}
It is eary to see,
$$|bE_m-A|=b^{m-1}(b-m)\in \mathbb{F}_q.$$

When $q> 3$, we can take $b\neq 0,1,m$, then $\alpha_1,...,\alpha_m$ are linearly independent over $\mathbb{F}_q$ and $\mathbf{wt}(\alpha_i)= m$, $1\leq i\leq m.$

When $q=3$ and $m\equiv 0,1\ ({\rm mod}\ q)$, we can take $b=2$, the result follows.

When  $q=3$ and $m\equiv 2\ ({\rm mod}\ q)$, we consider the following matrix:
\begin{equation}
2E_m-A-2\mathbf{e}_1^{{T}}\mathbf{e}_1={\left[\begin{array}{ccccc}
-1 & -1 & -1&...&-1\\
-1 & 1 &-1&...& -1\\
-1 & -1 &1&...& -1\\
...&...&...&...&...\\
-1&-1&-1&... & 1
\end{array}
\right ]}={\left[\begin{array}{c}
\alpha_1\\
 \alpha_2\\
...\\
 \alpha_m
\end{array}
\right ]}.
\end{equation}

It is easy to see, $$|2E_m-A-2\mathbf{e}_1^{T}\mathbf{e}_1|=-2^{m-1}\neq 0.$$

Hence $\alpha_1,...,\alpha_m$ are linearly independent over $\mathbb{F}_q$ and $\mathbf{wt}(\alpha_i)= m$, $1\leq i\leq m.$

(2) When $q=2$ and $m$ is even, we set
\begin{equation}
E_m-A={\left[\begin{array}{cccc}
0 & -1 & ...&-1\\
-1 & 0 &...& -1\\
...&...&...&...\\
-1&-1&... & 0
\end{array}
\right ]}={\left[\begin{array}{c}
\alpha_1\\
 \alpha_2\\
...\\
 \alpha_m
\end{array}
\right ]}.
\end{equation}
It is eary to see,
$$|E_m-A|=1-m=1\neq 0.$$
Then $\alpha_1,...,\alpha_m$ are linearly independent over $\mathbb{F}_q$ and $\mathbf{wt}(\alpha_i)\geq m-1$, $1\leq i\leq m.$

(3) When $q=2$ and $m$ is odd, we set
\begin{equation}
B={\left[\begin{array}{c}1\\1\\...\\1\end{array}
\right ]}{\left[\begin{array}{cccc}1&...&1&0\end{array}
\right ]}={\left[\begin{array}{cccc}
1 & ... & 1 &0\\
1 & ...& 1&0\\
...&...&...&...\\
1 & ...& 1&0\\
\end{array}
\right ]}.
\end{equation}
Then \begin{equation}
E_m-B={\left[\begin{array}{ccccc}
0 & 1 & ...&1&0\\
1 & 0 &...&1&0\\
...&...&...&...\\
1&1&...&1 &1
\end{array}
\right ]}={\left[\begin{array}{c}
\alpha_1\\
 \alpha_2\\
...\\
 \alpha_m
\end{array}
\right ]}.
\end{equation}
It is eary to see,
$$|E_m-B|=m=1\neq 0.$$

Then $\alpha_1,...,\alpha_m$ are linearly independent over $\mathbb{F}_q$ and $\mathbf{wt}(\alpha_i)\geq m-2$, $1\leq i\leq m.$
\end{proof}

\begin{lem}\label{422}
For any $\omega\in \mathbb{F}_q^m\backslash \{\mathbf{0}\}$, there exists $\{{\beta_1},...,{\beta_m}\}$ which is a basis of $\mathbb{F}_q^m$, such that $1\leq\mathbf{wt}(\beta_i)\leq 2$ and $\omega\cdot \beta_i=1$, $1\leq i\leq m.$
\end{lem}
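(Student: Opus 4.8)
The plan is to build the basis explicitly by separating the coordinate positions according to whether the corresponding entry of $\omega$ vanishes. Write $\omega=(\omega_1,\dots,\omega_m)$ and let $S=\{\,j:\omega_j\neq 0\,\}$ and $Z=\{\,j:\omega_j=0\,\}$; since $\omega\neq\mathbf{0}$ the set $S$ is nonempty, so I may fix an index $\ell\in S$. The condition $\omega\cdot\beta_i=1$ says precisely that each $\beta_i$ lies on the affine hyperplane $\{\,\mathbf{x}:\omega\cdot\mathbf{x}=1\,\}$, and the weight constraint restricts us to the scaled coordinate vectors $c\,\mathbf{e}_j$ and to their pairwise combinations. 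The key observation is that one fixed nonzero entry $\omega_\ell$ already lets us ``anchor'' every zero coordinate onto the hyperplane with a single weight-$2$ vector.

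Concretely, I would take the following $m$ vectors. For each $j\in S$, set $\beta_j=\omega_j^{-1}\mathbf{e}_j$; this has weight $1$ and satisfies $\omega\cdot\beta_j=\omega_j^{-1}\omega_j=1$. For each $j\in Z$, set $\beta_j=\omega_\ell^{-1}\mathbf{e}_\ell+\mathbf{e}_j$; since $\ell\in S$ and $j\in Z$ are distinct and both coefficients are nonzero, this has weight $2$, and $\omega\cdot\beta_j=\omega_\ell^{-1}\omega_\ell+1\cdot\omega_j=1+0=1$. Thus all $m$ vectors automatically obey $1\le\mathbf{wt}(\beta_i)\le 2$ together with the inner-product condition, so that part of the statement is immediate by construction, and the only remaining task is to verify that the $m$ vectors are linearly independent.

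For the independence I would examine a vanishing combination $\sum_{j\in S}c_j\beta_j+\sum_{j\in Z}d_j\beta_j=\mathbf{0}$ and read it off coordinate by coordinate. The point is that every weight-$2$ vector $\beta_j$ (for $j\in Z$) is the only chosen vector with a nonzero entry in position $j$: the vectors indexed by $S$ are supported inside $S$, which is disjoint from $Z$, and each position of $Z$ occurs in exactly one weight-$2$ vector. Hence reading coordinate $k\in Z$ gives $d_k=0$ for every $k\in Z$; what is left is $\sum_{j\in S}c_j\omega_j^{-1}\mathbf{e}_j=\mathbf{0}$, which forces all $c_j=0$ because the $\mathbf{e}_j$ with $j\in S$ are independent. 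Therefore $\{\beta_1,\dots,\beta_m\}$ is a basis of $\mathbb{F}_q^m$.

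I do not expect a genuine obstacle here: once the coordinates are split along the support of $\omega$, both the inner-product equality and the weight bound hold by design, and the independence reduces to the observation that each zero coordinate receives its own ``private'' position through the weight-$2$ vectors. The only points needing a little care are the degenerate extremes, and both are handled by the same argument without change: when $Z=\emptyset$ (that is, $\omega$ has full support) only the weight-$1$ vectors appear and yield a scaled standard basis, while when $\lvert S\rvert=1$ the single anchor $\ell$ serves all the weight-$2$ vectors simultaneously.
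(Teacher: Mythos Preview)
Your proof is correct and follows essentially the same explicit-construction approach as the paper: fix an anchor index $\ell$ with $\omega_\ell\neq 0$, build weight-$\le 2$ vectors satisfying $\omega\cdot\beta_i=1$, and verify linear independence directly. The only cosmetic difference is that for indices $j$ with $\omega_j\neq 0$ the paper still uses the anchor (taking $\beta_j=\mathbf{e}_j+\omega_\ell^{-1}(1-\omega_j)\mathbf{e}_\ell$), whereas you use the simpler weight-$1$ vector $\omega_j^{-1}\mathbf{e}_j$; both choices work equally well.
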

\begin{proof}
Let $\omega=(w_1,...,w_m)$. Since $\omega\neq \mathbf{0}$, there exists $w_{i_0}\neq 0$. Let
$$
 \beta_i:=\left\{
            \begin{array}{ll}
             w_{i_0}^{-1}{\bf e}_{i_0}, & \hbox{if\ $i=i_0;$}\\
              {\bf e}_{i}+ w_{i_0}^{-1}(1-w_i){\bf e}_{i_0}, & \hbox{if\ $i\neq i_0.$}
           \end{array}
          \right.
$$
Then $1\leq\mathbf{wt}(\beta_i)\leq 2$ and $\omega\cdot \beta_i=1$.

It is easy to see that  ${\beta_1},...,{\beta_m}$ are linearly independent and they constitute a basis of  $\mathbb{F}_q^m$.
\end{proof}

\begin{lem}\label{43}
For any $\mathbf{v}\in \mathbb{F}_q^m\backslash \{\mathbf{0}\}$, there exists $\{{\alpha_1},...,{\alpha_{m-1}}\}$ which is a basis of $H(\mathbf{v})$, such that $1\leq\mathbf{wt}(\alpha_i)\leq 2$, $1\leq i\leq m-1.$
\end{lem}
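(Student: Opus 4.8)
The plan is to exhibit an explicit basis of the hyperplane $H(\mathbf{v})$ consisting entirely of vectors of weight $1$ or $2$, in the same spirit as the construction in {\bf Lemma} \ref{422}. Write $\mathbf{v}=(v_1,\dots,v_m)$. Since $\mathbf{v}\neq \mathbf{0}$, I would first fix an index $i_0$ with $v_{i_0}\neq 0$, and split the coordinate set $\{1,\dots,m\}$ into the support $S=\{i:v_i\neq 0\}$ and its complement $Z=\{i:v_i=0\}$.

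Next I would write down two families of vectors. For each $i\in Z$, the standard basis vector ${\bf e}_i$ lies in $H(\mathbf{v})$ because ${\bf e}_i\cdot\mathbf{v}=v_i=0$, and it has weight $1$. For each $j\in S$ with $j\neq i_0$, I would take $\alpha_j:=v_{i_0}{\bf e}_j-v_j{\bf e}_{i_0}$; a direct computation gives $\alpha_j\cdot\mathbf{v}=v_{i_0}v_j-v_jv_{i_0}=0$, so $\alpha_j\in H(\mathbf{v})$, and since $v_{i_0},v_j\neq 0$ it has weight exactly $2$. Together these two families produce $|Z|+(|S|-1)=m-1$ vectors, which is precisely $\dim H(\mathbf{v})$.

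It then remains only to verify that these $m-1$ vectors are linearly independent, after which they automatically form a basis by the dimension count. This is the sole step requiring an argument, and I expect it to be the main (though mild) obstacle. The key observation is that each listed vector owns a \emph{private} coordinate where no other listed vector is supported: the vector ${\bf e}_i$ (for $i\in Z$) is the only one nonzero in position $i$, and the vector $\alpha_j$ (for $j\in S\setminus\{i_0\}$) is the only one nonzero in position $j$ (the $\alpha_j$ all share the coordinate $i_0$, but their indices $j$ are distinct and disjoint from $Z$). Setting a vanishing linear combination equal to $\mathbf{0}$ and reading off these private coordinates forces each coefficient to be zero, which gives independence.

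Finally, I would note that the boundary case $|S|=1$ (that is, $\mathbf{wt}(\mathbf{v})=1$) is covered automatically: there are then no weight-$2$ vectors, and the $m-1$ standard basis vectors ${\bf e}_i$ with $i\in Z$ already constitute the desired basis of $H(\mathbf{v})$. This completes the proposed argument.
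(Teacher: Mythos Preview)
Your proposal is correct and is essentially the same construction as the paper's: the paper defines $\beta_i={\bf e}_i-v_{i_0}^{-1}v_i{\bf e}_{i_0}$ for all $i\neq i_0$ in one stroke, which specialises to ${\bf e}_i$ when $i\in Z$ and to a scalar multiple of your $\alpha_j$ when $j\in S\setminus\{i_0\}$. The only difference is that you separate the two cases explicitly and supply the linear-independence argument in full, whereas the paper absorbs both cases into a single formula and omits the verification as ``easy to see.''
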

\begin{proof}
Let $\mathbf{v}=(v_1,...,v_m)$. Since $\mathbf{v}\neq \mathbf{0}$, there exists $v_{i_0}\neq 0$. For $i\neq i_0$, let
 $$\beta_i={\bf e}_{i}-v_{i_0}^{-1}v_i{\bf e}_{i_0}.$$
 Then $\beta_i\in H(\mathbf{v})$ and $1\leq\mathbf{wt}(\beta_i)\leq 2$. It is easy to see that $\{{\beta_i}:i\neq i_0\}$ are linearly independent. For $1\leq i\leq m-1$, let
$$
 \alpha_i:=\left\{
            \begin{array}{ll}
             \beta_i, & \hbox{if\ $i< i_0;$}\\
              \beta_{i+1}, & \hbox{if\ $i\geq i_0.$}
           \end{array}
          \right.
$$
Then $\{{\alpha_1},...,{\alpha_{m-1}}\}$  is a basis of $H(\mathbf{v})$  and $1\leq\mathbf{wt}(\alpha_i)\leq 2$, $1\leq i\leq m-1.$
\end{proof}

Now we start to construct the minimal linear codes.
\begin{thm}\label{40}
Let $q>2$, $m\geq 3$ and $f:\mathbb{F}_q^m\rightarrow \mathbb{F}_q$ be a function. If $f(\mathbf{x})$ satisfies the following two conditions:\\
(1) if  $1\leq \mathbf{wt(x)}\leq 2$, then for any $a\in \mathbb{F}_q^*$, $f(a\mathbf{x})=f(\mathbf{x})\neq 0$;\\
(2) if  $\mathbf{wt(x)}=m,$ then $f(\mathbf{x})=0$,\\
then  $\mathcal{C}_f=\mathcal{C}(D_f)$ is minimal.
\end{thm}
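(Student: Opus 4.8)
The plan is to verify that $\mathcal{C}_f$ is minimal by checking, via Lemma~\ref{sn1}, that every nonzero codeword $\mathbf{c}(u,\mathbf{v})$ is minimal. Since every $\mathbf{y}=(u,\mathbf{v})\in\mathbb{F}_q^k\setminus\{\mathbf{0}\}$ falls into exactly one of the three cases analyzed in Propositions~\ref{1}, \ref{3}, and \ref{4}, it suffices to verify the corresponding existence condition in each case using the structural hypotheses (1) and (2) on $f$, together with the three preparatory lemmas. The key observation is that Lemmas~\ref{4111}, \ref{422}, and \ref{43} were tailored precisely to produce bases whose basis vectors have the weight restrictions ($\mathbf{wt}=m$, or $1\le\mathbf{wt}\le 2$) on which the hypotheses on $f$ are imposed.

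First I would handle Case~1 ($u\neq 0$, $\mathbf{v}=\mathbf{0}$): by Proposition~\ref{1} I need a basis $\{\alpha_1,\dots,\alpha_m\}$ of $\mathbb{F}_q^m$ with $f(\alpha_i)=0$ for all $i$. Since $q>2$, part~(1) of Lemma~\ref{4111} supplies a basis with every $\mathbf{wt}(\alpha_i)=m$, and hypothesis~(2) on $f$ then forces $f(\alpha_i)=0$, so minimality holds. Next, for Case~2 ($u\neq 0$, $\mathbf{v}\neq\mathbf{0}$), set $\omega=-\frac1u\mathbf{v}$; Proposition~\ref{3} asks for a basis $\{\beta_1,\dots,\beta_m\}$ with $f(\beta_i)=\omega\cdot\beta_i$. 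Here I would apply Lemma~\ref{422} to $\omega$ (which is nonzero because $\mathbf{v}\neq\mathbf{0}$, $u\neq 0$) to get a basis with $1\le\mathbf{wt}(\beta_i)\le 2$ and $\omega\cdot\beta_i=1$; hypothesis~(1) then gives $f(\beta_i)\neq 0$ but does not by itself guarantee $f(\beta_i)=1=\omega\cdot\beta_i$, so this is the delicate point (see below). Finally, Case~3 ($u=0$, $\mathbf{v}\neq\mathbf{0}$): Proposition~\ref{4} requires $m$ vectors in $H(\mathbf{v})$ that are linearly independent; but $\dim H(\mathbf{v})=m-1<m$, so no such independent set of size $m$ can exist, which seems to make this case fail outright.

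The main obstacle is exactly the tension surfacing in Cases~2 and 3. In Case~2, Lemma~\ref{422} only normalizes $\omega\cdot\beta_i=1$, whereas hypothesis~(1) controls $f$ on weight-$\le 2$ inputs only up to being a nonzero constant on each scalar ray; to match $f(\beta_i)=\omega\cdot\beta_i$ one must exploit the scaling freedom, rescaling each $\beta_i$ by a suitable $a_i\in\mathbb{F}_q^*$ so that the common value $f(\beta_i)$ (which is $a$-invariant by~(1)) equals $\omega\cdot(a_i\beta_i)$; since rescaling preserves both the weight and the linear independence, this should be arrangeable, and I expect the proof to insert this normalization argument. In Case~3, the apparent contradiction signals that Proposition~\ref{4} must be read with the understanding that $\mathbf{d}_{\alpha_i}=(f(\alpha_i),\alpha_i)\in\mathbb{F}_q^{m+1}$ live in the larger space, so $m$ of them can indeed be independent even though the $\alpha_i$ themselves lie in the $(m-1)$-dimensional $H(\mathbf{v})$; the independence is achieved by using the first coordinate $f(\alpha_i)$. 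Thus I would take the basis of $H(\mathbf{v})$ from Lemma~\ref{43} (weights in $\{1,2\}$, so $f\neq 0$ there) and adjoin one more vector $\alpha_m\in H(\mathbf{v})$ of weight $\le 2$, arguing that the nonvanishing of $f$ on these low-weight vectors lets the augmented vectors $\mathbf{d}_{\alpha_i}$ attain full rank $m$. Verifying this rank condition rigorously — reconciling the $(m-1)$-dimensional constraint on the $\alpha_i$ with the needed rank $m$ of the lifted vectors — is where I expect the real work to lie.
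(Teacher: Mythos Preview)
Your outline is exactly the paper's proof. In Case~2 the rescaling you anticipate is simply $\alpha_i:=f(\beta_i)\beta_i$: then $f(\alpha_i)=f(\beta_i)$ by the scalar-invariance in hypothesis~(1), while $\omega\cdot\alpha_i=f(\beta_i)(\omega\cdot\beta_i)=f(\beta_i)$, so the two agree and $\{\alpha_i\}$ is still a basis since each $f(\beta_i)\neq 0$.

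For Case~3, the concrete choice you are missing is $\alpha_m:=a\alpha_1$ for any fixed $a\in\mathbb{F}_q\setminus\{0,1\}$ (this is precisely where $q>2$ enters). Then $\alpha_m\in H(\mathbf{v})$ and has the same weight as $\alpha_1$, so hypothesis~(1) applies: $\mathbf{d}_{\alpha_m}=(f(a\alpha_1),a\alpha_1)=(f(\alpha_1),a\alpha_1)$, whereas $a\,\mathbf{d}_{\alpha_1}=(af(\alpha_1),a\alpha_1)$. Since $f(\alpha_1)\neq 0$ and $a\neq 1$, these differ in the first coordinate, and a short linear-dependence computation (reducing first on the last $m$ coordinates, then on the first) gives rank $m$. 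Note that \emph{both} halves of hypothesis~(1)---nonvanishing and scalar-invariance---are needed here, not just the nonvanishing you cited.
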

\begin{proof}
By condition (1) in this theorem, it is easy to see $f(\mathbf{x})\neq\mathbf{\omega\cdot x}$ for any $\mathbf{\omega}\in \mathbb{F}_q^m$, then by {\bf Corollary \ref{332}}, $\mathcal{C}_f$ is a $[q^m-1,m+1]_q$ linear code.

{\bf Case 1:} When $u\neq 0$ and $\mathbf{v}=\mathbf{0}$, by {\bf Lemma \ref{4111}}(1), there exists  $\{{\alpha_1},...,{\alpha_m}\}$ which is a basis of $\mathbb{F}_q^m$ such that $\mathbf{wt}(\alpha_i)=m$, $1\leq i\leq m.$ By condition (2),  we can get $f(\alpha_i)=0$,  $1\leq i\leq m.$ By {\bf Proposition \ref{1}}, $\mathbf{c}(u,\mathbf{v})$ is minimal.

{\bf Case 2:} When $u\neq 0$ and $\mathbf{v}\neq \mathbf{0}$, let $\mathbf{\omega}=-\frac{1}{u}\mathbf{v}$. By {\bf Lemma \ref{422}}, there exist $\{{\beta_1},...,{\beta_m}\}$ which is a basis of $\mathbb{F}_q^m$, such that $1\leq\mathbf{wt}(\beta_i)\leq 2$ and $\omega\cdot \beta_i=1$, $1\leq i\leq m.$ Let $\alpha_i=f(\beta_i)\beta_i$. By condition (1), we have $f(\beta_i)\neq 0$, then $1\leq\mathbf{wt}(\alpha_i)\leq 2$ and $\{{\alpha_1},...,{\alpha_m}\}$ is a basis of $\mathbb{F}_q^m$. Thus by  condition (1) in  this theorem, we have $f(\alpha_i)=f(\beta_i)=f(\beta_i)(\omega\cdot \beta_i)=\omega \cdot (f(\beta_i)\beta_i)=\omega \cdot \alpha_i$. By {\bf Proposition \ref{3}}, $\mathbf{c}(u,\mathbf{v})$ is minimal.

{\bf Case 3:}  When $u= 0$ and $\mathbf{v}\neq \mathbf{0}$,  by {\bf Lemma \ref{43}}, there exists  $\{{\alpha_1},...,{\alpha_{m-1}}\}$ which is a basis of $H(\mathbf{v})$, such that $1\leq\mathbf{wt}(\alpha_i)\leq 2$, $1\leq i\leq m-1.$ Let $\alpha_m =a\alpha_1$, $a\in \mathbb{F}_q\setminus\{0,1\}$. Then $\alpha_m\in H(\mathbf{v})$.
Now we prove that  $\mathbf{d}_{\alpha_1},...,\mathbf{d}_{\alpha_m}$ are linearly independent. Let $\sum_{i=1}^m k_i\mathbf{d}_{\alpha_i}=\mathbf{0}$. It is equivalent to
$(\sum_{i=1}^m k_i f({\alpha_i}), \sum_{i=1}^m k_i \alpha_i)=(0, \mathbf{0})$. We get
$$\mathbf{0}=\sum_{i=1}^{m} k_i \alpha_i=\sum_{i=2}^{m-1} k_i \alpha_i+(k_1+ak_m)\alpha_1.$$
Since $\alpha_1,...,\alpha_{m-1}$ are linearly independent, $k_i=0$, $i=2,...,m-1$ and $k_1+ak_m=0$.
Then $$0=\sum_{i=1}^m k_i f({\alpha_i})=k_1f(\alpha_1)+k_mf(\alpha_m)=k_1f(\alpha_1)+k_mf(a\alpha_1).$$
Since $1\leq\mathbf{wt}(\alpha_1)\leq 2$ and $a\neq 0,$ by condition (1), we get $f(a\alpha_1)=f(\alpha_1)$, and then
$$0=k_1f(\alpha_1)+k_mf(\alpha_1)=k_m(1-a)f(\alpha_1)$$
Since $a\neq 1$ and $f(\alpha_1)\neq 0$, we get $k_m=0$ and $k_1=0$. Thus $\mathbf{d}_{\alpha_1},...,\mathbf{d}_{\alpha_m}$ are linearly independent.
By {\bf Proposition \ref{4}}, $\mathbf{c}(u,\mathbf{v})$ is minimal.

This completes the proof.
\end{proof}

\begin{thm}\label{44}
Let $q=2$, $m\geq 4$ and $f:\mathbb{F}_q^m\rightarrow \mathbb{F}_q$ be a function. If $f(\mathbf{x})$ satisfies the following two conditions:\\
(1) when  $1\leq \mathbf{wt(x)}\leq 2$,  $f(\mathbf{x})\neq 0, (i.e. f(\mathbf{x})=1)$;\\
(2) when $m$ is even and $\mathbf{wt(x)}\geq m-1$ or when $m$ is odd and $\mathbf{wt(x)}\geq m-2,\ f(\mathbf{x})=0$,\\
then  $\mathcal{C}_f=\mathcal{C}(D_f)$ is minimal.
\end{thm}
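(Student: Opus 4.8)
The plan is to run the same three-case analysis as in Theorem~\ref{40}, substituting the binary parts of Lemmas~\ref{4111}, \ref{422}, \ref{43} and appealing to Propositions~\ref{1}, \ref{3}, \ref{4}; the only step requiring a genuinely new idea is Case~3, because the scalar trick used for $q>2$ collapses over $\mathbb{F}_2$. Before the case analysis I would first check that $\mathcal{C}_f$ really has dimension $m+1$: if $f(\mathbf{x})=\omega\cdot\mathbf{x}$ held identically, then evaluating at the weight-one vectors would force $\omega_i=f(\mathbf{e}_i)=1$ for all $i$, hence $\omega=\mathbf{1}$, while evaluating at a weight-two vector would give $f(\mathbf{e}_i+\mathbf{e}_j)=\omega_i+\omega_j=0$, contradicting condition~(1). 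So Corollary~\ref{332} applies.

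For Case~1 ($u\neq0$, $\mathbf{v}=\mathbf{0}$, so $u=1$) I would split on the parity of $m$ and use Lemma~\ref{4111}(2) or~(3) to produce a basis whose vectors have weight at least $m-1$ (resp.\ $m-2$); condition~(2) then forces $f(\alpha_i)=0$ on this basis, and Proposition~\ref{1} yields minimality. For Case~2 ($u=1$, $\mathbf{v}\neq\mathbf{0}$) I note that over $\mathbb{F}_2$ the vector $\omega=-\tfrac1u\mathbf{v}$ equals $\mathbf{v}$; Lemma~\ref{422} supplies a basis $\{\beta_i\}$ with $1\le\mathbf{wt}(\beta_i)\le2$ and $\omega\cdot\beta_i=1$, and condition~(1) gives $f(\beta_i)=1=\omega\cdot\beta_i$ directly, so Proposition~\ref{3} applies with $\alpha_i=\beta_i$ (the dilation $\alpha_i=f(\beta_i)\beta_i$ from Theorem~\ref{40} is trivial here).

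The hard part is Case~3 ($u=0$, $\mathbf{v}\neq\mathbf{0}$). Rather than adjoin $\alpha_m=a\alpha_1$ as for $q>2$, I would reformulate Proposition~\ref{4} as a non-linearity statement. Fixing a basis $\alpha_1,\dots,\alpha_{m-1}$ of $H(\mathbf{v})$ (Lemma~\ref{43}), the vectors $\mathbf{d}_{\alpha_1},\dots,\mathbf{d}_{\alpha_{m-1}}$ are automatically independent because their last $m$ coordinates are. For any candidate $\alpha_m=\sum_i c_i\alpha_i\in H(\mathbf{v})$ one has $\mathbf{d}_{\alpha_m}+\sum_i c_i\mathbf{d}_{\alpha_i}=\bigl(f(\alpha_m)+\sum_i c_i f(\alpha_i),\,\mathbf{0}\bigr)$, so $\mathbf{d}_{\alpha_1},\dots,\mathbf{d}_{\alpha_m}$ are independent precisely when $f(\alpha_m)\neq g(\alpha_m)$, where $g$ is the $\mathbb{F}_2$-linear form on $H(\mathbf{v})$ determined by $g(\alpha_i)=f(\alpha_i)$. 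Hence Case~3 reduces to showing that $f|_{H(\mathbf{v})}$ is not $\mathbb{F}_2$-linear.

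To prove non-linearity I would test a putative linear form $\eta\cdot\mathbf{x}=f(\mathbf{x})$ on $H(\mathbf{v})$ against the weight-one and weight-two vectors it contains, with $S=\mathrm{Suppt}(\mathbf{v})$. Each $\mathbf{e}_i$ with $i\notin S$ lies in $H(\mathbf{v})$ and forces $\eta_i=f(\mathbf{e}_i)=1$; each $\mathbf{e}_i+\mathbf{e}_j$ with $i,j\notin S$ then forces $\eta_i+\eta_j=1$, contradicting $\eta_i=\eta_j=1$ whenever $|S^c|\ge2$; and each $\mathbf{e}_i+\mathbf{e}_j$ with $i,j\in S$ forces $\eta_i\neq\eta_j$, which is impossible for three indices whenever $|S|\ge3$. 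Since every nonzero $\mathbf{v}$ satisfies $\mathbf{wt}(\mathbf{v})\le m-2$ or $\mathbf{wt}(\mathbf{v})\ge3$ once $m\ge4$ (the excluded band $m-1\le\mathbf{wt}(\mathbf{v})\le2$ is empty), a contradiction always occurs, $f|_{H(\mathbf{v})}$ is non-linear, and Proposition~\ref{4} gives minimality. This counting is exactly where $m\ge4$ is needed, and I expect it to be the main obstacle, since it replaces the now-unavailable scalar argument of the $q>2$ case.
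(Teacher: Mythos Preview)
Your proposal is correct. Cases~1 and~2 match the paper's proof essentially verbatim, and your preliminary dimension check via Corollary~\ref{332} is a clean addition that the paper leaves implicit.

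The genuine divergence is in Case~3. The paper does \emph{not} reformulate the problem as non-linearity of $f|_{H(\mathbf{v})}$; instead it works constructively. It takes the explicit basis $\alpha_i=\mathbf{e}_i+v_i\mathbf{e}_{i_0}$ of $H(\mathbf{v})$ (for $i\neq i_0$, with $v_{i_0}=1$), so that each $\alpha_i$ has weight at most $2$ and $f(\alpha_i)=1$ by condition~(1). It then adjoins a single \emph{high-weight} vector: $\alpha_{i_0}=\sum_{i\neq i_0}\alpha_i$ when $m$ is even, and $\alpha_{i_0}=\sum_{i\neq i_0,i_1}\alpha_i$ when $m$ is odd. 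In each case $\mathbf{wt}(\alpha_{i_0})\ge m-1$ (resp.\ $m-2$), so condition~(2) forces $f(\alpha_{i_0})=0$, and the linear-independence check for $\mathbf{d}_{\alpha_1},\dots,\mathbf{d}_{\alpha_m}$ reduces to $(m-1)k_{i_0}=0$ (resp.\ $(m-2)k_{i_0}=0$), which over $\mathbb{F}_2$ gives $k_{i_0}=0$ precisely because the parity was arranged to make the count odd.

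Your non-linearity argument is more conceptual and, notably, uses only condition~(1) in Case~3, whereas the paper invokes condition~(2) there as well; your route thus shows that condition~(2) is needed solely for Case~1. The paper's route, on the other hand, is fully explicit (it names the witness $\alpha_{i_0}$) and makes the role of the parity hypothesis in condition~(2) visible through the direct computation $(m-1)k_{i_0}$ versus $(m-2)k_{i_0}$.
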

\begin{proof}

{\bf Case 1:} When $u\neq 0$ and $\mathbf{v}=\mathbf{0}$. If $m$ is even,  by {\bf Lemma \ref{4111}}(2), there exists  $\{{\alpha_1},...,{\alpha_{m}}\}$ which is a basis of $\mathbb{F}_q^m$ satisfy $\mathbf{wt}(\alpha_i)\geq m-1$, $1\leq i\leq m.$ By condition (2),  we can get $f(\alpha_i)=0$,  $1\leq i\leq m.$
 If $m$ is odd,  by {\bf Lemma \ref{4111}}(3), there exists  $\{{\alpha_1},...,{\alpha_{m}}\}$ which is a basis of $\mathbb{F}_q^m$ satisfy $\mathbf{wt}(\alpha_i)\geq m-2$, $1\leq i\leq m.$ By condition (2),  we can get $f(\alpha_i)=0$,  $1\leq i\leq m.$
 Then by {\bf Proposition \ref{1}}, $\mathbf{c}(u,\mathbf{v})$ is minimal.

{\bf Case 2:} When $u\neq 0$ and $\mathbf{v}\neq \mathbf{0}$, let $\mathbf{\omega}=-\frac{1}{u}\mathbf{v}$. By {\bf Lemma \ref{422}}, there exists $\{{\alpha_1},...,{\alpha_m}\}$ which is a basis of $\mathbb{F}_q^m$ such that $1\leq\mathbf{wt}(\alpha_i)\leq 2$ and $\omega\cdot \alpha_i=1$, $1\leq i\leq m.$ By condition (1), we have $f(\alpha_i)=1=\omega\cdot \alpha_i$, $1\leq i\leq m.$ By {\bf Proposition \ref{3}}, $\mathbf{c}(u,\mathbf{v})$ is minimal.

{\bf Case 3:}  When $u= 0$ and $\mathbf{v}\neq \mathbf{0}$, Let $\mathbf{v}=(v_1,...,v_m)$. Then there exists $v_{i_0}\neq 0$. For $i\neq i_0$, let
 $$\alpha_i={\bf e}_{i}-v_{i_0}^{-1}v_i{\bf e}_{i_0}.$$
 Then $\{\alpha_i: \ i\neq i_0\}$ is a basis of $H(\mathbf{v})$ and $1\leq \mathbf{wt}(\alpha_i)\leq 2$.

 When $m$ is even, let $\alpha_{i_0}=\Sigma_{i\neq i_0}\alpha_i.$ Then $\alpha_{i_0}\in H(\mathbf{v})$ and $\mathbf{wt}(\alpha_{i_0})\geq m-1.$ By condition (2), $f(\alpha_{i_0})=0$.  Now we prove that  $\mathbf{d}_{\alpha_1},...,\mathbf{d}_{\alpha_m}$ are linearly independent. Let $\sum_{i=1}^m k_i\mathbf{d}_{\alpha_i}=\mathbf{0}$. It is equivalent to
$(\sum_{i=1}^m k_i f({\alpha_i}), \sum_{i=1}^m k_i \alpha_i)=(0, \mathbf{0})$. We get
$$\mathbf{0}=\sum_{i=1}^m k_i \alpha_i=\sum_{i\neq{i_0}}(k_i+k_{i_0}) \alpha_i.$$
Since $\{\alpha_i: \ i\neq i_0\}$ are linearly independent, $k_i=-k_{i_0}=k_{i_0}$, for $i\neq i_0$. Since
$$
 f(\alpha_i)=\left\{
            \begin{array}{ll}
             0, & \hbox{if\ $i= i_0;$}\\
             1, & \hbox{if\ $i\neq i_0,$}
           \end{array}
          \right.
$$
$$0=\sum_{i=1}^m k_i f({\alpha_i})=\sum_{i\neq i_0} k_if({\alpha_i})+k_{i_0}f({\alpha_{i_0}})=(m-1)k_{i_0}=k_{i_0}.$$
Thus for all $i$, $k_i=k_{i_0}=0$,  $\mathbf{d}_{\alpha_1},...,\mathbf{d}_{\alpha_m}$ are linearly independent.
By {\bf Proposition \ref{4}}, $\mathbf{c}(u,\mathbf{v})$ is minimal.

When $m$ is odd, let $i_1\neq i_0$ and $\alpha_{i_0}=\Sigma_{i\neq i_0,i_1}\alpha_i.$ Then $\alpha_{i_0}\in H(\mathbf{v})$ and $\mathbf{wt}(\alpha_{i_0})\geq m-2.$ By condition (2), $f(\alpha_{i_0})=0$.
Now we prove that  $\mathbf{d}_{\alpha_1},...,\mathbf{d}_{\alpha_m}$ are linearly independent. Let $\sum_{i=1}^m k_i\mathbf{d}_{\alpha_i}=\mathbf{0}$, it is equivalent to
$(\sum_{i=1}^m k_i f({\alpha_i}), \sum_{i=1}^m k_i \alpha_i)=(0, \mathbf{0})$. We get
$$\mathbf{0}=\sum_{i=1}^m k_i \alpha_i=\sum_{i\neq{i_0},{i_1}}(k_i+k_{i_0}) \alpha_i+k_{i_1} \alpha_{i_1}.$$
Since $\{\alpha_i: \ i\neq i_0\}$ are linearly independent, for $i\neq i_1$, $k_i=-k_{i_0}=k_{i_0}$, and $k_{i_1}=0$.  Since
$$
 f(\alpha_i)=\left\{
            \begin{array}{ll}
             0, & \hbox{if\ $i= i_0;$}\\
             1, & \hbox{if\ $i\neq i_0,$}
           \end{array}
          \right.
$$
$$0=\sum_{i=1}^m k_i f({\alpha_i})=\sum_{i\neq i_0,i_1} k_if({\alpha_i})+k_{i_0}f({\alpha_{i_0}})+k_{i_1}f({\alpha_{i_1}})=(m-2)k_{i_0}=k_{i_0}.$$
Thus  $k_i=0$, $1\leq i\leq m$,  $\mathbf{d}_{\alpha_1},...,\mathbf{d}_{\alpha_m}$ are linearly independent.
By {\bf Proposition \ref{4}}, $\mathbf{c}(u,\mathbf{v})$ is minimal.

This completes the proof.
\end{proof}

As the special cases of {\bf Theorem \ref{40}} and {\bf \ref{44}}, we have the following corollaries.
\begin{cor}\label{45}
Let $m$, $t$ be integers with $m\geq 7$ and $2\leq t\leq \lfloor\frac{m-3}{2}\rfloor$. Assume that  $f:\mathbb{F}_2^m\rightarrow \mathbb{F}_2$ is a function defined as follows:
$$
 f(\mathbf{x})=\left\{
            \begin{array}{ll}
            1, & \hbox{$1\leq {\bf wt}({\bf x})\leq t$},\\
            0, & \hbox{${\bf wt}({\bf x})>t.$}
           \end{array}
          \right.
$$
Then $\mathcal{C}_f$ is minimal.
\end{cor}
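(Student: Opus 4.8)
The plan is to derive Corollary \ref{45} directly from Theorem \ref{44}, so the whole task reduces to verifying that the symmetric function $f$ defined by the weight thresholds satisfies the two hypotheses of that theorem, with $q=2$ and $m\geq 7\geq 4$. First I would record why the stated range of $t$ is nonempty: the requirement $2\leq t\leq\lfloor\frac{m-3}{2}\rfloor$ forces $\lfloor\frac{m-3}{2}\rfloor\geq 2$, i.e. $m\geq 7$, which is exactly the hypothesis. I would also note in passing that hypothesis (1) below already guarantees $f(\mathbf{x})\neq\mathbf{\omega}\cdot\mathbf{x}$ for every $\mathbf{\omega}\in\mathbb{F}_2^m$ (evaluating on the weight-one vectors forces $\mathbf{\omega}=\mathbf{1}$, which then fails on any weight-two vector since $\mathbf{1}\cdot(\mathbf{e}_i+\mathbf{e}_j)=0\neq 1=f$), so the standing nondegeneracy assumption of Section \ref{section minimal codewords} holds via Corollary \ref{332} and $\mathcal{C}_f$ has the expected dimension $m+1$.

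For hypothesis (1) of Theorem \ref{44}, I would observe that any $\mathbf{x}$ with $1\leq\mathbf{wt}(\mathbf{x})\leq 2$ satisfies $\mathbf{wt}(\mathbf{x})\leq 2\leq t$ because $t\geq 2$, and hence $f(\mathbf{x})=1\neq 0$ straight from the definition of $f$. This is immediate. The only part requiring a short computation is hypothesis (2), which I would split according to the parity of $m$. If $m$ is even, I must show $f(\mathbf{x})=0$ whenever $\mathbf{wt}(\mathbf{x})\geq m-1$; it suffices that $m-1>t$, i.e. $t\leq m-2$, which follows from $t\leq\lfloor\frac{m-3}{2}\rfloor\leq\frac{m-3}{2}<m-1$. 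If $m$ is odd, I must show $f(\mathbf{x})=0$ whenever $\mathbf{wt}(\mathbf{x})\geq m-2$; it suffices that $m-2>t$, i.e. $t\leq m-3$, which follows from $t\leq\lfloor\frac{m-3}{2}\rfloor\leq m-3$ for $m\geq 7$. In either case the largeness of $\mathbf{wt}(\mathbf{x})$ forces $\mathbf{wt}(\mathbf{x})>t$, so $f(\mathbf{x})=0$ by definition, and hypothesis (2) holds.

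With both hypotheses checked, Theorem \ref{44} applies verbatim and yields that $\mathcal{C}_f$ is minimal. I do not expect any genuine obstacle here: all the structural content sits inside Theorem \ref{44}, and the mild arithmetic in hypothesis (2) is the only thing to carry out. It is worth remarking that the sharp bound $t\leq\lfloor\frac{m-3}{2}\rfloor$ is in fact stronger than minimality alone requires --- the weaker bound $t\leq m-3$ already forces both hypotheses --- and the tighter choice is presumably imposed so that the resulting code additionally satisfies $\frac{w_{\rm min}}{w_{\rm max}}\leq\frac{1}{2}$, placing it outside the Ashikhmin--Barg regime of Lemma \ref{Ashikhmin-Barg}.
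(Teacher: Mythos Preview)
Your proposal is correct and follows exactly the approach the paper intends: Corollary \ref{45} is stated in the paper without proof as an immediate specialization of Theorem \ref{44}, and your verification of hypotheses (1) and (2) for the given $f$ is precisely what is needed. Your additional remarks on nondegeneracy and on the sharpness of the bound $t\leq\lfloor\frac{m-3}{2}\rfloor$ are accurate and add useful context beyond what the paper supplies.
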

\begin{remark} The $\mathcal{C}_f$ in {\bf Corollary \ref{45}} is the one in \cite[Theorem 3.1]{DHZ2018}.
\end{remark}

\begin{cor}\label{46}
Let $m$, $t$ be integers with $m\geq 5$ and $2\leq t\leq \lfloor\frac{m-1}{2}\rfloor$. Assume that  $f:\mathbb{F}_3^m\rightarrow \mathbb{F}_3$ is a function defined as follows:
$$
 f(\mathbf{x})=\left\{
            \begin{array}{ll}
            1, & \hbox{$1\leq {\bf wt}({\bf x})\leq t$},\\
            0, & \hbox{${\bf wt}({\bf x})>t.$}
           \end{array}
          \right.
$$
Then $\mathcal{C}_f$ is minimal.
\end{cor}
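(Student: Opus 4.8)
```latex
The plan is to verify that the function $f$ defined in Corollary \ref{46} satisfies the two hypotheses of Theorem \ref{40}, and then invoke that theorem directly. Since $q=3>2$ and $m\geq 5\geq 3$, the dimensional requirements of Theorem \ref{40} are met, so the entire task reduces to checking conditions (1) and (2) of that theorem for the piecewise-constant $f$ given here.

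First I would check condition (1): for $\mathbf{x}$ with $1\leq \mathbf{wt(x)}\leq 2$ and any $a\in\mathbb{F}_q^*$, we need $f(a\mathbf{x})=f(\mathbf{x})\neq 0$. The key observation is that scaling by a nonzero $a\in\mathbb{F}_3^*$ permutes the nonzero coordinates of $\mathbf{x}$ and leaves the zero coordinates zero, so $\mathbf{wt}(a\mathbf{x})=\mathbf{wt}(\mathbf{x})$. Because $t\geq 2$, any $\mathbf{x}$ with $1\leq\mathbf{wt}(\mathbf{x})\leq 2$ satisfies $\mathbf{wt}(\mathbf{x})\leq t$, so $f(\mathbf{x})=1\neq 0$; the same bound applies to $a\mathbf{x}$, giving $f(a\mathbf{x})=1=f(\mathbf{x})$. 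Thus condition (1) holds precisely because the lower threshold $t\geq 2$ covers weights $1$ and $2$.

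Next I would check condition (2): for $\mathbf{x}$ with $\mathbf{wt(x)}=m$, we need $f(\mathbf{x})=0$. Here the upper bound on $t$ enters: since $t\leq\lfloor\frac{m-1}{2}\rfloor$, we have $t<m$, so a vector of full weight $m$ satisfies $\mathbf{wt}(\mathbf{x})=m>t$, whence $f(\mathbf{x})=0$ by definition. This is immediate once one notes $2t\leq m-1<2m$, so $t<m$.

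Finally, having verified both conditions, I would conclude by Theorem \ref{40} that $\mathcal{C}_f=\mathcal{C}(D_f)$ is minimal. There is no genuine obstacle here; the proof is entirely a matter of matching the threshold inequalities $2\leq t\leq\lfloor\frac{m-1}{2}\rfloor$ against the weight conditions in Theorem \ref{40}. The only point deserving a moment's care is confirming that the constraints on $t$ are compatible (i.e.\ that the range $2\leq t\leq\lfloor\frac{m-1}{2}\rfloor$ is nonempty), which is exactly why the hypothesis $m\geq 5$ is imposed: for $m\geq 5$ one has $\lfloor\frac{m-1}{2}\rfloor\geq 2$, so an admissible $t$ exists. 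The analogous Corollary \ref{45} over $\mathbb{F}_2$ would instead be deduced from Theorem \ref{44}, with the parity cases for $m$ handled by the $\mathbf{wt(x)}\geq m-1$ and $\mathbf{wt(x)}\geq m-2$ branches of condition (2) there.
```
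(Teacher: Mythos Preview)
Your proposal is correct and follows exactly the approach intended by the paper: Corollary~\ref{46} is stated there as a special case of Theorem~\ref{40}, and your verification of conditions (1) and (2) via the inequalities $2\leq t$ and $t\leq\lfloor\frac{m-1}{2}\rfloor<m$ is precisely the check that justifies this specialization. The paper provides no further argument beyond presenting the corollary as an instance of Theorem~\ref{40}, so your write-up is in full agreement with it.
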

\begin{remark} The $\mathcal{C}_f$ in {\bf Corollary \ref{46}} is the one in \cite[Theorem 18]{HDZ2018}.\end{remark}

\begin{cor}\label{47}
Let $m$, $t$ be integers with $m> 3$ and $2\leq t\leq m-2$.   Assume that  $f:\mathbb{F}_q^m\rightarrow \mathbb{F}_q$ is a function defined as follows:
$$
 f(\mathbf{x})=\left\{
            \begin{array}{ll}
            a_i\neq 0, &1\leq \hbox{${\bf wt}({\bf x})=i\leq t$},\\
            0, & \hbox{${\bf wt}({\bf x})>t.$}
           \end{array}
          \right.
$$
Then $\mathcal{C}_f$ is minimal.
\end{cor}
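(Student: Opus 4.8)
The plan is to deduce the corollary directly from \textbf{Theorem \ref{40}} (when $q>2$) and from \textbf{Theorem \ref{44}} (when $q=2$), since the hypotheses of both theorems are phrased purely in terms of the values of $f$ on vectors of small or large weight. The decisive preliminary observation is that the $f$ defined here depends only on $\mathbf{wt}(\mathbf{x})$: it equals $a_i\neq 0$ whenever $\mathbf{wt}(\mathbf{x})=i\leq t$ and equals $0$ whenever $\mathbf{wt}(\mathbf{x})>t$. Because scaling by $a\in\mathbb{F}_q^*$ never alters the support, we have $\mathbf{wt}(a\mathbf{x})=\mathbf{wt}(\mathbf{x})$ and hence $f(a\mathbf{x})=f(\mathbf{x})$ for free; this already supplies the scalar-invariance demanded by condition (1) of \textbf{Theorem \ref{40}}, and together with non-constancy it also yields $f(\mathbf{x})\neq\boldsymbol\omega\cdot\mathbf{x}$, so that $\mathcal{C}_f$ has dimension $m+1$ by \textbf{Corollary \ref{332}}.

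With this in hand I would simply check the two hypotheses of the relevant theorem. For condition (1), whenever $1\leq\mathbf{wt}(\mathbf{x})\leq 2$ we have $\mathbf{wt}(\mathbf{x})\leq 2\leq t$, so $f(\mathbf{x})=a_{\mathbf{wt}(\mathbf{x})}\neq 0$, exactly as both theorems require. For condition (2) the key input is the bound $t\leq m-2$. When $q>2$, \textbf{Theorem \ref{40}} only asks $f(\mathbf{x})=0$ for $\mathbf{wt}(\mathbf{x})=m$, which holds since $m>m-2\geq t$; when $q=2$ and $m$ is even, \textbf{Theorem \ref{44}} asks $f=0$ on all $\mathbf{x}$ with $\mathbf{wt}(\mathbf{x})\geq m-1$, and again $m-1>m-2\geq t$ forces $f=0$. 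These sub-cases assemble into the result with no work beyond bookkeeping.

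The delicate point, and the step I expect to be the main obstacle, is $q=2$ with $m$ odd. There condition (2) of \textbf{Theorem \ref{44}} requires $f(\mathbf{x})=0$ already for $\mathbf{wt}(\mathbf{x})\geq m-2$, whereas the stated range permits $t=m-2$, in which case $f$ is nonzero on weight-$(m-2)$ vectors, so the blanket hypothesis of \textbf{Theorem \ref{44}} is not met and one cannot merely quote it. Here I would instead return to the covering criterion and re-examine \textbf{Case 1} and \textbf{Case 3} directly. In Case 1 this is still manageable: the $m$ vectors $\mathbf{1}-\mathbf{e}_1,\dots,\mathbf{1}-\mathbf{e}_{m-1},\mathbf{1}$ form a basis of $\mathbb{F}_2^m$ of weights $m-1$ and $m$, on all of which $f$ vanishes, so \textbf{Proposition \ref{1}} applies. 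The genuine difficulty is \textbf{Case 3}: by \textbf{Proposition \ref{4}} minimality of $\mathbf{c}(0,\mathbf{v})$ is equivalent to finding $\alpha_1,\dots,\alpha_m\in H(\mathbf{v})$ whose lifts $\mathbf{d}_{\alpha_i}=(f(\alpha_i),\alpha_i)$ are independent, i.e.\ to the non-linearity of $f$ restricted to every hyperplane $H(\mathbf{v})$. The counting identity $0=(m-2)k_{i_0}=k_{i_0}$ used in the proof of \textbf{Theorem \ref{44}} collapses precisely when the auxiliary vector $\alpha_{i_0}$ has weight $m-2=t$, so a fresh choice of basis for $H(\mathbf{v})$ (or the safer restriction $t\leq m-3$ in this parity class) is what must be produced to close this corner; everything else is a routine translation of the weight conditions into the hypotheses of \textbf{Theorems \ref{40}} and \textbf{\ref{44}}.
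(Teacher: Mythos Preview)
Your approach is exactly the paper's: the paper offers no separate proof of this corollary at all, but simply introduces it (together with Corollaries \ref{45} and \ref{46}) with the sentence ``As the special cases of \textbf{Theorem \ref{40}} and \textbf{\ref{44}}, we have the following corollaries.'' Your verification that the weight-based definition of $f$ satisfies condition (1) of either theorem (using $t\geq 2$ and scalar-invariance of the weight) and condition (2) (using $t\leq m-2$) is precisely the intended one-line deduction.

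You have in fact been more careful than the paper. The corner case you isolate, $q=2$ with $m$ odd and $t=m-2$, is real: condition (2) of \textbf{Theorem \ref{44}} then demands $f=0$ on weight $m-2$, which fails, so the theorem cannot be quoted. The paper does not acknowledge this. The most plausible reading is that Corollary \ref{47} is implicitly meant for $q>2$: the remark immediately following it identifies the statement with \cite[Theorem III 2]{BB2019}, a paper whose title is ``Minimal linear codes in odd characteristic,'' and the binary case with the tighter bound $t\leq\lfloor(m-3)/2\rfloor$ is already recorded separately as Corollary \ref{45}. Under that reading, \textbf{Theorem \ref{40}} alone suffices and your argument is complete. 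If instead one insists on including $q=2$, then your diagnosis is accurate: Case 1 is salvaged by your basis $\{\mathbf{1}-\mathbf{e}_1,\dots,\mathbf{1}-\mathbf{e}_{m-1},\mathbf{1}\}$, but Case 3 genuinely requires a fresh argument that neither you nor the paper supplies.
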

\begin{remark} The $\mathcal{C}_f$ in {\bf Corollary \ref{47}} is the one in \cite[ Theorem III 2]{BB2019}. \end{remark}

It is easy  to see that our constructions contain the above three constructions in \cite{DHZ2018, HDZ2018, BB2019} as special cases,  so our constructions are more general.
\begin{exm}
Let $q=3$, $m=4$, $\mathbf{x}=(x_1,x_2,x_3,x_4)$. Then $n=q^m-1=80$ and $k=m+1=5$. With the help of Magma, we get the following results.

(1) Let
$$
 f_1(\mathbf{x})=\left\{
            \begin{array}{ll}
            1, & \hbox{$1\leq {\bf wt}({\bf x})\leq 2$},\\
            0, & \hbox{${\bf wt}({\bf x})>2.$}
           \end{array}
          \right.
$$

 Then  $\mathcal{C}_{f_1}$ is a minimal linear code over $\mathbb{F}_3$ with parameters $[80,\ 5,\ 32]$ and weight enumerator
 $$1+2z^{32}+64z^{50}+48z^{53}+80z^{54}+32z^{56}+16z^{65}.$$
  Furthermore, $\frac{w_{\rm min}}{w_{\rm max}}=\frac{32}{65}<\frac{2}{3}.$

 (2) Let
$$
 f_2(\mathbf{x})=\left\{
            \begin{array}{ll}
            1, & \hbox{$1\leq {\bf wt}({\bf x})\leq 2$},\\
            x_1, & \hbox{${\bf wt}({\bf x})=3,$}\\
            0, & \hbox{${\bf wt}({\bf x})=4.$}
           \end{array}
          \right.
$$

 Then  $\mathcal{C}_{f_2}$ is a minimal linear code over $\mathbb{F}_3$ with parameters $[80,\ 5,\ 41]$ and weight enumerator
 $$1+2z^{41}+24z^{47}+40z^{50}+24z^{53}+80z^{54}+58z^{56}+14z^{65}.$$
 Furthermore, $\frac{w_{\rm min}}{w_{\rm max}}=\frac{41}{65}<\frac{2}{3}.$

 Since $m=4<5$, $\mathcal{C}_{f_1}$ and $\mathcal{C}_{f_2}$ are not contained in the construction in {\bf Corollary \ref{46}}. When $m=4$, $\mathcal{C}_{f_1}$ is the unique linear code determined by {\bf Corollary \ref{47}} in the sense of equivalence. So $\mathcal{C}_{f_2}$ is a new linear code constructed by {\bf Theorem \ref{40}}, the parameters of $\mathcal{C}_{f_1}$ and $\mathcal{C}_{f_2}$ are different, and the minimum distance of $\mathcal{C}_{f_2}$ is better than that of $\mathcal{C}_{f_1}$. In fact, the choices of $f$ in {\bf Theorem \ref{40}} and {\bf Theorem \ref{44}} are very flexible.

\end{exm}

\section{\bf{The second onstruction of minimal linear codes}}\label{section the second construction}
In this section we generalize a construction in \cite{MQ2019}.
\begin{thm}\label{51}
Let $f:\mathbb{F}_q^m\rightarrow \mathbb{F}_q$ be a function. If $f(\mathbf{x})$ satisfies the following two conditions:\\
(1) when  $1\leq \mathbf{wt(x)}\leq 2$, $f(\mathbf{x})=0$;\\
(2) when  $\mathbf{wt(x)}\geq m-1$, then for any $a\in \mathbb{F}_q^*$, $f(a\mathbf{x})=f(\mathbf{x})\neq 0$,\\
then  $\mathcal{C}_f=\mathcal{C}(D_f)$ is minimal.
\end{thm}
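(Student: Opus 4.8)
The plan is to verify the three cases of the minimality criterion (Propositions \ref{1}, \ref{3}, \ref{4}) in turn, after first checking that $\mathcal{C}_f$ genuinely has dimension $m+1$. For the latter, suppose $f(\mathbf{x})=\mathbf{\omega}\cdot\mathbf{x}$ for some $\mathbf{\omega}\in\mathbb{F}_q^m$; evaluating on the weight-one vectors $\mathbf{e}_i$ and invoking condition (1) forces $\omega_i=0$ for every $i$, hence $f\equiv 0$, contradicting condition (2). By Corollary \ref{332} this gives $r(D_f)=m+1$. (Note that the two weight ranges in (1) and (2) are compatible only when $m\geq 4$, which I will assume.) Case 1, where $u\neq 0$ and $\mathbf{v}=\mathbf{0}$, is then immediate: the standard basis $\mathbf{e}_1,\dots,\mathbf{e}_m$ consists of weight-one vectors, so $f(\mathbf{e}_i)=0$ by condition (1), and Proposition \ref{1} applies.

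For Case 3 ($u=0$, $\mathbf{v}\neq\mathbf{0}$) I would take the low-weight basis $\{\alpha_1,\dots,\alpha_{m-1}\}$ of $H(\mathbf{v})$ supplied by Lemma \ref{43}; condition (1) gives $f(\alpha_i)=0$, so $\mathbf{d}_{\alpha_i}=(0,\alpha_i)$. It then suffices to exhibit one further vector $\alpha_m\in H(\mathbf{v})$ with $f(\alpha_m)\neq 0$, since in any dependence relation $\sum_{i=1}^m k_i\mathbf{d}_{\alpha_i}=\mathbf{0}$ the first coordinate reads $k_m f(\alpha_m)=0$, forcing $k_m=0$, after which linear independence of $\alpha_1,\dots,\alpha_{m-1}$ kills the remaining coefficients. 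To build $\alpha_m$ I would start from the all-ones vector $\mathbf{1}$ and subtract the multiple $(\mathbf{v}\cdot\mathbf{1})v_{i_0}^{-1}\mathbf{e}_{i_0}$ (where $v_{i_0}\neq 0$) so that $\alpha_m\in H(\mathbf{v})$; this changes at most the $i_0$-th coordinate, so $\mathbf{wt}(\alpha_m)\geq m-1$ and condition (2) yields $f(\alpha_m)\neq 0$. Proposition \ref{4} then applies.

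The main obstacle is Case 2 ($u\neq 0$, $\mathbf{v}\neq\mathbf{0}$, $\mathbf{\omega}=-\tfrac{1}{u}\mathbf{v}\neq\mathbf{0}$), where Proposition \ref{3} demands the exact matching $f(\alpha_i)=\mathbf{\omega}\cdot\alpha_i$ rather than merely $f(\alpha_i)=0$. For the $m-1$ low-weight basis vectors of $H(\mathbf{\omega})$ from Lemma \ref{43} both sides vanish, so the real work is the last vector $\alpha_m\notin H(\mathbf{\omega})$. My key idea is to exploit the scaling invariance in condition (2): choose a high-weight $\mathbf{x}$ (weight $\geq m-1$) with $\mathbf{\omega}\cdot\mathbf{x}\neq 0$, then set $\alpha_m=a\mathbf{x}$ with $a=f(\mathbf{x})/(\mathbf{\omega}\cdot\mathbf{x})\in\mathbb{F}_q^*$. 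Then $f(\alpha_m)=f(\mathbf{x})$ by scaling invariance, while $\mathbf{\omega}\cdot\alpha_m=a(\mathbf{\omega}\cdot\mathbf{x})=f(\mathbf{x})$, so the two sides agree; moreover $\mathbf{\omega}\cdot\alpha_m\neq 0$ guarantees $\alpha_m\notin H(\mathbf{\omega})$, so $\{\alpha_1,\dots,\alpha_m\}$ is a basis and Proposition \ref{3} applies.

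It remains to confirm that a high-weight $\mathbf{x}$ with $\mathbf{\omega}\cdot\mathbf{x}\neq 0$ exists, which is the same flavor of argument used in Case 3. If $\mathbf{\omega}\cdot\mathbf{1}\neq 0$, take $\mathbf{x}=\mathbf{1}$ (weight $m$); otherwise pick $j$ with $\omega_j\neq 0$ and take $\mathbf{x}=\mathbf{1}-\mathbf{e}_j$, for which $\mathbf{\omega}\cdot\mathbf{x}=-\omega_j\neq 0$ and $\mathbf{wt}(\mathbf{x})=m-1$. The crux of the whole theorem is precisely this value-matching device: condition (2) makes $f$ constant on each punctured line of high weight, while $\mathbf{\omega}\cdot(\cdot)$ scales linearly, so a single rescaling aligns them—this is what lets the scaling hypothesis in (2) replace the rigid vanishing used in the earlier theorems.
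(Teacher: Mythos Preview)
Your proof is correct and follows essentially the same approach as the paper: the three cases are handled via Propositions \ref{1}, \ref{3}, \ref{4} using a low-weight basis of $H(\mathbf{v})$ (resp.\ $H(\mathbf{\omega})$) together with one additional high-weight vector, rescaled via the invariance in condition (2) to match $f$ with $\mathbf{\omega}\cdot(\cdot)$. The only cosmetic difference is that the paper builds its high-weight vector as $\sum_{i\neq i_0}\alpha_i$ (plus a correction term in Case 2), whereas you project $\mathbf{1}$ onto $H(\mathbf{v})$ or pick $\mathbf{1}$ / $\mathbf{1}-\mathbf{e}_j$ directly; in Case 3 these literally coincide, and in Case 2 they are interchangeable.
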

\begin{proof}
It is easy to see that $f(\mathbf{x})\neq\mathbf{\omega\cdot x}$ for any $\mathbf{\omega}\in \mathbb{F}_q^m$, then by {\bf Corollary \ref{332}}, $\mathcal{C}_f$ is a $[q^m-1,m+1]_q$ linear code.

{\bf Case 1:} By condition(1), we get $f(\mathbf{e}_i)=0$, $1\leq i\leq m.$ Then $\{\mathbf{e}_1,...,\mathbf{e}_m\}$ is a basis of  $\mathbb{F}_q^m$ and $f(\mathbf{e}_1)=...=f(\mathbf{e}_m)=0$. By {\bf Proposition \ref{1}},  $\mathbf{c}(u,\mathbf{v})$ is minimal.

{\bf Case 2:} When $u\neq 0$ and $\mathbf{v}\neq \mathbf{0}$, let $\mathbf{\omega}=(w_1,...,w_m)=-\frac{1}{u}\mathbf{v}$. Then there exists $w_{i_0}\neq 0.$ For $i\neq i_0$, let $\alpha_i=\mathbf{e}_i-w_iw_{i_0}^{-1}\mathbf{e}_{i_0}.$ It is easy to see $\{\alpha_i,\ i\neq i_0\}$ is a basis of $H(\omega)$, $\omega\cdot \alpha_i=0$ and $1\leq \mathbf{wt}(\alpha_i)\leq 2$. By condition (1), $f(\alpha_i)=0=\omega\cdot \alpha_i$.

Let $\beta_{i_0}=\Sigma_{i\neq i_0}\alpha_i+w_{i_0}^{-1}\mathbf{e}_{i_0}$. Then $\omega\cdot \beta_{i_0}=1$ and $\mathbf{wt}(\beta_{i_0})\geq m-1.$ By condition (2) of this theorem, we get $f(\beta_{i_0})\neq 0.$ Let $\alpha_{i_0}=f(\beta_{i_0})\beta_{i_0}$. By condition (2), we get $f(\alpha_{i_0})=f(f(\beta_{i_0})\beta_{i_0})=f(\beta_{i_0})=f(\beta_{i_0})(\omega\cdot \beta_{i_0})=\omega\cdot(f(\beta_{i_0})\beta_{i_0})=\omega\cdot \alpha_{i_0}.$ It is easy to see that $\alpha_1,...,\alpha_m$ are linearly independent. Thus we find that $\{\alpha_1,...,\alpha_m\}$ is a basis of $\mathbb{F}_q^m$, and $f(\alpha_{i})=\omega\cdot \alpha_{i}$. By {\bf Proposition \ref{3}},  $\mathbf{c}(u,\mathbf{v})$ is minimal.

{\bf Case 3:}  When $u= 0$ and $\mathbf{v}\neq \mathbf{0}$, let $\mathbf{v}=(v_1,...,v_m)$. Then there exists $v_{i_0}\neq 0.$ For $i\neq i_0$, let $\alpha_i=\mathbf{e}_i-v_iv_{i_0}^{-1}\mathbf{e}_{i_0}.$ Then $\{\alpha_i:i\neq i_0\}$ is a basis of $H(\mathbf{v})$ and $1\leq \mathbf{wt}(\alpha_i)\leq 2$. By condition (1) of this theorem, we get $f(\alpha_i)=0$. Let $\alpha_{i_0}=\Sigma_{i\neq i_0}\alpha_i.$
Then $\alpha_{i_0}\in H(\mathbf{v})$ and $\mathbf{wt}(\alpha_{i_0})\geq m-1.$ By condition(2), we get $f(\alpha_i)\neq 0$. Thus it is easy to see that $\alpha_i\in H(\mathbf{v}),\ 1\leq i\leq m$ and $\mathbf{d}_{\alpha_1},...,\mathbf{d}_{\alpha_m}$ are linearly independent. By {\bf Proposition \ref{4}},  $\mathbf{c}(u,\mathbf{v})$ is minimal.

This completes the proof.
\end{proof}

As a special case of {\bf Theorem \ref{51}}, we have the following corollary.
\begin{cor}\label{52}
Let $m$, $t$ be integers with $m> 3$ and $2\leq t\leq m-2$. Assume that  $f:\mathbb{F}_2^m\rightarrow \mathbb{F}_2$ is a function defined as follows:
$$
 f(\mathbf{x})=\left\{
            \begin{array}{ll}
            0, & \hbox{${\bf wt}({\bf x})\leq t$},\\
            1, & \hbox{${\bf wt}({\bf x})>t.$}
           \end{array}
          \right.
$$
Then $\mathcal{C}_f$ is minimal.
\end{cor}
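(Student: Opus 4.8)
The plan is to derive Corollary \ref{52} directly from Theorem \ref{51} by checking that the prescribed Boolean function $f$ satisfies both hypotheses of that theorem with $q=2$. A preliminary simplification: since $q=2$, the only element of $\mathbb{F}_q^*$ is $a=1$, so the scaling invariance $f(a\mathbf{x})=f(\mathbf{x})$ required in condition (2) of Theorem \ref{51} holds automatically and need not be verified. Thus the entire task reduces to confirming that $f$ vanishes on vectors of weight $1$ and $2$, and is nonzero on vectors of weight at least $m-1$.

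First I would verify condition (1) of Theorem \ref{51}. For any $\mathbf{x}$ with $1\leq \mathbf{wt}(\mathbf{x})\leq 2$, the hypothesis $t\geq 2$ forces $\mathbf{wt}(\mathbf{x})\leq 2\leq t$, and so by the definition of $f$ we obtain $f(\mathbf{x})=0$. This is exactly condition (1).

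Next I would verify condition (2). For any $\mathbf{x}$ with $\mathbf{wt}(\mathbf{x})\geq m-1$, the hypothesis $t\leq m-2$ gives $\mathbf{wt}(\mathbf{x})\geq m-1>t$, so by definition $f(\mathbf{x})=1\neq 0$; combined with the automatic scaling invariance noted above, this is precisely condition (2). Having checked both hypotheses, I would invoke Theorem \ref{51} to conclude that $\mathcal{C}_f$ is minimal.

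There is essentially no obstacle in this argument, and no genuine calculation is required: the proof is purely a matter of matching the weight thresholds. The role of the numerical constraints $m>3$ and $2\leq t\leq m-2$ is exactly to make the two weight ranges $\{1,2\}$ and $\{m-1,m\}$ lie on the correct sides of the threshold $t$ (and to guarantee the range for $t$ is nonempty, which needs $m\geq 4$), so that the definition of $f$ automatically realizes the two conditions demanded by Theorem \ref{51}.
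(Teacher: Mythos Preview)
Your proposal is correct and matches the paper's approach exactly: the paper presents Corollary~\ref{52} as an immediate special case of Theorem~\ref{51} without writing out a separate proof, and your verification of conditions (1) and (2) via the weight thresholds $t\ge 2$ and $t\le m-2$ is precisely the intended one-line check.
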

\begin{remark} The $\mathcal{C}_f$ in {\bf Corollary \ref{47}} is the one in \cite[Theorem 3.6]{MQ2019}. It is easy to see that our construction contains \cite[Theorem 3.6]{MQ2019}  as a special case, and our construction is more general.

\end{remark}

\begin{exm}
Let $q=2$, $m=5$, $\mathbf{x}=(x_1,x_2,x_3,x_4)$. Then $n=q^m-1=31$ and $k=m+1=6$. With the help of Magma, we get the following results.

(1) Let
$$
 f_1(\mathbf{x})=\left\{
            \begin{array}{ll}
            0, & \hbox{$1\leq {\bf wt}({\bf x})\leq 2$},\\
            1, & \hbox{${\bf wt}({\bf x})>2.$}
           \end{array}
          \right.
$$

 Then  $\mathcal{C}_{f_1}$ is a minimal linear code over $\mathbb{F}_2$ with parameters $[31,\ 6,\ 10]$ and weight enumerator
 $$1+6z^{10}+47z^{16}+10z^{18}.$$
  Furthermore, $\frac{w_{\rm min}}{w_{\rm max}}=\frac{10}{18}>\frac{1}{2}.$

 (2) Let
$$
 f_2(\mathbf{x})=\left\{
            \begin{array}{ll}
            0, & \hbox{$1\leq {\bf wt}({\bf x})\leq 3$},\\
            1, & \hbox{${\bf wt}({\bf x})>3.$}
           \end{array}
          \right.
$$

 Then  $\mathcal{C}_{f_1}$ is a minimal linear code over $\mathbb{F}_2$ with parameters $[31,\ 6,\ 6]$ and weight enumerator
 $$1+z^{6}+5z^{12}+5z^{14}+41z^{16}+10z^{18}+z^{20}.$$
  Furthermore, $\frac{w_{\rm min}}{w_{\rm max}}=\frac{6}{20}<\frac{1}{2}.$

 (3) Let
$$
 f_3(\mathbf{x})=\left\{
            \begin{array}{ll}
            0, & \hbox{$1\leq {\bf wt}({\bf x})\leq 2$},\\
            x_1+x_2, & \hbox{${\bf wt}({\bf x})=3,$}\\
            1, & \hbox{${\bf wt}({\bf x})\geq4.$}
           \end{array}
          \right.
$$
\end{exm}
 Then  $\mathcal{C}_{f_3}$ is a minimal linear code over $\mathbb{F}_2$ with parameters $[31,\ 6,\ 10]$ and weight enumerator
 $$1+3z^{10}+4z^{12}+3z^{14}+43z^{16}+9z^{18}+z^{22}.$$
 Furthermore, $\frac{w_{\rm min}}{w_{\rm max}}=\frac{10}{22}<\frac{1}{2}.$

  When $m=5$, $f_1$ and $f_2$ are all the functions determined by {\bf Corollary \ref{52}} and $f_3$ is a new function constructed by {\bf Theorem \ref{51}}. The parameters of $\mathcal{C}_{f_1}$, $\mathcal{C}_{f_2}$ and $\mathcal{C}_{f_3}$ are different, so $\mathcal{C}_{f_3}$ is a new minimal linear code. In fact, the choices of $f$ in {\bf Theorem \ref{51}} are very flexible.

In the following two sections, let
 $$\mathbf{e}_1=(1,0,0,...,0,0),$$
 $$\mathbf{e}_2=(0,1,0,...,0,0),$$
 $$......$$
 $$\mathbf{e}_l=(0,0,0,...,0,1),$$
denote the  standard basis of the $l$-dimensional vector space $\mathbb{F}_q^l$.

\section{\bf Minimal linear codes constructed from  Maiorana-McFarland functions}\label{M function}
In this section, we present a construction of minimal linear codes from Maiorana-McFarland functions. Our construction generalizes the result in \cite{DHZ2018}.
Let $s$ and $t$ be positive integers, $m=s+t$. Then $\mathbb{F}_q^m=\mathbb{F}_q^s\bigoplus \mathbb{F}_q^t$. Let $\alpha=(\beta,\gamma)\in \mathbb{F}_q^m$, where $\beta\in \mathbb{F}_q^s$ and $\gamma\in \mathbb{F}_q^t$. Let $\phi$ be a function from $\mathbb{F}_q^s\rightarrow \mathbb{F}_q^t$ and $g$ a function from $\mathbb{F}_q^s\rightarrow \mathbb{F}_q$.

A Maiorana-McFarland function $f$ is defined from $\mathbb{F}_q^m\rightarrow \mathbb{F}_q$ satisfying
 $$f(\alpha)=f(\beta, \gamma)=\phi(\beta)\cdot \gamma+g(\beta),$$ and
$$\mathcal{C}_f:=\{\mathbf{c}(u,\mathbf{v}):=((u,\mathbf{v})\cdot\mathbf{d_{\alpha}})_{\alpha\in \mathbb{F}_q^m\backslash \{\mathbf{0}\}}=(uf(\mathbf{\alpha})+\mathbf{v\cdot \alpha})_{\alpha\in \mathbb{F}_q^m\backslash \{\mathbf{0}\}}\mid\ u\in \mathbb{F}_q,\mathbf{v}\in \mathbb{F}_q^m\}.$$

\begin{lem}\label{311}
Let $A\mathbf{x}=\mathbf{b},$ where $A\in \mathbb{F}_q^{l\times n}$ and $\mathbf{b}\in \mathbb{F}_q^{l\times 1}$. Then\\
$(1)$ when $\mathbf{b}=\mathbf{0},$  there exist $n-rank(A)$ linearly independent solutions of the equations, \\
$(2)$ when $\mathbf{b}\neq\mathbf{0}$ and $rank(A)=rank(A,\mathbf{b})$, there exist $n-rank(A)+1$ linearly independent solutions of the equations.
\end{lem}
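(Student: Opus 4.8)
The plan is to reduce both parts to standard facts about the solution set of a linear system over $\mathbb{F}_q$, namely the rank--nullity theorem and the Rouch\'e--Capelli consistency criterion, and then to exhibit explicit families of solutions and check their independence by hand.

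For part $(1)$, the solution set of $A\mathbf{x}=\mathbf{0}$ is exactly $\Ker(A)$, a linear subspace of $\mathbb{F}_q^n$. By rank--nullity its dimension is $\dim\Ker(A)=n-\rank(A)$, so any basis $\{\mathbf{w}_1,\dots,\mathbf{w}_{n-\rank(A)}\}$ of $\Ker(A)$ furnishes $n-\rank(A)$ linearly independent solutions. This part requires no work beyond quoting rank--nullity.

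For part $(2)$, I would first observe that the hypothesis $\rank(A)=\rank(A,\mathbf{b})$ is precisely the condition that the inhomogeneous system is consistent, so a particular solution $\mathbf{x}_0$ with $A\mathbf{x}_0=\mathbf{b}$ exists. Writing $d:=n-\rank(A)$ and fixing a basis $\{\mathbf{w}_1,\dots,\mathbf{w}_d\}$ of $\Ker(A)$, the entire solution set is the affine space $\mathbf{x}_0+\Ker(A)$. I would then propose the $d+1$ candidate solutions $\mathbf{x}_0,\ \mathbf{x}_0+\mathbf{w}_1,\ \dots,\ \mathbf{x}_0+\mathbf{w}_d$, each of which clearly lies in the solution set, and claim that they are linearly independent.

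The main obstacle --- indeed the only genuinely nonroutine step --- is verifying this linear independence, and here the hypothesis $\mathbf{b}\neq\mathbf{0}$ is essential. Suppose $c_0\mathbf{x}_0+\sum_{i=1}^{d}c_i(\mathbf{x}_0+\mathbf{w}_i)=\mathbf{0}$. Collecting terms gives $\big(c_0+\sum_{i=1}^d c_i\big)\mathbf{x}_0+\sum_{i=1}^d c_i\mathbf{w}_i=\mathbf{0}$. Applying $A$ and using $A\mathbf{w}_i=\mathbf{0}$ together with $A\mathbf{x}_0=\mathbf{b}$ yields $\big(c_0+\sum_{i=1}^d c_i\big)\mathbf{b}=\mathbf{0}$; since $\mathbf{b}\neq\mathbf{0}$ this forces $c_0+\sum_{i=1}^d c_i=0$. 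Substituting back leaves $\sum_{i=1}^d c_i\mathbf{w}_i=\mathbf{0}$, whence $c_1=\dots=c_d=0$ by independence of the $\mathbf{w}_i$, and then $c_0=0$. Thus the $d+1=n-\rank(A)+1$ displayed solutions are linearly independent, which is exactly the assertion. One may add the remark that this count is sharp, since $\mathbf{0}\notin\mathbf{x}_0+\Ker(A)$ forces the linear span of the solution set to have dimension $d+1$; but this optimality is not needed for the stated existence claim.
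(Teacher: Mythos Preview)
Your proof is correct. The paper states this lemma without proof, treating it as a standard linear-algebra fact; your argument via rank--nullity, Rouch\'e--Capelli, and the explicit independence check for $\mathbf{x}_0,\mathbf{x}_0+\mathbf{w}_1,\dots,\mathbf{x}_0+\mathbf{w}_d$ fills in exactly the details one would expect, and your use of $\mathbf{b}\neq\mathbf{0}$ at the key step is precisely where that hypothesis is needed.
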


According to $q>2$ and $q=2$, we have the following two theorems.
\begin{thm}\label{41}
Let $q> 2$, $s\geq 2$ and $t\geq 2$. Let $U=\{\beta\in \mathbb{F}_q^s: \mathbf{wt}(\beta)\leq 1\}.$ If $\phi$ and $g$ satisfy:\\
$(1)$ $\phi$ is an injection from $U$ to $\mathbb{F}_q^t\backslash \{\mathbf{0}\}$,\\
$(2)$ $g(\beta)\equiv c\in \mathbb{F}_q^*,$ for all $\beta\in \mathbb{F}_q^s,$\\
then $\mathcal{C}_f$ is a $[q^m-1, m+1]_q$ minimal linear code.

\end{thm}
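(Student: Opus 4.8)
The plan is to first settle the parameters via {\bf Corollary \ref{332}} and then prove minimality by running the trichotomy of {\bf Proposition \ref{1}}, {\bf Proposition \ref{3}} and {\bf Proposition \ref{4}}. Throughout I write $\alpha=(\beta,\gamma)$ and, in {\bf Case 2}, $\omega=-\frac{1}{u}\mathbf{v}=(\omega_1,\omega_2)$ with $\beta,\omega_1\in\mathbb{F}_q^s$ and $\gamma,\omega_2\in\mathbb{F}_q^t$, so that $f(\alpha)=\phi(\beta)\cdot\gamma+c$. To see $\dim\mathcal{C}_f=m+1$ I would check that $f$ differs from every linear map on the nonzero vectors: if $\phi(\beta)\cdot\gamma+c=\omega_1\cdot\beta+\omega_2\cdot\gamma$ held for all $\mathbf{x}\neq\mathbf{0}$, then fixing a nonzero $\beta\in U$ and letting $\gamma$ range forces $\phi(\beta)=\omega_2$; applying this to two distinct nonzero elements of $U$ (there are $s(q-1)\geq 2$ of them) contradicts the injectivity of $\phi$.

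I would handle {\bf Case 1} and {\bf Case 2} together. By {\bf Proposition \ref{1}} and {\bf Proposition \ref{3}}, $\mathbf{c}(u,\mathbf{v})$ is minimal exactly when the set $S=\{\mathbf{x}:f(\mathbf{x})=\omega\cdot\mathbf{x}\}$ (with $\omega=\mathbf{0}$ in Case 1) contains a basis of $\mathbb{F}_q^m$, i.e. when $\mathrm{Span}(S)=\mathbb{F}_q^m$. Rewriting the condition as $(\phi(\beta)-\omega_2)\cdot\gamma=\omega_1\cdot\beta-c$, for each $\beta\in U$ with $\phi(\beta)\neq\omega_2$ the fibre $A_\beta=\{\gamma:(\phi(\beta)-\omega_2)\cdot\gamma=\omega_1\cdot\beta-c\}$ is a nonempty affine hyperplane of $\mathbb{F}_q^t$ with normal direction $\phi(\beta)-\omega_2$. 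Suppose $S$ lay in a hyperplane $\{\mu\cdot\beta+\nu\cdot\gamma=0\}$ with $(\mu,\nu)\neq\mathbf{0}$. Then $\nu\cdot\gamma$ is constant on each $A_\beta$, so $\nu=\mathbf{0}$ or $\nu\parallel\phi(\beta)-\omega_2$. If $\nu=\mathbf{0}$, then $\mu\cdot\beta=0$ for all admissible $\beta$; running over nonzero multiples of each $\mathbf{e}_i$ (a multiple avoiding the exceptional vector exists because $q>2$) gives $\mu=\mathbf{0}$, a contradiction. If $\nu\neq\mathbf{0}$, then $\phi(\beta)\in\omega_2+\mathrm{Span}(\nu)$ for every admissible $\beta$, so $\phi$ sends at least $s(q-1)$ distinct elements of $U$ into a set of size $q-1$; since $s(q-1)>q-1$ for $s\geq 2$, this contradicts injectivity. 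Hence $\mathrm{Span}(S)=\mathbb{F}_q^m$ and both cases are minimal.

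For {\bf Case 3}, {\bf Proposition \ref{4}} asks for $m$ indices in $H(\mathbf{v})$ whose vectors $\mathbf{d}_{\alpha}=(f(\alpha),\alpha)$ are independent. Since the last $m$ coordinates of these vectors sweep out the $(m-1)$-dimensional space $H(\mathbf{v})$, this is possible iff $(1,\mathbf{0})\in\mathrm{Span}\{\mathbf{d}_{\alpha}:\alpha\in H(\mathbf{v})\setminus\{\mathbf{0}\}\}$, equivalently iff $f$ is not the restriction of a linear form on $H(\mathbf{v})\setminus\{\mathbf{0}\}$. I would exclude linearity directly: because $t\geq 2$ there is a $\gamma\neq\mathbf{0}$ with $(\mathbf{0},\gamma)\in H(\mathbf{v})$, and $f(\mathbf{0},a\gamma)=a\,\phi(\mathbf{0})\cdot\gamma+c$, whereas linearity would force $f(\mathbf{0},a\gamma)=a\,f(\mathbf{0},\gamma)=a\,\phi(\mathbf{0})\cdot\gamma+ac$; choosing $a\in\mathbb{F}_q^{*}$ with $a\neq 1$ (possible since $q>2$) yields $c=ac$, against $c\neq 0$. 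Thus Case 3 is minimal as well, finishing the proof.

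The crux is the spanning statement for Cases 1--2: the defining equation is bilinear rather than linear, so the clean move is to dualize ``$S$ spans $\mathbb{F}_q^m$'' into ``$S$ meets no hyperplane through the origin'', whereupon the affine fibres $A_\beta$ convert the whole question into the injectivity of $\phi$ together with a one-line count. That count is precisely where $s\geq 2$ enters, while $t\geq 2$ and $q>2$ are needed only in Case 3 (to produce a nonzero $\gamma\in H(\mathbf{v})$ with zero $\beta$-part and to supply a scalar $a\neq 0,1$); the nonzero constant $g\equiv c$ is the single feature that destroys linearity everywhere and drives every contradiction.
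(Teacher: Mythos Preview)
Your proof is correct, and it proceeds quite differently from the paper's argument. The paper constructs explicit bases throughout: in its Case~3 (your Cases~1 and~2 combined) it splits according to whether $\omega_1=\mathbf{0}$ and whether $\phi(\mathbf{0})=\omega_2$, and in each subcase manufactures $m$ concrete vectors via {\bf Lemma~\ref{311}}. Your route dualizes the spanning condition to ``$S$ is not contained in a hyperplane'', and then the affine-fibre structure of $S$ over $U$ immediately reduces everything to a pigeonhole against the injectivity of $\phi$. This unifies the paper's three subcases into one short argument and makes transparent exactly why $s\geq 2$ is the relevant hypothesis. Similarly, in Case~3 the paper again builds $m$ explicit vectors $\mathbf{d}_{\alpha_i}$, whereas you observe that minimality is equivalent to $f$ not agreeing with a linear form on $H(\mathbf{v})\setminus\{\mathbf{0}\}$ and kill this with a single scaling argument along $(\mathbf{0},\gamma)$.

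One small inaccuracy in your closing commentary: you do use $q>2$ in Cases~1--2 as well, in the $\nu=\mathbf{0}$ branch, where ``a multiple avoiding the exceptional vector exists'' needs $q-1\geq 2$. The proof itself invokes this correctly; only the summary sentence ``$q>2$ are needed only in Case~3'' is misleading. The trade-off is that the paper's explicit constructions, while longer, would more readily yield weight information if one wanted it; your argument is cleaner for pure minimality.
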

\begin{proof} By condition (2), it is easy to see that $f(\mathbf{x})\neq\mathbf{\omega\cdot x}$ for any $\mathbf{\omega}\in \mathbb{F}_q^m$, then by {\bf Corollary \ref{332}}, $\mathcal{C}_f$ is a $[q^m-1,m+1]_q$ linear code.
Next, we will prove the codewords $\mathbf{c}(u,\mathbf{v})$ are minimal in three different cases.

{\bf Case 1:} When $u\neq 0$ and $\mathbf{v}=\mathbf{0}$, since $\phi(\mathbf{0})\neq \mathbf{0}$ and $g(\mathbf{0})\neq0$, by {\bf Lemma \ref{311}} there exists ${\gamma_1},...,{\gamma_t}\in \mathbb{F}_q^t$ which  are linearly independent, such that $\phi(\mathbf{0})\cdot \gamma_j=-g(\mathbf{0})$, $1\leq j\leq t$. Since for $1\leq i\leq s,\ \phi(\mathbf{e}_i)\neq \mathbf{0},$ by {\bf Lemma \ref{311}}, there exist $\gamma_{i0}\in \mathbb{F}_q^t$ such that $\phi(\mathbf{e}_i)\cdot \gamma_{i0}=-g(\mathbf{e}_i)$, where $\mathbf{e}_i\in \mathbb{F}_q^s$, $1\leq i\leq s$. Let
   $$
   \alpha_k=\left\{
            \begin{array}{ll}
             (\mathbf{0},\gamma_k), & \hbox{$1\leq k\leq t$},\\
             (\mathbf{e}_{k-t},\gamma_{(k-t)0}), & \hbox{$t<k\leq m$}.
           \end{array}
          \right.
   $$
Then for $1\leq k\leq m$, $f(\alpha_k)=0$ and $\{\alpha_1,...,\alpha_m\}$
   is a basis of $\mathbb{F}_q^m$. By {\bf Proposition \ref{1}}, $\mathbf{c}(u,\mathbf{v})$ is minimal.

{\bf Case 2:} When $u=0$ and $\mathbf{v}\neq \mathbf{0}$, let $\mathbf{v}=(\mathbf{v_1},\mathbf{v_2})$, $\mathbf{v_1}\in \mathbb{F}_q^s$, $\mathbf{v_2}\in \mathbb{F}_q^t$.

 {\bf If $\mathbf{v_2}\neq 0$}, by {\bf Lemma \ref{311}}, there exist $\gamma_1,...,\gamma_{t-1}\in \mathbb{F}_q^t$ which are linear independent, such that $\mathbf{v_2}\cdot \gamma_j=0=-\mathbf{v_1}\cdot \mathbf{0},\ 1\leq j\leq t-1,$ and  there exist $\gamma_{i0}\in \mathbb{F}_q^t$ such that $\mathbf{v_2}\cdot \gamma_{i0}=-\mathbf{v_1}\cdot\mathbf{e}_i$, where $\mathbf{e}_i\in \mathbb{F}_q^s$, $1\leq i\leq s$. Let
  $$
   \alpha_k=\left\{
            \begin{array}{ll}
             (\mathbf{0},\gamma_k), & \hbox{$1\leq k\leq t-1$},\\
             (\mathbf{e}_{k-(t-1)},\gamma_{(k-(t-1))0}), & \hbox{$t-1< k\leq m-1$},\\
             (\mathbf{0},\mathbf{0}), & \hbox{$k=m$}.
           \end{array}
          \right.
   $$
   Then $\alpha_1,...,\alpha_m\in H(\mathbf{v})$. Since $f(\mathbf{0},\mathbf{0})=g(\mathbf{0})\neq 0,$ it is easy to prove that $\mathbf{d}_{\alpha_1},...,\mathbf{d}_{\alpha_m}$ are linearly independent. By {\bf Proposition \ref{4}}, $\mathbf{c}(u,\mathbf{v})$ is minimal.

   {\bf If $\mathbf{v_2}=\mathbf{0}$}, then $\mathbf{v_1}\neq \mathbf{0}$. By {\bf Lemma \ref{311}} there exist $\beta_1,...,\beta_{s-1}\in \mathbb{F}_q^s$ which are linearly independent, such that $\mathbf{v_1}\cdot \beta_i=0=-\mathbf{v_2}\cdot \mathbf{0}.$ Let

   $$
   \alpha_k=\left\{
            \begin{array}{ll}
             (\beta_k,\mathbf{0}), & \hbox{$1\leq k\leq s-1$},\\
             (\mathbf{0},\mathbf{e}_{k-(s-1)}), & \hbox{$s-1< k\leq m-1$},\\
             (\mathbf{0},\mathbf{0}), & \hbox{$k=m$},
           \end{array}
          \right.
   $$
   where $\mathbf{e}_{k-(s-1)}\in \mathbb{F}_q^t.$ Then $\alpha_1,...,\alpha_m\in H(\mathbf{v})$. Since $f(\mathbf{0},\mathbf{0})=g(\mathbf{0})\neq 0,$ it is easy to prove that $\mathbf{d}_{\alpha_1},...,\mathbf{d}_{\alpha_m}$ are linearly independent. By {\bf Proposition \ref{4}}, $\mathbf{c}(u,\mathbf{v})$ is minimal.

{\bf Case 3:}  When $u\neq 0$ and $\mathbf{v}\neq \mathbf{0}$, let $\mathbf{\omega}=-\frac{1}{u}\mathbf{v}$ and $\mathbf{\omega}=({\omega_1},{\omega_2})$, where ${\omega_1}\in \mathbb{F}_q^s$, ${\omega_2}\in \mathbb{F}_q^t$.

{\bf If $\omega_1\neq \mathbf{0}$}, since $g(\beta)\equiv c\neq 0$, by {\bf Lemma \ref{311}}, there exist $\beta_1,...,\beta_{s}\in \mathbb{F}_q^s$ which are linearly independent, such that $\omega_1\cdot \beta_i =c,\ 1\leq i\leq s$.  Since $q>2$ and $\phi$ is an injection from $U$ to $\mathbb{F}_q^t\backslash \{\mathbf{0}\}$, there exists $a\in \mathbb{F}_q$ such that $\phi(a\mathbf{e}_1)\neq \omega_2$ and $\omega_1\cdot a\mathbf{e}_1\neq c$. Then by {\bf Lemma \ref{311}}, there exist $\gamma_1,...,\gamma_{t}\in \mathbb{F}_q^t$ which are linearly independent, such that $(\phi(a\mathbf{e}_1)-\omega_2)\cdot \gamma_j=\omega_1\cdot (a\mathbf{e}_1)-c,\ 1\leq j\leq t.$ Let
  $$
   \alpha_k=\left\{
            \begin{array}{ll}
             (\beta_k,\mathbf{0}), & \hbox{$1\leq k\leq s$},\\
             (a\mathbf{e}_1,\gamma_{k-s}), & \hbox{$s<k\leq m$}.
           \end{array}
          \right.
   $$

Then $\alpha_1,...,\alpha_m$ are linearly independent and $f(\alpha_k)=\omega\cdot \alpha_k,\ 1\leq k\leq m$. By {\bf Proposition \ref{3}}, $\mathbf{c}(u,\mathbf{v})$ is minimal.

{\bf If $\omega_1=\mathbf{0}\ \rm{and}\ \phi(\mathbf{0})\neq \omega_2$},  by {\bf Lemma \ref{311}}, there exist $\gamma_1,...,\gamma_{t}\in \mathbb{F}_q^t$ which are linearly independent, such that $(\phi(\mathbf{0})-\omega_2)\cdot \gamma_j=-c, \ 1\leq j\leq t.$ Since $q> 2$ and $\phi$ is an injection from $U$ to $\mathbb{F}_q^t\backslash \{\mathbf{0}\}$, there exist $a_i\in \mathbb{F}_q^*$ such that $\phi(a_i\mathbf{e}_i)\neq \omega_2$, $1\leq i\leq s$. By {\bf Lemma \ref{311}}, there exist $\gamma_{i0}\in \mathbb{F}_q^t$ such that $(\phi(a_i\mathbf{e}_i)-\omega_2)\cdot \gamma_{i0}=-c, \ 1\leq i\leq s.$  Let
  $$
   \alpha_k=\left\{
            \begin{array}{ll}
             (\mathbf{0},\gamma_k), & \hbox{$1\leq k\leq t$},\\
             (a_{k-t}\mathbf{e}_{k-t},\gamma_{(k-t)0}), & \hbox{$t<k\leq m$}.
           \end{array}
          \right.
   $$
Then $\alpha_1,...,\alpha_m$ are linearly independent and $f(\alpha_k)=\omega\cdot \alpha_k,\ 1\leq k\leq m$. By {\bf Proposition \ref{3}}, $\mathbf{c}(u,\mathbf{v})$ is minimal.

{\bf If $\omega_1=\mathbf{0}\ \rm{and}\ \phi(\mathbf{0})= \omega_2$}, since $\phi$ is an injection from $U$ to $\mathbb{F}_q^t\backslash \{\mathbf{0}\}$, $\phi(a\mathbf{e}_i)\neq \omega_2$ for any $a\in \mathbb{F}_q^*, \ 1\leq i\leq s.$ By {\bf Lemma \ref{311}}, there exist $\gamma_{11},...,\gamma_{1t}\in \mathbb{F}_q^t$ which are linearly independent, such that $(\phi(\mathbf{e}_1)-\omega_2)\cdot \gamma_{1j}=-c, \ 1\leq j\leq t,$ and there exist $\gamma_i\in \mathbb{F}_q^t$, such that $(\phi(\mathbf{e}_i)-\omega_2)\cdot \gamma_{i}=-c, \ 2\leq i\leq s.$   Let
  $$
   \alpha_k=\left\{
            \begin{array}{ll}
             (\mathbf{e}_1,\gamma_{1k}), & \hbox{$1\leq k\leq t$},\\
             (\mathbf{e}_{{k-t+1}},\gamma_{k-t+1}), & \hbox{$t<k\leq m-1$}.
           \end{array}
          \right.
   $$
Then $\alpha_1,...,\alpha_{m-1}$ are linearly independent and $f(\alpha_k)=\omega\cdot \alpha_k,\ 1\leq k\leq m-1$. Next, we will choose a suitable $\alpha_m.$

If there exists $a\in \mathbb{F}_q^*$, such that $\phi(a\mathbf{e}_1)-\omega_2 \notin \mathbb{F}_q(\phi(\mathbf{e}_1)-\omega_2)$, then there exists $\gamma_0\in \mathbb{F}_q^t$, such that
 \begin{equation} \label{eq}
\left\{ \begin{aligned}
(\phi(a\mathbf{e}_1)-\omega_2)\cdot \gamma_0& = -c\\
(\phi(\mathbf{e}_1)-\omega_2)\cdot \gamma_0 & \neq -ac.
\end{aligned}
\right.
\end{equation}
In this case, let $\alpha_m=(a\mathbf{e}_1,\gamma_0)$.

If for all $a\in \mathbb{F}_q^*$, $\phi(a\mathbf{e}_1)-\omega_2 \in \mathbb{F}_q(\phi(\mathbf{e}_1)-\omega_2)$, since $\phi$ is an injection, $\phi(\mathbf{e}_2)-\omega_2 \notin \mathbb{F}_q(\phi(\mathbf{e}_1)-\omega_2)$, and there exists $\eta\in \mathbb{F}_q^t$, such that
 \begin{equation} \label{eq}
\left\{ \begin{aligned}
(\phi(\mathbf{e}_2)-\omega_2)\cdot \eta & = 0\\
(\phi(\mathbf{e}_1)-\omega_2)\cdot \eta & =1.
\end{aligned}
\right.
\end{equation}
In this case, let $\alpha_m=(\mathbf{e}_2,\gamma_2+\eta)$.

We can verify $\alpha_1,...,\alpha_m$ are linearly independent and $f(\alpha_k)=\omega\cdot \alpha_k,\ 1\leq k\leq m$. By {\bf Proposition \ref{3}}, $\mathbf{c}(u,\mathbf{v})$ is minimal.

This completes the proof.

 \end{proof}

 \begin{thm}\label{42}
Let $q= 2$, $s\geq 2$ and $t\geq 2$. Let $U=\{\beta\in \mathbb{F}_q^s: \mathbf{wt}(\beta)\leq 1\}.$ If $\phi$ and $g$ satisfy:\\
$(1)$ $\phi$ is an injection from $U$ to $\mathbb{F}_q^t\backslash \{\mathbf{0}\}$, and for all $\beta$ satisfying $\mathbf{wt}(\beta)=2$, $\phi(\beta)=\mathbf{0}$,\\
$(2)$ $g(\beta)\equiv 1$,\\
then $\mathcal{C}_f$ is a $[q^m-1, m+1]_q$ minimal linear code.

\end{thm}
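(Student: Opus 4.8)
\textbf{Proof proposal for Theorem \ref{42}.}

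The plan is to mirror the three-case structure used in the proof of Theorem \ref{41}, invoking Propositions \ref{1}, \ref{3}, and \ref{4} together with the linear-algebra bookkeeping of Lemma \ref{311}, but to adapt each case to the binary setting where $\mathbb{F}_q^*=\{1\}$ collapses the scaling freedom that was available when $q>2$. First I would verify via condition (2) that $g\equiv 1$ forces $f(\mathbf{x})\neq \boldsymbol{\omega}\cdot\mathbf{x}$ for every $\boldsymbol{\omega}$, so that Corollary \ref{332} applies and $\mathcal{C}_f$ is genuinely a $[q^m-1,m+1]_q$ code. The key structural facts I would exploit throughout are that $f(\beta,\gamma)=\phi(\beta)\cdot\gamma+1$, that $\phi$ is injective on $U=\{\beta:\mathbf{wt}(\beta)\le 1\}$ into $\mathbb{F}_2^t\setminus\{\mathbf{0}\}$, and crucially that $\phi(\beta)=\mathbf{0}$ whenever $\mathbf{wt}(\beta)=2$; this last condition is the new binary ingredient and I expect it to be precisely what rescues the cases where the $q>2$ argument used a scalar $a\in\mathbb{F}_q^*$ with $a\neq 1$.

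For \textbf{Case 1} ($u\neq 0$, $\mathbf{v}=\mathbf{0}$), I would solve the linear systems $\phi(\mathbf{0})\cdot\gamma=1$ and $\phi(\mathbf{e}_i)\cdot\gamma_{i0}=1$ using Lemma \ref{311}; since every $\phi$-value here is nonzero, each system is consistent, and assembling $\alpha_k=(\mathbf{0},\gamma_k)$ for $1\le k\le t$ and $\alpha_k=(\mathbf{e}_{k-t},\gamma_{(k-t)0})$ for $t<k\le m$ gives a basis of $\mathbb{F}_2^m$ on which $f$ vanishes, so Proposition \ref{1} applies. For \textbf{Case 2} ($u=0$, $\mathbf{v}\neq\mathbf{0}$), I would split on whether the $\mathbb{F}_2^t$-part $\mathbf{v}_2$ of $\mathbf{v}$ vanishes, exactly as in Theorem \ref{41}: in each subcase I build $m-1$ independent vectors inside $H(\mathbf{v})$ together with the extra vector $(\mathbf{0},\mathbf{0})$, and independence of $\mathbf{d}_{\alpha_1},\dots,\mathbf{d}_{\alpha_m}$ follows because $f(\mathbf{0},\mathbf{0})=g(\mathbf{0})=1\neq 0$ supplies a nonzero first coordinate for $\mathbf{d}_{(\mathbf{0},\mathbf{0})}$ that the genuine elements of $H(\mathbf{v})$ cannot cancel; Proposition \ref{4} then closes the case.

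The real work is \textbf{Case 3} ($u\neq 0$, $\mathbf{v}\neq\mathbf{0}$), where I set $\boldsymbol{\omega}=-\tfrac1u\mathbf{v}=(\omega_1,\omega_2)$ and must produce a basis $\alpha_1,\dots,\alpha_m$ with $f(\alpha_k)=\boldsymbol{\omega}\cdot\alpha_k$, i.e. $(\phi(\beta_k)-\omega_2)\cdot\gamma_k=\omega_1\cdot\beta_k-1$. I expect the main obstacle to be the subcase $\omega_1=\mathbf{0}$, $\phi(\mathbf{0})=\omega_2$, which in the $q>2$ proof was handled by choosing a scalar $a\neq 0,1$; over $\mathbb{F}_2$ no such scalar exists, so I would instead use the weight-$2$ vanishing condition on $\phi$ to locate a second coordinate direction. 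Concretely, since $\phi(\mathbf{e}_i)\neq\omega_2=\phi(\mathbf{0})$ for each $i$ by injectivity, I can solve $(\phi(\mathbf{e}_i)-\omega_2)\cdot\gamma=-1$ to build most of the basis, and then for the final vector I would take a weight-$2$ argument $\beta$ (so $\phi(\beta)=\mathbf{0}$, giving $f(\beta,\gamma)=1$ independent of $\gamma$) to supply the missing independent direction while still satisfying the affine constraint $\omega_1\cdot\beta-1=-1$ forced by $\omega_1=\mathbf{0}$. Verifying that this choice yields both linear independence of the $\alpha_k$ and the required $f(\alpha_k)=\boldsymbol{\omega}\cdot\alpha_k$ simultaneously is the delicate point; once it is checked, Proposition \ref{3} finishes the case and hence the theorem.
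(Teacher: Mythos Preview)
Your handling of Cases 1 and 2 is fine and matches the paper. The gap is in your diagnosis of Case 3. You single out the subcase $\omega_1=\mathbf{0}$, $\phi(\mathbf{0})=\omega_2$ as the ``main obstacle'' and propose to deploy the weight-$2$ hypothesis there, but this misreads where the $q>2$ argument actually breaks down over $\mathbb{F}_2$.

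First, the subcase $\omega_1\neq\mathbf{0}$ does \emph{not} carry over from Theorem~\ref{41}: that proof selects $a\in\mathbb{F}_q$ satisfying simultaneously $\phi(a\mathbf{e}_1)\neq\omega_2$ and $\omega_1\cdot(a\mathbf{e}_1)\neq c$, and over $\mathbb{F}_2$ both candidates $a=0,1$ can fail (take $\phi(\mathbf{0})=\omega_2$ and $\omega_1\cdot\mathbf{e}_1=1$). The paper resolves this with a three-way split on whether $\phi(\mathbf{0})=\omega_2$ and whether $\omega_1\cdot\mathbf{e}_1=1$, and in the hardest branch builds $\alpha_m=(\mathbf{e}_2,\gamma')$ using a $2\times t$ system coming from the linear independence of $\phi(\mathbf{e}_1)-\omega_2$ and $\phi(\mathbf{e}_2)-\omega_2$. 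Your proposal does not mention this subcase at all.

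Second, the weight-$2$ hypothesis is actually used by the paper in the subcase $\omega_1=\mathbf{0}$, $\phi(\mathbf{0})\neq\omega_2$, not in the one you flag. There the $q>2$ proof picked $a_i\in\mathbb{F}_q^*$ with $\phi(a_i\mathbf{e}_i)\neq\omega_2$; over $\mathbb{F}_2$ the only choice $a_i=1$ may give $\phi(\mathbf{e}_i)=\omega_2$, so the paper replaces $\mathbf{e}_i$ by $\beta_i=\mathbf{e}_{i_0}+\mathbf{e}_i$ of weight $2$, whence $\phi(\beta_i)=\mathbf{0}\neq\omega_2$. By contrast, in the subcase $\omega_1=\mathbf{0}$, $\phi(\mathbf{0})=\omega_2$ the paper does \emph{not} invoke the weight-$2$ condition at all: it simply takes the $\alpha_m=(\mathbf{e}_2,\gamma_2+\eta)$ branch already present in Theorem~\ref{41}, using only that $\phi(\mathbf{e}_1)\neq\phi(\mathbf{e}_2)$. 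Your plan to insert a weight-$2$ vector here instead may be salvageable, but as you note the linear-independence check is genuinely delicate (the constraint $\omega_2\cdot\gamma_0=1$ leaves only $2^{t-1}$ choices for $\gamma_0$, and the dependence conditions cut out a set of the same size), and you have not carried it out.
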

\begin{proof} By condition (2), it is easy to see $f(\mathbf{x})\neq\mathbf{\omega\cdot x}$ for any $\mathbf{\omega}\in \mathbb{F}_q^m$, then by {\bf Corollary \ref{332}}, $\mathcal{C}_f$ is a $[q^m-1,m+1]_q$ linear code.
Next, we will prove the codewords $\mathbf{c}(u,\mathbf{v})$ are minimal in three different cases.

{\bf Case 1:}  When $u\neq 0$ and $\mathbf{v}=\mathbf{0}$,  the proof is the same as that in {\bf Theorem \ref{41}}.

{\bf Case 2:}  When $u\neq 0$ and $\mathbf{v}=\mathbf{0}$, the proof is the same as that in {\bf Theorem \ref{41}}.

{\bf Case 3:} When $u\neq 0$ and $\mathbf{v}\neq \mathbf{0}$, let $\mathbf{\omega}=-\frac{1}{u}\mathbf{v}$ and $\mathbf{\omega}=({\omega_1},{\omega_2})$, where ${\omega_1}\in \mathbb{F}_q^s$, ${\omega_2}\in \mathbb{F}_q^t$.

{\bf If $\omega_1\neq \mathbf{0}$}, since $g(\mathbf{x})\equiv 1\neq 0$, by {\bf Lemma \ref{311}}, there exist $\beta_1,...,\beta_{s}\in \mathbb{F}_q^s$ which are linearly independent, such that $\omega_1\cdot \beta_i =1.$

(1) When $\phi(\mathbf{0})\neq \omega_2,$ by {\bf Lemma \ref{311}}, there exist $\gamma_1,...,\gamma_{t}\in \mathbb{F}_q^t$ which are linearly independent, such that $(\phi(\mathbf{0})-\omega_2)\cdot \gamma_j=-1=\omega_1\cdot \mathbf{0}-g(\mathbf{0}),\ 1\leq j\leq t.$
 Let
  $$
   \alpha_k=\left\{
            \begin{array}{ll}
             (\beta_k,\mathbf{0}), & \hbox{$1\leq k\leq s$},\\
             (\mathbf{0},\gamma_{k-s}), & \hbox{$s<k\leq m$}.
           \end{array}
          \right.
   $$

 (2) When $\phi(\mathbf{0})= \omega_2$ and $\omega_1\cdot \mathbf{e}_1-1\neq 0$, since $\phi$ is an injection from $U$ to $\mathbb{F}_q^t\backslash \{\mathbf{0}\}$, $\phi(\mathbf{e}_1)\neq \omega_2$. Then by {\bf Lemma \ref{311}}, there exist $\gamma_1,...,\gamma_{t}\in \mathbb{F}_q^t$ which are linearly independent, such that $(\phi(\mathbf{e}_1)-\omega_2)\cdot \gamma_j=\omega_1\cdot \mathbf{e}_1-1=\omega_1\cdot \mathbf{e}_1-g(\mathbf{e}_1),\ 1\leq j\leq t.$
 Let
  $$
   \alpha_k=\left\{
            \begin{array}{ll}
             (\beta_k,\mathbf{0}), & \hbox{$1\leq k\leq s$},\\
             (\mathbf{e}_1,\gamma_{k-s}), & \hbox{$s<k\leq m$}.
           \end{array}
          \right.
   $$

 (3) When $\phi(\mathbf{0})= \omega_2$ and $\omega_1\cdot \mathbf{e}_1-1= 0$, since $\phi$ is an injection from $U$ to $\mathbb{F}_q^t\backslash \{\mathbf{0}\}$, $\phi(\mathbf{e}_i)\neq \omega_2$ for $1\leq i\leq s$. Then by {\bf Lemma \ref{311}}, there exist $\gamma_1,...,\gamma_{t-1}\in \mathbb{F}_q^t$ which are linearly independent, such that $(\phi(\mathbf{e}_1)-\omega_2)\cdot \gamma_j=\omega_1\cdot \mathbf{e}_1-1=\omega_1\cdot \mathbf{e}_1-g(\mathbf{e}_1),\ 1\leq j\leq t-1.$ Since $\phi(\mathbf{e}_2)\neq \phi(\mathbf{e}_1),$ $\phi(\mathbf{e}_2)-\omega_2$ and $\phi(\mathbf{e}_1)-\omega_2$ are linearly independent. By {\bf Lemma \ref{311}}, there exists $\gamma'$ such that

 \begin{equation} \label{eq}
\left\{ \begin{aligned}
(\phi(\mathbf{e}_1)-\omega_2)\cdot \gamma'& = 1 \\
(\phi(\mathbf{e}_2)-\omega_2)\cdot \gamma' & = \omega_1\cdot \mathbf{e}_2-1.
\end{aligned}
\right.
\end{equation}
 Let
  $$
   \alpha_k=\left\{
            \begin{array}{ll}
             (\beta_k,\mathbf{0}), & \hbox{$1\leq k\leq s$},\\
             (\mathbf{e}_1,\gamma_{k-s}), & \hbox{$s<k\leq m-1$},\\
             (\mathbf{e}_2,\gamma'), & \hbox{$k=m$}.
           \end{array}
          \right.
   $$

In all three cases (1),(2) and (3), $\alpha_1,...,\alpha_m$ are linearly independent and $f(\alpha_k)=\omega\cdot \alpha_k,\ 1\leq k\leq m$. By {\bf Proposition \ref{3}}, $\mathbf{c}(u,\mathbf{v})$ is minimal.

{\bf If $\omega_1=\mathbf{0}\ \rm{and}\ \phi(\mathbf{0})\neq \omega_2$},  by {\bf Lemma \ref{311}}, there exist $\gamma_1,...,\gamma_{t}\in \mathbb{F}_q^t$ which are linearly independent, such that $(\phi(\mathbf{0})-\omega_2)\cdot \gamma_j=1, \ 1\leq j\leq t.$ Since $s\geq 2$ and $\phi$ is an injection from $U$ to $\mathbb{F}_q^t\backslash \{\mathbf{0}\}$, there exists $\mathbf{e}_{i_0}$ such that $\phi(\mathbf{e}_{i_0})\neq \omega_2$. For $1\leq i\leq s,$ let
$$\beta_i=\left\{
            \begin{array}{ll}
             \mathbf{e}_{i_0}+\mathbf{e}_{i}, & \hbox{$i\neq i_0$},\\
             \mathbf{e}_{i_0}. & \hbox{$i=i_0$},
           \end{array}
          \right.$$
Then $\phi(\beta_i)\neq \omega_2.$
By {\bf Lemma \ref{311}}, there exist $\gamma_{i1}\in \mathbb{F}_q^t$ such that $(\phi(\beta_i)-\omega_2)\cdot \gamma_{i1}=1, \ 1\leq i\leq s.$  Let
  $$
   \alpha_k=\left\{
            \begin{array}{ll}
             (\mathbf{0},\gamma_k), & \hbox{$1\leq k\leq t$},\\
             (\beta_{k-t},\gamma_{(k-t)1}), & \hbox{$t<k\leq m$}.
           \end{array}
          \right.
   $$
Then $\alpha_1,...,\alpha_m$ are linearly independent and $f(\alpha_k)=\omega\cdot \alpha_k,\ 1\leq k\leq m$. By {\bf Proposition \ref{3}}, $\mathbf{c}(u,\mathbf{v})$ is minimal.

{\bf If $\omega_1=\mathbf{0}\ \rm{and}\ \phi(\mathbf{0})= \omega_2$}, since $\phi$ is an injection from $U$ to $\mathbb{F}_q^t\backslash \{\mathbf{0}\}$, $\phi(\mathbf{e}_i)\neq \omega_2$, $1\leq i\leq s.$ By {\bf Lemma \ref{311}}, there exist $\gamma_{11},...,\gamma_{1t}\in \mathbb{F}_q^t$ which are linearly independent, such that $(\phi(\mathbf{e}_1)-\omega_2)\cdot \gamma_{1j}=1, \ 1\leq j\leq t,$ and there exists $\gamma_i\in \mathbb{F}_q^t$, such that $(\phi(\mathbf{e}_i)-\omega_2)\cdot \gamma_{i}=1, \ 2\leq i\leq s.$   Let
  $$
   \alpha_k=\left\{
            \begin{array}{ll}
             (\mathbf{e}_1,\gamma_{1k}), & \hbox{$1\leq k\leq t$},\\
             (\mathbf{e}_{k-t+1},\gamma_{k-t+1}), & \hbox{$t<k\leq m-1$}.
           \end{array}
          \right.
   $$
Then $\alpha_1,...,\alpha_{m-1}$ are linearly independent and $f(\alpha_k)=\omega\cdot \alpha_k,\ 1\leq k\leq m-1$. Next, we will choose a suitable $\alpha_m.$

Since $\phi(\mathbf{e}_2)\neq \phi(\mathbf{e}_1),$ $\phi(\mathbf{e}_2)-\omega_2$ and $\phi(\mathbf{e}_1)-\omega_2$ are linearly independent. By {\bf Lemma \ref{311}}, there exists $\eta\in \mathbb{F}_q^t$ such that
 \begin{equation} \label{eq}
\left\{ \begin{aligned}
(\phi(\mathbf{e}_2)-\omega_2)\cdot \eta & = 0\\
(\phi(\mathbf{e}_1)-\omega_2)\cdot \eta & =1.
\end{aligned}
\right.
\end{equation}
Let $\alpha_m=(\mathbf{e}_2,\gamma_2+\eta)$.

We can verify $\alpha_1,...,\alpha_m$ are linearly independent and $f(\alpha_k)=\omega\cdot \alpha_k,\ 1\leq k\leq m$. By {\bf Proposition \ref{3}}, $\mathbf{c}(u,\mathbf{v})$ is minimal.

\end{proof}

{\bf Theorem \ref{41}} and {\bf Theorem \ref{42}} generalize the following result in \cite{DHZ2018} by Ding  et al..

 \begin{prop}\label{313}\cite[Theorem 23]{DHZ2018}
 Let $m\geq 7$ be an odd integer, $s:=(m+1)/2$, and $t:=(m-1)/2$. Let $U=\{\beta\in \mathbb{F}_q^s: \mathbf{wt}(\beta)\leq 1\}.$ and $V=\{\mathbf{0}\}$. Let $f$ be the Boolean function defined as $f(\mathbf{x},\mathbf{y})=\phi(\mathbf{x})\cdot \mathbf{y}+g(\mathbf{x})$, where $g\equiv 1$, and $\phi$ is an injection from $U$ to $GF(2)^t\setminus V$ and $\phi(\mathbf{x})=\mathbf{0}$ for any $\mathbf{\mathbf{x}}\in GF(2)^s\setminus U.$ Then the code $\mathcal{C}_f$ is a $[2^m-1,m+1,2^{m-1}-2^{t-1}(s-1)]$ binary minimal code with $\omega_{min}/\omega_{max}\leq 1/2.$
 \end{prop}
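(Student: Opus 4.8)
The plan is to derive this proposition as a specialization of \textbf{Theorem \ref{42}} for the minimality assertion, and then to obtain the length, dimension and the two extremal weights by a direct evaluation of the Maiorana--McFarland function. First I would verify the hypotheses of \textbf{Theorem \ref{42}} in the case $q=2$. Since $m\geq 7$ is odd, $s=(m+1)/2\geq 4$ and $t=(m-1)/2\geq 3$, so both exceed $2$; by assumption $\phi$ is an injection of $U$ into $\mathbb{F}_2^t\setminus\{\mathbf{0}\}$, and such an injection exists because $|U|=s+1\leq 2^t-1$ when $m\geq 7$, which is the first half of condition $(1)$. Every $\beta$ with $\mathbf{wt}(\beta)=2$ lies in $\mathbb{F}_2^s\setminus U$, so the hypothesis $\phi(\mathbf{x})=\mathbf{0}$ on $\mathbb{F}_2^s\setminus U$ in particular gives $\phi(\beta)=\mathbf{0}$ there, which is the remaining half of condition $(1)$; and $g\equiv 1$ is condition $(2)$. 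Thus \textbf{Corollary \ref{332}} shows $\mathcal{C}_f$ is a $[2^m-1,m+1]_2$ code and \textbf{Theorem \ref{42}} shows it is minimal.

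It remains to compute the extremal weights. For $(u,\mathbf{v})$ with $\mathbf{v}=(\mathbf{v_1},\mathbf{v_2})$, $\mathbf{v_1}\in\mathbb{F}_2^s$ and $\mathbf{v_2}\in\mathbb{F}_2^t$, I would rewrite the entry of $\mathbf{c}(u,\mathbf{v})$ at $\alpha=(\beta,\gamma)$ as
$$uf(\beta,\gamma)+\mathbf{v}\cdot\alpha=(u\phi(\beta)+\mathbf{v_2})\cdot\gamma+\big(ug(\beta)+\mathbf{v_1}\cdot\beta\big),$$
and, for each fixed $\beta$, count the $\gamma\in\mathbb{F}_2^t$ making this nonzero: there are $2^{t-1}$ of them when $u\phi(\beta)+\mathbf{v_2}\neq\mathbf{0}$, and otherwise $2^t$ or $0$ according to whether the scalar $ug(\beta)+\mathbf{v_1}\cdot\beta$ is $1$ or $0$. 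Writing $S_0=\{\beta:u\phi(\beta)+\mathbf{v_2}=\mathbf{0}\}$ and splitting $S_0$ according to that scalar, the number of $\alpha\in\mathbb{F}_2^m$ with nonzero entry equals $2^{m-1}-2^{t-1}\big(|S_0^{(0)}|-|S_0^{(1)}|\big)$, where $S_0^{(0)},S_0^{(1)}$ are the parts on which the scalar is $0$, $1$; the codeword weight is obtained by discarding the coordinate $\alpha=\mathbf{0}$, which is occupied exactly when $u=1$ (as $g\equiv 1$), and this unit correction applies uniformly to every $u=1$ codeword.

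The signed count $|S_0^{(0)}|-|S_0^{(1)}|$ is evaluated by a character sum over $U$. For $u=1$ and $\mathbf{v_2}=\mathbf{0}$ one has $S_0=\mathbb{F}_2^s\setminus U$ and $|S_0^{(0)}|-|S_0^{(1)}|=-\sum_{\beta\in S_0}(-1)^{\mathbf{v_1}\cdot\beta}$, which by subtracting the sum over $U$ equals $s+1-2\,\mathbf{wt}(\mathbf{v_1})$ for $\mathbf{v_1}\neq\mathbf{0}$; it is largest at $\mathbf{wt}(\mathbf{v_1})=1$, yielding $w_{\rm min}=2^{m-1}-2^{t-1}(s-1)$. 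When $\mathbf{v_2}\neq\mathbf{0}$ injectivity forces $|S_0|\leq 1$, so no smaller weight occurs, and the $u=0$ codewords all have weight $2^{m-1}$. The maximum is attained at $\mathbf{c}(1,\mathbf{0})$, where $S_0=\mathbb{F}_2^s\setminus U$ lies entirely in $S_0^{(1)}$, giving $w_{\rm max}=2^m-(s+1)2^{t-1}$. The ratio claim then reduces to $2w_{\rm min}\leq w_{\rm max}$, i.e. to $2^{t-1}(3-s)\leq 0$, which holds since $s\geq 4$; hence $w_{\rm min}/w_{\rm max}\leq\frac{1}{2}$.

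Since minimality is immediate from \textbf{Theorem \ref{42}}, the main obstacle is the weight enumeration. The delicate points are separating the regime $\mathbf{v_2}=\mathbf{0}$, where $\phi^{-1}(\mathbf{0})$ is the large set $\mathbb{F}_2^s\setminus U$, from $\mathbf{v_2}\neq\mathbf{0}$, where injectivity of $\phi$ on $U$ makes the fibre a single point or empty, and then confirming that these two regimes together with $u=0$ genuinely pin down the global minimum and maximum weights; the explicit character-sum evaluation over $U$ is what produces the closed forms for $w_{\rm min}$ and $w_{\rm max}$.
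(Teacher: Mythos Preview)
The paper does not supply its own proof of Proposition~\ref{313}; the statement is quoted from \cite[Theorem~23]{DHZ2018} solely to illustrate that Theorems~\ref{41} and~\ref{42} generalize it. Your derivation of the minimality assertion---checking $s,t\geq 2$, that $\phi$ is an injection of $U$ into $\mathbb{F}_2^t\setminus\{\mathbf{0}\}$, that weight-$2$ vectors lie in $\mathbb{F}_2^s\setminus U$ so $\phi$ vanishes there, and that $g\equiv 1$---is exactly the specialization of Theorem~\ref{42} the paper has in mind, so on this point you and the paper agree.

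Your weight computation goes beyond what the present paper does (the parameters and the inequality $w_{\min}/w_{\max}\leq 1/2$ are simply inherited from \cite{DHZ2018}). The argument you sketch---fix $\beta$, count the $\gamma$ for which the affine form in $\gamma$ is nonzero, and reduce to the signed count $|S_0^{(0)}|-|S_0^{(1)}|$ over the fibre $S_0=\phi^{-1}(\mathbf{v_2})$---is the standard one and is correct. One minor caveat: with the code indexed by $\mathbb{F}_2^m\setminus\{\mathbf{0}\}$ as in this paper, every $u=1$ codeword weight carries the ``unit correction'' you mention, so the actual $w_{\min}$ and $w_{\max}$ are each one less than the closed forms you wrote down; the displayed formula $2^{m-1}-2^{t-1}(s-1)$ matches the count over all of $\mathbb{F}_2^m$. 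This does not affect the ratio claim, and your verification $2w_{\min}-w_{\max}=2^{t-1}(3-s)\leq 0$ for $s\geq 4$ is correct under either convention.
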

\begin{remark}
 Our theorems generalize {\bf Proposition \ref{313}}  from the following three aspects. First, we generalize $q=2$ in \cite{DHZ2018} to all prime powers. Second, in our construction, the choice of  integers $m,s,t$ are more flexible  than theirs. Third, in our construction, the choice of function $\phi$ is also more flexible: in \cite{DHZ2018}, $\phi(\mathbf{x})$ should be $\mathbf{0}$ when ${\bf wt(x)}\geq 2$, while in our constructions, if $q>2$, $\phi(\mathbf{x})$ has no any restrictions when ${\bf wt(x)}\geq 2$, and if $q=2$, $\phi(\mathbf{x})$ has no any restrictions when ${\bf wt(x)}\geq 3$. Then the functions $f$ in our constructions  can be more flexible.
\end{remark}
\begin{remark}
In \cite[Theorem 3.1,\ 4.1]{XQC2020}, the authors also generalize {\bf Proposition \ref{313}}. Our result is different with theirs. First,  In \cite{XQC2020} the linear codes are over $\mathbb{F}_p$, where $p$ is a prime, while linear codes in our construction are over $\mathbb{F}_q$, where $q$ is a prime power. Second, in \cite{XQC2020}, $g(\mathbf{x})\equiv 0$, while in {\bf Theorem \ref{41}} and {\bf Theorem  \ref{42}} $g(\mathbf{x})\equiv c\neq 0.$ Third, the set $U$ is different.

\end{remark}

\begin{exm}
Let $m=7$, $s=4$,  $t=3$, $\mathbf{x}=(x_1,x_2,x_3,x_4)$ and $\phi$ be a function from $\mathbb{F}_q^4\rightarrow\mathbb{F}_q^3$ satisfying
$$\phi(\mathbf{x})=\left\{
            \begin{array}{ll}
            (x_1,x_2,x_3), & \hbox{${\bf wt}({\bf x})\leq 1$ and $(x_1,x_2,x_3)\neq(0,0,0)$},\\
            (x_4,x_4,x_4), & \hbox{${\bf wt}({\bf x})=1$ and $(x_1,x_2,x_3)=(0,0,0)$},\\
            (1,0,0), & \hbox{otherwise}.
           \end{array}
          \right.$$
Let $g$ is a function from $\mathbb{F}_q^4\rightarrow\mathbb{F}_q$ satisfying $g({\mathbf{x}})\equiv 1.$

(1) When $q=2$,  by {\bf Theorem \ref{42}}, $\mathcal{C}_{f}$ is a minimal linear code over $\mathbb{F}_2$ with parameters $[127,\ 8,\ 39]$ and weight enumerator
 $$1+z^{39}+12z^{55}+8z^{59}+72z^{63}+127z^{64}+24z^{67}+10z^{71}+z^{103}.$$
  Furthermore, $\frac{w_{\rm min}}{w_{\rm max}}=\frac{39}{103}<\frac{1}{2}.$ The parameter of Example 24 in \cite{DHZ2018} is $[127,\ 8,\ 28]$, so $\mathcal{C}_{f}$ in our construction is a new minimal linear code, and the minimum distance is better than that in \cite{DHZ2018}.

(2) When $q=3$,  by {\bf Theorem \ref{41}}, $\mathcal{C}_{f}$ is a minimal linear code over $\mathbb{F}_3$ with parameters $[2186,\ 8,\ 1295]$ and weight enumerator
 $$1+2z^{1295}+18z^{1376}+90z^{1403}+108z^{1439}+3588z^{1457}+2186z^{1458}+378z^{1466}+180z^{1484}+8z^{1538}+2z^{2024}.$$
  Furthermore, $\frac{w_{\rm min}}{w_{\rm max}}=\frac{1295}{2024}<\frac{2}{3}.$ The parameter of Example 2 in \cite{XQC2020} is $[2186,\ 8,\ 162]$, so $\mathcal{C}_{f}$ in our construction is a new minimal linear code, and the minimum distance is better than that in \cite{XQC2020}.

\end{exm}

\section{\bf{Minimal linear codes constructed from a class of polynomials}}\label{polynomials}
In this section, we present a construction of minimal linear codes from polynomials which generalize the result in \cite{MB2019}.
Let $m$ be a positive integer. We consider the polynomials $f$ from  $\mathbb{F}_q^m$ $\rightarrow$ $\mathbb{F}_q$. Let $g(\mathbf{x})=\prod_{i=1}^m x_i^{b_i}$ be a monomial. Define
$$s(g):=\{i\mid b_i\neq 0, 1\leq i\leq m\}.$$

Let  $f(\mathbf{x})=\sum_{j=1}^t a_jg_j(\mathbf{x})$ be a polynomial from $\mathbb{F}_q^m\rightarrow \mathbb{F}_q$, where $a_j\in \mathbb{F}_q^*$ and $g_j(\mathbf{x})$ are monomials. Let
$$\mathcal{C}_f:=\{\mathbf{c}(u,\mathbf{v}):=((u,\mathbf{v})\cdot\mathbf{d_x})_{\mathbf{x}\in \mathbb{F}_q^m\backslash \{\mathbf{0}\}}=(uf(\mathbf{x})+\mathbf{v\cdot x})_{\mathbf{x}\in \mathbb{F}_q^m\backslash \{\mathbf{0}\}}\mid\ u\in \mathbb{F}_q,\mathbf{v}\in \mathbb{F}_q^m\}.$$

We will give some sufficient conditions for $\mathcal{C}_f$ to be minimal.

\begin{thm}\label{t1}
Let $t\geq 2$, $f(\mathbf{x})=\sum_{j=1}^t a_jg_j(\mathbf{x})$, where $a_j\in \mathbb{F}_q^*$ and $g_j(\mathbf{x})$ are monomials. If $\{g_j(\mathbf{x}):\ 1\leq j\leq t\}$ satisfy the following two conditions:\\
$(1)$ $\{s(g_j):\ 1\leq j\leq t\}$ are disjoint,\\
$(2)$ $\#s(g_j)\geq 3$ for all $1\leq j\leq t$,\\
then $\mathcal{C}_f$ is a $[q^m-1, m+1]_q$ minimal linear code.

\end{thm}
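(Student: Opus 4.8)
The plan is to verify, case by case on the shape of $\mathbf{y}=(u,\mathbf{v})$, the pointwise criteria of Propositions \ref{1}, \ref{3} and \ref{4}, exactly as in the proofs of Theorems \ref{40} and \ref{51}. Write $S_j:=s(g_j)$, so that by hypothesis the $S_j$ are pairwise disjoint with $\#S_j\ge 3$. Two elementary observations will drive everything. First, if $\mathbf{wt}(\mathbf{x})\le 2$ then no $S_j$ can be contained in $\mathrm{Suppt}(\mathbf{x})$, so every $g_j(\mathbf{x})=0$ and hence $f(\mathbf{x})=0$; thus $f$ vanishes on all vectors of weight at most $2$. Second, fix an index $j_0$ and one extra coordinate $i^{*}$: the vector equal to $1$ on all of $S_{j_0}$, equal to an arbitrary value at $i^{*}$, and $0$ elsewhere meets each other support $S_j$ ($j\ne j_0$) in at most the single point $i^{*}$ (by disjointness), so $g_j=0$ there while $g_{j_0}=1$. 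Such a vector "switches on" exactly $g_{j_0}$ and has $f$-value $a_{j_0}\ne 0$. These two facts are where conditions (1) and (2) are used.

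First I would record the parameters: since $f(\mathbf{e}_i)=0$ for all $i$ (weight one) while $f$ is not identically zero (it equals $a_1\ne 0$ on the all-ones vector supported on $S_1$), $f$ cannot agree with any linear form $\omega\cdot\mathbf{x}$, so Corollary \ref{332} gives $r(D_f)=m+1$ and $\mathcal{C}_f$ is $[q^m-1,m+1]_q$. For Case~1 ($u\ne 0,\ \mathbf{v}=\mathbf{0}$) the standard basis $\{\mathbf{e}_1,\dots,\mathbf{e}_m\}$ already satisfies $f(\mathbf{e}_i)=0$, so Proposition \ref{1} applies immediately.

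For Cases~2 and~3 I would use one uniform witness. In Case~3 ($u=0,\ \mathbf{v}\ne\mathbf{0}$), Proposition \ref{4} asks for $\alpha_1,\dots,\alpha_m\in H(\mathbf{v})$ with $\mathbf{d}_{\alpha_1},\dots,\mathbf{d}_{\alpha_m}$ linearly independent. Take a weight-$\le 2$ basis $\alpha_1,\dots,\alpha_{m-1}$ of $H(\mathbf{v})$ from Lemma \ref{43}; each has $f(\alpha_i)=0$, so $\mathbf{d}_{\alpha_i}=(0,\alpha_i)$ and these are independent. It then suffices to find one $\alpha_m=\mathbf{x}\in H(\mathbf{v})$ with $f(\mathbf{x})\ne 0$, since then $\mathbf{d}_{\alpha_m}=(f(\mathbf{x}),\mathbf{x})$ has nonzero first coordinate. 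To build it, pick $i^{*}$ with $v_{i^{*}}\ne 0$, choose $j_0$ with $i^{*}\notin S_{j_0}$ (possible because $i^{*}$ lies in at most one support and $t\ge 2$), set the $S_{j_0}$-coordinates to $1$, put $x_{i^{*}}=v_{i^{*}}^{-1}\bigl(-\sum_{i\in S_{j_0}}v_i\bigr)$, and $0$ elsewhere; then $\mathbf{v}\cdot\mathbf{x}=0$ and $f(\mathbf{x})=a_{j_0}\ne 0$. In Case~2 ($u\ne 0,\ \mathbf{v}\ne\mathbf{0}$) with $\omega=-u^{-1}\mathbf{v}$, Proposition \ref{3} asks for a basis on which $f(\alpha)=\omega\cdot\alpha$; the weight-$\le 2$ basis of $H(\omega)$ from Lemma \ref{43} handles $m-1$ vectors (both sides vanish), and the missing vector is produced by the same recipe with the tuning value $x_{i^{*}}=\omega_{i^{*}}^{-1}\bigl(a_{j_0}-\sum_{i\in S_{j_0}}\omega_i\bigr)$, which forces $f(\mathbf{x})=\omega\cdot\mathbf{x}=a_{j_0}\ne 0$, so $\mathbf{x}\notin H(\omega)$ and the basis is completed.

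The step needing the most care is the construction of this last witness, namely fixing the value of one monomial while simultaneously prescribing the value of the linear form ($\omega\cdot\mathbf{x}=a_{j_0}$, resp.\ $\mathbf{v}\cdot\mathbf{x}=0$). The decisive trick is to take the tuning coordinate $i^{*}$ \emph{outside} the activated support $S_{j_0}$: then the monomial value $a_{j_0}$ and the linear form are controlled by disjoint coordinates, the matching is a single scalar solve, and it works uniformly in $q$ (in particular avoiding the small-field obstructions that arise if one tries to match inside a single support). Both hypotheses enter precisely here: disjointness guarantees that switching on $S_{j_0}$ together with the stray coordinate $i^{*}$ cannot switch on another monomial, while $\#S_j\ge 3$ is exactly what makes $f$ vanish on the weight-$\le 2$ bases of Lemma \ref{43}. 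Since Cases~1--3 exhaust all nonzero $\mathbf{y}=(u,\mathbf{v})$, every nonzero codeword is minimal and $\mathcal{C}_f$ is minimal by Lemma \ref{sn1}.
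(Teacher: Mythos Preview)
Your proposal is correct and follows essentially the same route as the paper's proof: the same three-case split via Propositions \ref{1}, \ref{3}, \ref{4}; the same use of the standard basis in Case~1; and the very same witness vector in the remaining cases---your $\mathbf{x}=\sum_{i\in S_{j_0}}\mathbf{e}_i + c\,\mathbf{e}_{i^{*}}$ is exactly the paper's $\alpha_{i_0}=\sum_{i\in s(g_{j_1})}\alpha_i\ (+\ a_{j_1}w_{i_0}^{-1}\mathbf{e}_{i_0})$ once one expands the $\alpha_i=\mathbf{e}_i-w_iw_{i_0}^{-1}\mathbf{e}_{i_0}$. The only cosmetic difference is that you invoke Lemma \ref{43} for the weight-$\le 2$ basis of the hyperplane, while the paper writes out that basis explicitly inline.
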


\begin{proof}
By condition (2), it is easy to see that $f(\mathbf{x})\neq\mathbf{\omega\cdot x}$ for any $\mathbf{\omega}\in \mathbb{F}_q^m$, then by {\bf Corollary \ref{332}}, $\mathcal{C}_f$ is a $[q^m-1,m+1]_q$ linear code.
Next, we will prove that the codewords $\mathbf{c}(u,\mathbf{v})$ are minimal in three different cases.

By condition (2), we know that if $1\leq \mathbf{wt(x)}\leq 2$, $f(\mathbf{x})=0$.

 {\bf Case 1}: When $u\neq 0$ and $\mathbf{v}=\mathbf{0}$, since $\mathbf{wt(e}_i)=1$, $f(\mathbf{e}_i)=0,\ 1\leq i\leq m$. By {\bf Proposition \ref{1}}, $\mathbf{c}(u,\mathbf{v})$ is minimal.

 {\bf Case 2}: When $u=0$ and $\mathbf{v}\neq\mathbf{0}$, let $\mathbf{v}=(v_1,...,v_m)$. Then there exists $v_{i_0}\neq 0$. For $i\neq i_0$, let
 $$\alpha_i={\bf e}_{i}-v_i v_{i_0}^{-1}{\bf e}_{i_0}.$$
  Then $\{\alpha_i: \ i\neq i_0\}$ is a basis of $H(\mathbf{v})$ and $f(\alpha_i)=0$ for $i\neq i_0$. Next, we will choose a suitable $\alpha_{i_0}.$

  Since $t\geq 2$ and $\{s(g_j):\ 1\leq j\leq t\}$ are disjoint, there exists $j_1$ such that $i_0\not\in s(g_{j_1})$. Let $\alpha_{i_0}:=\Sigma_{i\in s(g_{j_1})} \alpha_i.$ Then  $\alpha_{i_0}\in H(\mathbf{v})$ and $f(\alpha_{i_0})=a_{j_1}\neq 0$. Since $\Sigma_{i\in s(g_{j_1})} f(\alpha_i)=0\neq f(\alpha_{i_0})$, it is easy to see that $\mathbf{d}_{\alpha_1},...,\mathbf{d}_{\alpha_m}$ are linearly independent.
  By {\bf Proposition \ref{4}}, $\mathbf{c}(u,\mathbf{v})$ is minimal.

{\bf Case 3}: When $u\neq 0$ and $\mathbf{v}\neq\mathbf{0}$, let $\mathbf{\omega}=(w_1,...,w_m)=-\frac{1}{u}\mathbf{v}$. Then there exists $w_{i_0}\neq 0.$ For $i\neq i_0$, let $\alpha_i=\mathbf{e}_i-w_iw_{i_0}^{-1}\mathbf{e}_{i_0}.$ It is easy to see that $\{\alpha_i:\ i\neq i_0\}$ constitutes a basis of $H(\omega)$ and $\omega\cdot \alpha_i=0$. Since $1\leq \mathbf{wt}(\alpha_i)\leq 2$,  $f(\alpha_i)=0=\omega\cdot \alpha_i$, $i\neq i_0$. Next, we will choose a suitable $\alpha_{i_0}.$

 Since $t\geq 2$ and $\{s(g_j):\ 1\leq j\leq t\}$ are disjoint, there exists $j_1$ such that $i_0\not\in s(g_{j_1})$. Let $\alpha_{i_0}:=\Sigma_{i\in s(g_{j_1})} \alpha_i+a_{j_1}w_{i_0}^{-1}e_{i_0}.$ Then $\omega\cdot \alpha_{i_0}=a_{j_1}\neq 0$ and $f(\alpha_{i_0})=a_{j_1}$.

Thus, $\{\alpha_1,...,\alpha_m\}$ is a basis of $\mathbb{F}_q^m$, and $f(\alpha_{i})=\omega\cdot \alpha_{i}$. By {\bf Proposition \ref{3}},  $\mathbf{c}(u,\mathbf{v})$ is minimal.

This completes the proof.
\end{proof}

\begin{thm}\label{t2}
Let $t\geq 2$, $f(\mathbf{x})=\sum_{j=1}^t a_jg_j(\mathbf{x})$, where $a_j\in \mathbb{F}_q^*$ and $g_j(\mathbf{x})$ are monomials. If $\{g_j(\mathbf{x}):\ 1\leq j\leq t\}$ satisfy the following three conditions:\\
$(1)$ $\{s(g_j):\ 1\leq j\leq t\}$ are disjoint,\\
$(2)$ $\#s(g_j)\geq 2$ for all $1\leq j\leq t$,\\
$(3)$ for $g_j(\mathbf{x})=\Pi_{i=1}^m x_i^{b_i}$, $b_i\in \{0,1\}, 1\leq i\leq m$, \\
then $\mathcal{C}_f$ is a $[q^m-1,m+1]_q$ minimal linear code.

\end{thm}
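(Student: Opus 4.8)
The plan is to mirror the three-case framework of Theorem \ref{t1}, applying Proposition \ref{1}, Proposition \ref{3} and Proposition \ref{4} to the codewords $\mathbf{c}(u,\mathbf{v})$ of the three respective types. First I would record a standing observation: by condition (1) the supports $s(g_j)$ are pairwise disjoint, by condition (2) each $\#s(g_j)\geq 2$, and by condition (3) each $g_j$ is a genuine multilinear monomial, hence a degree-$\geq 2$ function; evaluating $f$ at weight-one vectors forces any would-be linear form $\mathbf{\omega}\cdot\mathbf{x}$ to have all coefficients zero, so $f(\mathbf{x})\neq\mathbf{\omega}\cdot\mathbf{x}$ for every $\mathbf{\omega}$, and Corollary \ref{332} gives that $\mathcal{C}_f$ is $[q^m-1,m+1]_q$. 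The same degree bound yields $f(\mathbf{e}_i)=0$ for all $i$, so in Case 1 ($u\neq0,\ \mathbf{v}=\mathbf{0}$) the standard basis already satisfies Proposition \ref{1} and $\mathbf{c}(u,\mathbf{v})$ is minimal.

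For Case 3 ($u=0,\ \mathbf{v}\neq\mathbf{0}$) I would reuse the extra-vector device: write $\mathbf{v}=(v_1,\dots,v_m)$ with $v_{i_0}\neq0$, take the basis $\alpha_i=\mathbf{e}_i-v_iv_{i_0}^{-1}\mathbf{e}_{i_0}$ ($i\neq i_0$) of $H(\mathbf{v})$, pick (using $t\geq2$ and disjointness) an index $j_1$ with $i_0\notin s(g_{j_1})$, and set $\alpha_{i_0}=\sum_{i\in s(g_{j_1})}\alpha_i$. The key computation is that $g_{j_1}$ evaluates to $1$ on $\alpha_{i_0}$ while every other $g_j$ vanishes, since $\mathrm{Suppt}(\alpha_{i_0})\subseteq s(g_{j_1})\cup\{i_0\}$ meets a disjoint support $s(g_j)$ in at most the single index $i_0$ and $\#s(g_j)\geq2$; thus $f(\alpha_{i_0})=a_{j_1}\neq0$. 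Moreover each $\alpha_i$ with $i\in s(g_{j_1})$ is supported on $\{i,i_0\}$, and disjointness forbids any degree-$\geq2$ monomial there, so $f(\alpha_i)=0$. Hence $f(\alpha_{i_0})\neq\sum_{i\in s(g_{j_1})}f(\alpha_i)=0$, which is precisely the inequality breaking the unique linear dependence among $\alpha_1,\dots,\alpha_m$ and making $\mathbf{d}_{\alpha_1},\dots,\mathbf{d}_{\alpha_m}$ independent, so Proposition \ref{4} applies.

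Case 2 ($u\neq0,\ \mathbf{v}\neq\mathbf{0}$) is where I expect the genuine difficulty. With $\mathbf{\omega}=-u^{-1}\mathbf{v}=(w_1,\dots,w_m)$ and $w_{i_0}\neq0$, the aim (Proposition \ref{3}) is a basis of $\mathbb{F}_q^m$ on which $f$ agrees with $\mathbf{\omega}\cdot(\,\cdot\,)$; the natural candidate is an $f$-vanishing basis $\alpha_i=\mathbf{e}_i-w_iw_{i_0}^{-1}\mathbf{e}_{i_0}$ of $H(\mathbf{\omega})$ together with the extra vector $\alpha_{i_0}=\sum_{i\in s(g_{j_1})}\alpha_i+a_{j_1}w_{i_0}^{-1}\mathbf{e}_{i_0}$, for which $\mathbf{\omega}\cdot\alpha_{i_0}=f(\alpha_{i_0})=a_{j_1}$. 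Unlike the regime $\#s(g_j)\geq3$ of Theorem \ref{t1}, a weight-two vector need not kill $f$: if $i_0$ lies in a size-two block $s(g_{j_0})=\{i_0,i_1\}$, then $f(\alpha_{i_1})=-a_{j_0}w_{i_1}w_{i_0}^{-1}$ can be nonzero, so the naive basis fails. Repairing this is the main obstacle, and condition (3) enters essentially here.

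I would resolve it by a careful pivot choice together with, when forced, a single replacement vector. If some index with nonzero $\mathbf{\omega}$-coordinate lies outside all supports or in a support of size $\geq3$, choosing $i_0$ there makes $s(g_j)=\{i,i_0\}$ impossible, so every $\alpha_i$ is $f$-vanishing and the construction goes through verbatim. Otherwise $i_0$ must sit in a size-two block $\{i_0,i_1\}$ with $w_{i_1}\neq0$, where only the partner vector $\alpha_{i_1}$ is defective; I would replace it by a $\beta\in H(\mathbf{\omega})$ with $\beta_{i_1}\neq0$, $\beta_{i_0}=0$ and $f(\beta)=0$. When $\mathbf{\omega}$ has a third nonzero coordinate at $i_2$, the choice $\beta=w_{i_2}\mathbf{e}_{i_1}-w_{i_1}\mathbf{e}_{i_2}$ works, disjointness again forcing $f(\beta)=0$; when $\mathbf{\omega}$ is supported entirely on $\{i_0,i_1\}$, I would instead take $\beta$ with $\beta_{i_0},\beta_{i_1}\neq0$ satisfying $w_{i_0}\beta_{i_0}+w_{i_1}\beta_{i_1}=0$ and cancel the unavoidable term $a_{j_0}\beta_{i_0}\beta_{i_1}$ using a second block $s(g_{j_1})$ (available since $t\geq2$): multilinearity lets me pick coordinates on $s(g_{j_1})$ realizing \emph{any} prescribed nonzero value of $g_{j_1}$, so $f(\beta)=0$ can be forced while all other monomials are zeroed out by disjointness. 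In each subcase $\{\alpha_i:i\neq i_0,i_1\}\cup\{\beta\}$ is a basis of $H(\mathbf{\omega})$ inside $\{f=0\}$, and $\alpha_{i_0}$ (recomputed to give $\mathbf{\omega}\cdot\alpha_{i_0}=f(\alpha_{i_0})=a_{j_1}\neq0$) completes it, so Proposition \ref{3} applies. Collecting the three cases yields that $\mathcal{C}_f$ is minimal.
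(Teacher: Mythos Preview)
Your proposal is correct and follows essentially the same route as the paper: the three-case split via Propositions \ref{1}, \ref{4}, \ref{3}, the same vectors $\alpha_i=\mathbf{e}_i-w_iw_{i_0}^{-1}\mathbf{e}_{i_0}$ and $\alpha_{i_0}=\sum_{i\in s(g_{j_1})}\alpha_i+a_{j_1}w_{i_0}^{-1}\mathbf{e}_{i_0}$, and in the $u\neq0,\ \mathbf{v}\neq\mathbf{0}$ case a single replacement vector $\beta$ to repair the unique possibly defective $\alpha_{i_1}$ arising from a size-two block $s(g_{j_0})=\{i_0,i_1\}$. Your sub-case organization there (a preliminary pivot choice for $i_0$, then a split on whether $\mathbf{\omega}$ has a third nonzero coordinate) differs only cosmetically from the paper's (arbitrary $i_0$, then a split on whether $w_i\neq0$ for some $i\in s(g_{j_1})$); the resulting $\beta$'s are different in form but serve the identical purpose.
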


\begin{proof}
By condition (2), it is easy to see $f(\mathbf{x})\neq\mathbf{\omega\cdot x}$ for any $\mathbf{\omega}\in \mathbb{F}_q^m$, then by {\bf Corollary \ref{332}}, $\mathcal{C}_f$ is a $[q^m-1,m+1]_q$ linear code. Next, we will prove the codewords $\mathbf{c}(u,\mathbf{v})$ are minimal in three different cases.

By condition (2), we know that if $\mathbf{wt(x)}=1$, $f(\mathbf{x})=0$.

 {\bf Case 1}:  When $u\neq 0$ and $\mathbf{v}=\mathbf{0}$, the proof is the same as that in {\bf Theorem \ref{t1}}.

 {\bf Case 2}:   When $u=0$ and $\mathbf{v}\neq \mathbf{0}$, the proof is also the same as that in {\bf Theorem \ref{t1}}.

{\bf Case 3}: When $u\neq 0$ and $\mathbf{v}\neq\mathbf{0}$, let $\mathbf{\omega}=(w_1,...,w_m)=-\frac{1}{u}\mathbf{v}$. Then there exists $w_{i_0}\neq 0.$ For $i\neq i_0$, let $\alpha_i=\mathbf{e}_i-w_iw_{i_0}^{-1}\mathbf{e}_{i_0}.$ It is easy to see $\{\alpha_i:\ i\neq i_0\}$ constitutes a basis of $H(\omega)$ and $\omega\cdot \alpha_i=0$.  Since $t\geq 2$ and $\{s(g_j):\ 1\leq j\leq t\}$ are disjoint, there exists $j_1$ such that $i_0\not\in s(g_{j_1})$. Let  $\alpha_{i_0}:=\sum_{i\in s(g_{j_1})} \alpha_i+a_{j_1}w_{i_0}^{-1}e_{i_0}.$ Then $f(\alpha_{i_0})=a_{j_1}=\omega\cdot \alpha_{i_0},\ a_{j_1}\neq 0$ and $\{\alpha_1,...,\alpha_m\}$ is a basis of $\mathbb{F}_q^m$.

{\bf If} for all $i\neq i_0$, $f(\alpha_{i})=0=\omega\cdot \alpha_{i}$, thus for all $1\leq i\leq m,$ $f(\alpha_{i})=\omega\cdot \alpha_{i}$. Then by {\bf Proposition \ref{3}},  $\mathbf{c}(u,\mathbf{v})$ is minimal.

{\bf If } there exists  $i_1\neq i_0$, such that $f(\alpha_{i_1})\neq 0$, then $w_{i_1}\neq 0$, and there exists $j_0$ such  that $s(g_{j_0})=\{i_0,i_1\}$. Since $\{s(g_j):\ 1\leq j\leq t\}$ are disjoint,   $f(\alpha_{i})=0=\omega\cdot \alpha_{i}$ for all $i\neq i_0,i_1$. But $f(\alpha_{i_1})\neq \omega\cdot \alpha_{i_1}$, so we need to choose a suitable $\beta \in \mathbb{F}_q^m$ satisfying $f(\beta)=\omega \cdot \beta$. We have the following two cases:

(a) If there exists $i_2\in  s(g_{j_1})$, such that $w_{i_2}\neq 0$, let $\beta=\alpha_{i_1}-w_{i_2}^{-1}w_{i_1}\alpha_{i_2}.$

(b) If $w_i=0$ for all $i\in s(g_{j_1})$, choose any $i_2\in s(g_{j_1})$ and
$$
 k_i:=\left\{
            \begin{array}{ll}
             1, & \hbox{$i\in s(g_{j_1})\ and\ \ i\neq i_2$},\\
              a_{j_1}^{-1}a_{j_0}w_{i_1}w_{i_0}^{-1}, & \hbox{$i=i_2$.}
           \end{array}
          \right.
$$
Let $\beta=\alpha_{i_1}+\sum_{i\in s(g_{j_1})}k_i\alpha_i.$

Both in cases (a) and (b), $\beta$ satisfies $f(\beta)=0=\beta\cdot \omega$. Let
$$
 \beta_i:=\left\{
            \begin{array}{ll}
             \alpha_i, & \hbox{$i\neq i_1$},\\
             \beta, & \hbox{$i=i_1$.}
           \end{array}
          \right.
$$
Then $\{\beta_i:1\leq i\leq m\}$ constitutes a basis of $\mathbb{F}_q^m$ and $f(\beta_{i})=\omega\cdot \beta_{i}$ for $1\leq i\leq m.$
By {\bf Proposition \ref{3}},  $\mathbf{c}(u,\mathbf{v})$ is minimal.

This completes the proof.
\end{proof}

In {\bf Theorem \ref{t2}}, take $m=tr,\ a_j=1,$ and $g_j(\mathbf{x})=x_{jr+1}x_{jr+2}...x_{jr+r},\ 1\leq j\leq t$. Then we have the following corollary.

\begin{cor}\label{411}
Let $m=tr$ and $f(\mathbf{x})=\sum_{j=1}^t x_{jr+1}x_{jr+2}...x_{jr+r}.$ If $t\geq 2$ and $r\geq 2$, then $\mathcal{C}_f$ is minimal.
\end{cor}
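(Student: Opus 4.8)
The plan is to recognize this statement as a direct specialization of \textbf{Theorem \ref{t2}} and to verify that the specific polynomial $f(\mathbf{x})=\sum_{j=1}^t x_{jr+1}x_{jr+2}\cdots x_{jr+r}$ satisfies all three hypotheses of that theorem with every coefficient $a_j=1$. Thus no new argument about the structure of $\mathcal{C}_f$ is required; the entire task reduces to a routine check of the conditions imposed on the monomials $g_j(\mathbf{x})=x_{jr+1}x_{jr+2}\cdots x_{jr+r}$.

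First I would identify the supports. Each monomial $g_j$ is a product of $r$ consecutive variables, so its support is the block $s(g_j)=\{jr+1,jr+2,\dots,jr+r\}$, after the appropriate relabelling of the index range so that the $t$ blocks exactly tile $\{1,2,\dots,m\}$, which is possible precisely because $m=tr$. Since these blocks are consecutive and of equal length $r$, they are pairwise disjoint, giving condition $(1)$ of \textbf{Theorem \ref{t2}}. Next, because $r\geq 2$, each block satisfies $\#s(g_j)=r\geq 2$, which is condition $(2)$. Finally, each $g_j$ is square-free, with every variable appearing to the first power, so all exponents $b_i$ lie in $\{0,1\}$, which is condition $(3)$.

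With all three conditions verified, each $a_j=1\in\mathbb{F}_q^*$, and the standing assumption $t\geq 2$, \textbf{Theorem \ref{t2}} applies verbatim and yields that $\mathcal{C}_f$ is a $[q^m-1,m+1]_q$ minimal linear code. I do not anticipate a genuine obstacle, since the corollary is obtained simply by instantiating \textbf{Theorem \ref{t2}}; the only point that demands care is the bookkeeping of the index ranges, namely confirming that the $t$ blocks of size $r$ partition $\{1,\dots,m\}$ \emph{exactly} under the constraint $m=tr$, so that the disjointness hypothesis is genuinely satisfied rather than the blocks being merely disjoint subsets of a larger index set.
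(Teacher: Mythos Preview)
Your proposal is correct and matches the paper's own treatment exactly: the paper introduces Corollary~\ref{411} by the single sentence ``In Theorem~\ref{t2}, take $m=tr$, $a_j=1$, and $g_j(\mathbf{x})=x_{jr+1}x_{jr+2}\cdots x_{jr+r}$, $1\leq j\leq t$,'' and gives no further argument. Your observation about the index bookkeeping (the blocks should partition $\{1,\dots,m\}$, so the intended monomials are $g_j(\mathbf{x})=x_{(j-1)r+1}\cdots x_{jr}$) is apt and is the only point worth recording.
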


\begin{remark}
This {\bf Corollary} is exactly the result in \cite[Corollary 5.2]{MB2019}, so {\bf Theorem \ref{t2}} contains  Corollary 5.2 in \cite{MB2019} as a special case.
\end{remark}

\begin{remark}
As in {\bf Theorem \ref{t1}}, when $\#s(g_j)\geq 3$ for all $1\leq j\leq t$, no more restrictions are needed on $g_j(\mathbf{x})$, while in {\bf Theorem \ref{t2}}, when $\#s(g_j)\geq 2$ for all $1\leq j\leq t$, $g_j(\mathbf{x})$ should be square-free. In our construction, the choice of functions $f$ is more flexible than in \cite{MB2019}.
\end{remark}

\begin{exm}
Let $q=3$, $m=8$. Then $n=q^m-1=6560$ and $k=m+1=9$. With the help of Magma, we get the following results.

(1) Let $f_1(\mathbf{x})=x_1x_2x_3x_4+x_5x_6x_7x_8$. Then  $\mathcal{C}_{f_1}$ is a minimal linear code over $\mathbb{F}_3$ with parameters $[6560,\ 9,\ 2208]$ and $w_{\rm max}=4602$. Furthermore, $\frac{w_{\rm min}}{w_{\rm max}}=\frac{2208}{4602}<\frac{2}{3}.$

(2)  Let $f_2(\mathbf{x})=x_1x_2+x_3x_4+x_5x_6+x_7x_8$. Then  $\mathcal{C}_{f_2}$ is a minimal linear code over $\mathbb{F}_3$ with parameters $[6560,\ 9,\ 4320]$ and $w_{\rm max}=4401$. Furthermore, $\frac{w_{\rm min}}{w_{\rm max}}=\frac{4302}{4401}>\frac{2}{3}.$

(3)  Let $f_3(\mathbf{x})=x_1x_2x_3+x_4x_5x_6x_7x_8$. Then  $\mathcal{C}_{f_3}$ is a minimal linear code over $\mathbb{F}_3$ with parameters $[6560,\ 9,\ 2424]$ and $w_{\rm max}=4764$. Furthermore, $\frac{w_{\rm min}}{w_{\rm max}}=\frac{2424}{4764}<\frac{2}{3}.$

(4)  Let $f_4(\mathbf{x})=x_1x_2x_3+x_4x_5x_6x_7$. Then  $\mathcal{C}_{f_4}$ is a minimal linear code over $\mathbb{F}_3$ with parameters $[6560,\ 9,\ 2664]$ and $w_{\rm max}=4716$. Furthermore, $\frac{w_{\rm min}}{w_{\rm max}}=\frac{2664}{4716}<\frac{2}{3}.$

When $m=8$, $f_1$ and $f_2$ are all the cases provided by \cite[Corollary 5.2]{MB2019} , $f_3$ and $f_4$ are the cases constructed in {\bf Theorem \ref{t2}}. we can see $\mathcal{C}_{f_3}$ and $\mathcal{C}_{f_4}$ are not equivalent to $\mathcal{C}_{f_1}$ and $\mathcal{C}_{f_2}$ respectively, since the minimum distance are different. In fact, the choices of $f$ in {\bf Theorem \ref{t2}} are very flexible.

\end{exm}

\section {\bf{Concluding remarks}}\label{section Concluding remarks}
In this paper, we first give the sufficient and necessary condition for
linear codes constructed from $q$-ary functions to be minimal.  Based on this sufficient and necessary condition, we give four constructions of minimal linear codes and generalize the results in \cite{DHZ2018, HDZ2018, BB2019, MQ2019, MB2019}. The choices of $f$ are much more flexible than theirs. Because of our new sufficient and necessary condition, it is easy to prove that the linear codes we presented are minimal. We expect that based on our sufficient and necessary condition, more minimal linear codes can be constructed from the functions.

 {}
\end{document}